\def\ps@pprintTitle{%
   \let\@oddhead\@empty
   \let\@evenhead\@empty
   \def\@oddfoot{\reset@font\hfil\thepage\hfil}
   \let\@evenfoot\@oddfoot
}
\newtheorem{theorem}{Theorem}
\newtheorem{lemma}[theorem]{Lemma}
\newtheorem{corollary}[theorem]{Corollary}
\newtheorem{open}[theorem]{Open Problem}
\newtheorem{example}[theorem]{Example}
\newcommand{\ord}{{\mathrm{ord}}}
\newcommand{\rank}{{\mathrm{rank}}}
\newcommand{\lcm}{{\mathrm{lcm}}}
\newcommand{\tr}{{\mathrm{Tr}}}
\newcommand{\gf}{{\mathrm{GF}}}
\newcommand{\support}{{\mathrm{suppt}}}
\newcommand{\PAut}{{\mathrm{PAut}}}
\newcommand{\MAut}{{\mathrm{MAut}}}
\newcommand{\GAut}{{\mathrm{Aut}}}
\newcommand{\Sym}{{\mathrm{Sym}}}
\newcommand{\wt}{{\mathtt{wt}}}
\newcommand{\Z}{\mathbb{{Z}}}
\newcommand{\m}{\mathbb{M}}
\newcommand{\cP}{{\mathcal{P}}}
\newcommand{\cB}{{\mathcal{B}}}
\newcommand{\C}{{\mathsf{C}}}
\newcommand{\bc}{{\mathbf{c}}}
\newcommand{\bzero}{{\mathbf{0}}}
\newcommand{\bD}{{\mathbb{D}}}
\newcommand{\PGL}{{\mathrm{PGL}}}
\newcommand{\PGaL}{{\mathrm{P \Gamma L}}}
\begin{document}

\begin{frontmatter}




\title{An infinite family of linear codes supporting  $4$-designs}

\tnotetext[fn1]{
C. Tang was supported by The National Natural Science Foundation of China (Grant No.
11871058) and China West Normal University (14E013, CXTD2014-4 and the Meritocracy Research
Funds).
C. Ding's research was supported by the Hong Kong Research Grants Council,
Proj. No. 16300418. 
}

\author[cmt]{Chunming Tang}
\ead{tangchunmingmath@163.com} 

\author[cding]{Cunsheng Ding}
\ead{cding@ust.hk}

\address[cmt]{School of Mathematics and Information, China West Normal University, Nanchong, Sichuan,  637002, China}
\address[cding]{Department of Computer Science and Engineering, The Hong Kong University of Science and Technology, Clear Water Bay, Kowloon, Hong Kong, China} 




\begin{abstract} 
The first linear code supporting a $4$-design was the $[11, 6, 5]$ ternary Golay code 
discovered in 1949 by Golay. In the past 71 years, sporadic linear codes 
holding $4$-designs or $5$-designs were discovered and many infinite families of linear codes supporting $3$-designs were constructed. 
However, the question as to whether there is an infinite family of linear codes holding an infinite 
family of $t$-designs for $t\geq 4$ remains open for 71 years. This paper settles this long-standing problem by 
presenting an infinite family of BCH codes of length $2^{2m+1}+1$ over $\gf(2^{2m+1})$ holding an infinite family of $4$-$(2^{2m+1}+1, 6, 2^{2m}-4)$ designs. Moreover,
 an infinite family of linear codes holding the spherical design $S(3, 5, 4^m+1)$ is presented. 
      
\end{abstract}

\begin{keyword}
 BCH code \sep Cyclic code  \sep linear code \sep $t$-design, elementary symmetric polynomial.

\MSC  05B05 \sep 51E10 \sep 94B15

\end{keyword}

\end{frontmatter}


\section{Introduction}

Let $\cP$ be a set of $v \ge 1$ elements, and let $\cB$ be a set of $k$-subsets of $\cP$, where $k$ is
a positive integer with $1 \leq k \leq v$. Let $t$ be a positive integer with $t \leq k$. The incidence structure 
$\bD = (\cP, \cB)$ is called a $t$-$(v, k, \lambda)$ {\em design\index{design}}, or simply {\em $t$-design\index{$t$-design}}, if every $t$-subset of $\cP$ is contained in exactly $\lambda$ elements of
$\cB$. The elements of $\cP$ are called points, and those of $\cB$ are referred to as blocks. 
The set $\cB$ is called the block set. 
We usually use $b$ to denote the number of blocks in $\cB$. 
Let $\binom{\cP}{k}$ denote the set of all $k$-subsets of $\cP$. Then $\left (\cP, \binom{\cP}{k} \right )$ is a $k$-$(v, k, 1)$ design, 
which is called a \emph{complete design}. 
  A $t$-design is called {\em simple\index{simple}} if $\cB$ does not contain
any repeated blocks.
In this paper, we consider only simple $t$-designs with $v > k > t$.
A $t$-$(v,k,\lambda)$ design is referred to as a
{\em Steiner system\index{Steiner system}} if $t \geq 2$ and $\lambda=1$,
and is denoted by $S(t,k, v)$. The parameters of a $t$-$(\nu,k, \lambda )$ design satisfy:
\begin{align*}
\binom{\nu}{t} \lambda =\binom{k}{t} b.
\end{align*}

A $t$-$(v, k, \lambda)$ design is also a $s$-$(v, k, \lambda_s)$ design with 
\begin{eqnarray}\label{eqn-lambdas}
\lambda_s=\lambda \binom{v-s}{t-s}/\binom{k-s}{t-s}
\end{eqnarray} 
for all $s$ with $0 \leq s \leq t$.

Let $\C$ be a $[v, \kappa, d]$ linear code over $\gf(q)$. Let $A_i$ denote the
number of codewords with Hamming weight $i$ in $\C$, where $0 \leq i \leq v$. The sequence
$(A_0, A_1, \cdots, A_{v})$ is
called the \textit{weight distribution} of $\C$, and $\sum_{i=0}^v A_iz^i$ is referred to as
the \textit{weight enumerator} of $\C$. In this paper, $\C^\perp$ denotes the dual code 
of $\C$, $d^\perp$ denotes the minimum distance of $\C^\perp$, and $(A_0^\perp, A_1^\perp, \cdots, A_{v}^\perp)$ 
denotes the weight distribution of $\C^\perp$.  

A  $[v, \kappa, d]$ linear code over $\gf(q)$ is said to be distance-optimal if there is no $[v, \kappa, d']$ over 
$\gf(q)$ with $d'>d$. A  $[v, \kappa, d]$ linear code over $\gf(q)$ is said to be dimension-optimal if there is no 
$[v, \kappa', d]$ over $\gf(q)$ with $\kappa' > \kappa$.  A  $[v, \kappa, d]$ linear code over $\gf(q)$ is said 
to be length-optimal if there is no $[v', \kappa, d]$ over $\gf(q)$ with $v'<v$. A linear code is said to be optimal 
if it is distance-optimal, dimension-optimal and length-optimal.  

A coding-theoretic construction of $t$-designs is the following. 
For each $k$ with $A_k \neq 0$,  let $\cB_k(\C)$ denote
the set of the supports of all codewords with Hamming weight $k$ in $\C$, where the coordinates of a codeword
are indexed by $(p_1, \ldots, p_v)$. Let $\cP(\C)=\{p_1, \ldots, p_v\}$.  The pair $(\cP, \cB_k(\C))$
may be a $t$-$(v, k, \lambda)$ design for some positive integer $\lambda$, which is called a
\emph{support design} of the code, and is denoted by $\bD_k(\C)$. In such a case, we say that the code $\C$ holds a $t$-$(v, k, \lambda)$
design or the codewords of weight $k$ in $\C$ support a $t$-$(v, k, \lambda)$
design. 

The following theorem, developed by Assumus and Mattson, shows that the pair $(\cP(\C), \cB_k(\C))$ defined by 
a linear code is a $t$-design under certain conditions \cite{AM69}.

\begin{theorem}[Assmus-Mattson Theorem]\label{thm-designAMtheorem}
Let $\C$ be a $[v,k,d]$ code over $\gf(q)$. Let $d^\perp$ denote the minimum distance of $\C^\perp$. 
Let $w$ be the largest integer satisfying $w \leq v$ and 
$$ 
w-\left\lfloor  \frac{w+q-2}{q-1} \right\rfloor <d. 
$$ 
Define $w^\perp$ analogously using $d^\perp$. Let $(A_i)_{i=0}^v$ and $(A_i^\perp)_{i=0}^v$ denote 
the weight distribution of $\C$ and $\C^\perp$, respectively. Fix a positive integer $t$ with $t<d$, and 
let $s$ be the number of $i$ with $A_i^\perp \neq 0$ for $1 \leq i \leq v-t$. Suppose $s \leq d-t$. Then 
\begin{itemize}
\item the codewords of weight $i$ in $\C$ hold a $t$-design provided $A_i \neq 0$ and $d \leq i \leq w$, and 
\item the codewords of weight $i$ in $\C^\perp$ hold a $t$-design provided $A_i^\perp \neq 0$ and 
         $d^\perp \leq i \leq \min\{v-t, w^\perp\}$. 
\end{itemize}
\end{theorem}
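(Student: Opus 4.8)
Following \cite{AM69}, the plan is to split the argument into a combinatorial reduction and a weight-enumerator rigidity statement, the latter being where the hypotheses $t<d$ and $s\le d-t$ are consumed.

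\textbf{Stage 1: reduction to shortened codes.} I would invoke the classical equivalence that a family $\cB$ of $k$-subsets of a $v$-set $\cP$ is a $t$-$(v,k,\lambda)$ design precisely when, for every $t$-subset $T\subseteq\cP$, the number of members of $\cB$ disjoint from $T$ is a constant not depending on $T$. One direction is inclusion--exclusion together with \eqref{eqn-lambdas} (the count of blocks avoiding $T$ equals $\sum_{s=0}^{t}(-1)^{s}\binom{t}{s}\lambda_{s}$); for the converse, ``$B$ disjoint from $T$'' is the same as ``$\cP\setminus B\supseteq T$'', so the constancy says the block-complements form a $t$-design, and the complement of a $t$-design is a $t$-design. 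Apply this with $\cB=\cB_{i}(\C)$. Since the number of weight-$i$ codewords of $\C$ whose support avoids a given $t$-set $T$ equals $A_{i}(\C_{T})$, where $\C_{T}$ is $\C$ shortened at the $t$ coordinates of $T$, everything reduces to: the weight distribution $\bigl(A_{i}(\C_{T})\bigr)_{i}$ is independent of $T$ in the relevant weight range; the dual statement then follows verbatim with $\C^{\perp}$ in place of $\C$. Two points need care: for $q>2$ the $q-1$ scalar multiples of a codeword share a support, so the number of supports is $A_{i}(\C_{T})/(q-1)$ and dividing by a fixed constant is harmless; and the range restriction $d\le i\le w$ (vacuous when $q=2$, where $w=v$) is exactly what makes the passage from codewords to their supports faithful enough that $T$-invariance of $A_{i}(\C_{T})$ yields an actual $t$-design --- this is the one place where the definition of $w$ enters, and it must be checked carefully.

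\textbf{Stage 2: rigidity of the shortened weight enumerator.} Fix $T$. First one checks $\dim\C_{T}=k-t$ (each evaluation condition lowers the dimension by at most one, and independence of the $t$ conditions uses $t<d$ once a codeword of $\C^{\perp}$ of weight at most $t$ supported inside $T$ is excluded). The dual of $\C_{T}$ is the code obtained from $\C^{\perp}$ by deleting the $t$ coordinates of $T$, so by the MacWilliams identity $W_{\C_{T}}(x,y)$ is a fixed invertible linear transform of the weight enumerator of that punctured dual. I would then show that $W_{\C_{T}}$ is pinned down by $T$-independent data. On one side, $d(\C_{T})\ge d>t$ forces the coefficients of $W_{\C_{T}}$ at weights $0,1,\dots,d-1$ to be $1,0,\dots,0$; equivalently, the first $d$ Pless power-moment identities for $\C_{T}$ determine the low-weight coefficients of the punctured dual from $v,k,q,t$ alone, the governing system being triangular. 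On the other side, the hypothesis $s\le d-t$ limits how many nonzero weights are available (using also that shortening $\C^{\perp}$ at $T$ cannot create new weights, so $(\C^{\perp})_{T}$ has at most $s$ nonzero weights), which bounds the number of free coefficients in the relevant weight enumerator --- and hence, via MacWilliams, in $W_{\C_{T}}$. Matching the two counts, the number of free coefficients of $W_{\C_{T}}$ (those not forced to vanish) against the number of independent MacWilliams/Pless constraints, the inequality $s\le d-t$ is exactly calibrated so that the linear system is square and of full rank. Consequently $W_{\C_{T}}$ equals one explicit polynomial in $v,k,q,t$, independent of $T$, which is what Stage 1 required. Interchanging the roles of $\C$ and $\C^{\perp}$ and using $w^{\perp}$ together with the range $d^{\perp}\le i\le\min\{v-t,w^{\perp}\}$ yields the second conclusion.

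\textbf{Where the difficulty is.} Stage 1 is routine. The real work is the bookkeeping in Stage 2: verifying that, after the forced vanishing is imposed, the count of genuinely free coefficients of $W_{\C_{T}}$ matches the count of independent constraints coming from MacWilliams and from the bound on the nonzero weights of the (punctured or shortened) dual, with $s\le d-t$ the sharp threshold for unique solvability. A second, subtler obstacle is the non-binary case: one must show that $T$-invariance of $A_{i}(\C_{T})$ really produces a $t$-design on supports, and this is where the integers $w$ and $w^{\perp}$ --- which at first look opaque --- become indispensable, since when $q>2$ a support no longer determines a codeword and a precise weight cutoff is needed for the correspondence to behave.
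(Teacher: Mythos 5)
The paper does not prove this theorem: it is quoted from Assmus--Mattson \cite{AM69} (see also \cite{HP03}) and used as a black box, so there is no in-paper argument to compare yours against. Your two-stage architecture --- reduce the design property to $T$-independence of a shortened weight distribution, then pin that distribution down by a MacWilliams/power-moment rigidity argument calibrated by $s\le d-t$, with $w$ controlling the passage from codewords to supports --- is indeed the standard proof strategy, and Stage 1, including the remark about $w$ and the division by $q-1$, is sound.

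There is, however, a concrete gap in Stage 2: you run the rigidity argument on the wrong dual pair. MacWilliams pairs the shortened code $\C_T$ with the \emph{punctured} dual $(\C^\perp)^T$, and that is the pair you set up; but the code with at most $s$ nonzero weights is the \emph{shortened} dual $(\C^\perp)_T$, whose MacWilliams partner is the punctured code $\C^T$, not $\C_T$. Puncturing $\C^\perp$ at $T$ can turn a single weight $i$ into any of the weights $i-j$ with $0\le j\le t$, so $(\C^\perp)^T$ need not have few nonzero weights, and your count of free coefficients versus constraints for $W_{\C_T}$ does not close. The argument must be run on the pair $\bigl((\C^\perp)_T,\ \C^T\bigr)$: the shortened dual has at most $s$ nonzero weights, all lying in a $T$-independent set, while its dual $\C^T$ has minimum distance at least $d-t\ge s$ and dimension $k$ --- this is what $t<d$ actually buys; your claim $\dim\C_T=k-t$ would require $t<d^\perp$, which is neither a hypothesis nor needed. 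The resulting Krawtchouk/Vandermonde system then determines $W_{(\C^\perp)_T}$, hence $W_{\C^T}$, independently of $T$; this yields the $\C^\perp$ conclusion directly, and the $\C$ conclusion only after an additional inclusion--exclusion over the subsets $T'\subseteq T$ (relating the number of weight-$i$ codewords of $\C$ meeting $T$ in exactly $j$ positions to the distributions of the $\C^{T'}$), a step your sketch omits.
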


The Assmus-Mattson Theorem is a very useful tool in constructing $t$-designs from linear codes 
(see, for example, \cite{DingLi16}, \cite{Dingbook18}). 
 A generalized Assmus-Mattson theorem is developed in \cite{TDX19}.
Another sufficient condition for the 
incidence structure $(\cP, \cB_k)$ to be a $t$-design is via the automorphism group of the 
code $\C$.

The set of coordinate permutations that map a code $\C$ to itself forms a group, which is referred to as
the \emph{permutation automorphism group\index{permutation automorphism group of codes}} of $\C$
and denoted by $\PAut(\C)$. If $\C$ is a code of length $n$, then $\PAut(\C)$ is a subgroup of the
\emph{symmetric group\index{symmetric group}} $\Sym_n$.

A \emph{monomial matrix\index{monomial matrix}} over $\gf(q)$ is a square matrix having exactly one
nonzero element of $\gf(q)$  in each row and column. A monomial matrix $M$ can be written either in
the form $DP$ or the form $PD_1$, where $D$ and $D_1$ are diagonal matrices and $P$ is a permutation
matrix.

The set of monomial matrices that map $\C$ to itself forms the group $\MAut(\C)$,  which is called the
\emph{monomial automorphism group\index{monomial automorphism group}} of $\C$. Clearly, we have
$$
\PAut(\C) \subseteq \MAut(\C).
$$

The \textit{automorphism group}\index{automorphism group} of $\C$, denoted by $\GAut(\C)$, is the set
of maps of the form $M\gamma$,
where $M$ is a monomial matrix and $\gamma$ is a field automorphism, that map $\C$ to itself. In the binary
case, $\PAut(\C)$,  $\MAut(\C)$ and $\GAut(\C)$ are the same. If $q$ is a prime, $\MAut(\C)$ and
$\GAut(\C)$ are identical. In general, we have
$$
\PAut(\C) \subseteq \MAut(\C) \subseteq \GAut(\C).
$$

By definition, every element in $\GAut(\C)$ is of the form $DP\gamma$, where $D$ is a diagonal matrix,
$P$ is a permutation matrix, and $\gamma$ is an automorphism of $\gf(q)$.
The automorphism group $\GAut(\C)$ is said to be $t$-transitive if for every pair of $t$-element ordered
sets of coordinates, there is an element $DP\gamma$ of the automorphism group $\GAut(\C)$ such that its
permutation part $P$ sends the first set to the second set. The automorphism group $\GAut(\C)$ is said to be $t$-homogeneous if for every pair of $t$-element 
sets of coordinates, there is an element $DP\gamma$ of the automorphism group $\GAut(\C)$ such that its
permutation part $P$ sends the first set to the second set.

The next theorem gives a 
sufficient condition for a linear code to hold $t$-designs \cite[p. 308]{HP03}.

\begin{theorem}\label{thm-designCodeAutm}
Let $\C$ be a linear code of length $n$ over $\gf(q)$ where $\GAut(\C)$ is $t$-transitive 
or $t$-homogeneous. Then the codewords of any weight $i \geq t$ of $\C$ hold a $t$-design.
\end{theorem}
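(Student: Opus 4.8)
The plan is to translate the statement into a group action on coordinate positions and then run the standard double-counting argument that produces a design.

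First I would unpack how an automorphism acts on codewords. Any $\sigma \in \GAut(\C)$ has the form $\sigma = DP\gamma$ with $D$ a diagonal matrix having nonzero diagonal entries, $P$ a permutation matrix, and $\gamma$ an automorphism of $\gf(q)$. Applying $\gamma$ coordinatewise to a codeword $\bc$ does not change its support, since $\gamma$ is a bijection of $\gf(q)$ fixing only $0$; multiplying by $D$ rescales coordinates by nonzero scalars, again leaving the support unchanged; and $P$ merely permutes coordinates. Hence $\sigma$ preserves Hamming weight and $\support(\sigma(\bc)) = P(\support(\bc))$, where $P$ is read as a permutation of $\cP(\C) = \{p_1, \dots, p_n\}$. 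In particular, for every $i$ with $A_i \neq 0$, the permutation part $P$ of every $\sigma \in \GAut(\C)$ maps the block set $\cB_i(\C)$ into itself; since $P$ is a bijection of the finite set $\cP(\C)$ it is injective on $i$-subsets, so it restricts to a bijection of $\cB_i(\C)$ onto itself.

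Next I would dispose of the $t$-transitive case by observing that it implies the $t$-homogeneous one: if for every pair of ordered $t$-tuples of distinct points some automorphism has permutation part carrying one tuple to the other, then the same holds for unordered $t$-subsets (just choose orderings). So assume $\GAut(\C)$ is $t$-homogeneous and fix a weight $i$ with $i \ge t$ and $A_i \neq 0$ (there is nothing to prove otherwise). For a $t$-subset $T \subseteq \cP(\C)$, set $\lambda(T) = |\{\, B \in \cB_i(\C) : T \subseteq B \,\}|$. For any $\sigma \in \GAut(\C)$ with permutation part $P$, the bijection $B \mapsto P(B)$ of $\cB_i(\C)$ satisfies $T \subseteq B \iff P(T) \subseteq P(B)$, so $\lambda(P(T)) = \lambda(T)$.

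Finally, given two $t$-subsets $T_1, T_2 \subseteq \cP(\C)$, $t$-homogeneity supplies $\sigma \in \GAut(\C)$ whose permutation part $P$ satisfies $P(T_1) = T_2$, whence $\lambda(T_1) = \lambda(T_2)$; thus $\lambda(T)$ equals a constant $\lambda$ independent of $T$. Since $A_i \neq 0$ there is at least one block of size $i \ge t$, forcing $\lambda \ge 1$, so $(\cP(\C), \cB_i(\C))$ is a $t$-$(n, i, \lambda)$ design, as claimed. I do not anticipate a serious obstacle: the argument is a routine group action plus double-counting, and the only place needing care is the first step, namely confirming that the monomial scaling $D$ and the field automorphism $\gamma$ are genuinely invisible at the level of supports, so that the support of an image codeword is precisely the permutation image of the original support.
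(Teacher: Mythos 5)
Your proof is correct and is exactly the standard argument (the one in Huffman--Pless, which the paper cites for this theorem without reproducing a proof): the scaling $D$ and field automorphism $\gamma$ are invisible on supports, so each automorphism's permutation part permutes the weight-$i$ blocks while preserving containment of $t$-subsets, and $t$-homogeneity (implied by $t$-transitivity) forces $\lambda(T)$ to be constant and positive. No gaps.
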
 

So far many infinite families of $t$-designs with $t=2, 3$ have been constructed from this coding-theoretic 
approach. However, no infinite family of $4$-designs has been produced with this approach, 
though sporadic $t$-designs for $t=4,5$ have been obtained from linear codes. 
The first linear code supporting $t$-design with $t\ge 4$ was the $[11, 6, 5]$ ternary Golay code 
discovered in 1949 by Golay \cite{Golay49}. This ternary code holds $4$-designs, and its extended code holds a Steiner system 
$S(5, 6, 12)$ having the largest strength known. In the past 71 years, sporadic linear codes 
holding $4$-designs or $5$-designs were discovered and many infinite families of linear codes supporting $3$-designs were 
constructed. However, the question as to whether there is an infinite family of liner codes holding an infinite 
family of $t$-designs for $t\geq 4$ remains open for 71 years. 
This paper settles this long-standing problem by presenting an infinite family of near MDS codes over $\gf(2^{2m+1})$ holding an infinite family of $4$-$(2^{2m+1}+1, 6, 2^{2m}-4)$ designs. 
In addition, we present an infinite family of linear codes holding the spherical design 
$S(3, 5, 1+4^m)$.

\section{Cyclic codes, BCH codes, AMDS codes and NMDS codes} 

\subsection{Cyclic codes and BCH codes}

An $[n,k, d]$ code $\C$ over $\gf(q)$ is {\em cyclic} if 
$(c_0,c_1, \cdots, c_{n-1}) \in \C$ implies $(c_{n-1}, c_0, c_1, \cdots, c_{n-2}) 
\in \C$.  
By identifying any vector $(c_0,c_1, \cdots, c_{n-1}) \in \gf(q)^n$ 
with  
$$ 
c_0+c_1x+c_2x^2+ \cdots + c_{n-1}x^{n-1} \in \gf(q)[x]/(x^n-1), 
$$
any code $\C$ of length $n$ over $\gf(q)$ corresponds to a subset of the quotient ring 
$\gf(q)[x]/(x^n-1)$. 
A linear code $\C$ is cyclic if and only if the corresponding subset in $\gf(q)[x]/(x^n-1)$ 
is an ideal of the ring $\gf(q)[x]/(x^n-1)$. 

Note that every ideal of $\gf(q)[x]/(x^n-1)$ is principal. Let $\C=\langle g(x) \rangle$ be a 
cyclic code, where $g(x)$ is monic and has the smallest degree among all the 
generators of $\C$. Then $g(x)$ is unique and called the {\em generator polynomial,} 
and $h(x)=(x^n-1)/g(x)$ is referred to as the {\em parity-check} polynomial of $\C$. 

Let $n$ be a positive integer and 
let $\Z_n$ denote  the set $\{0,1,2, \cdots, n-1\}$.  Let $s$ be an integer with $0 \leq s <n$. The \emph{$q$-cyclotomic coset of $s$ modulo $n$\index{$q$-cyclotomic coset modulo $n$}} is defined by 
$$ 
C_s=\{s, sq, sq^2, \cdots, sq^{\ell_s-1}\} \bmod n \subseteq \Z_n,  
$$
where $\ell_s$ is the smallest positive integer such that $s \equiv s q^{\ell_s} \pmod{n}$, and is the size of the 
$q$-cyclotomic coset. The smallest integer in $C_s$ is called the \emph{coset leader\index{coset leader}} of $C_s$. 
Let $\Gamma_{(n,q)}$ be the set of all the coset leaders. We have then $C_s \cap C_t = \emptyset$ for any two 
distinct elements $s$ and $t$ in  $\Gamma_{(n,q)}$, and  
\begin{eqnarray}\label{eqn-cosetPP}
\bigcup_{s \in  \Gamma_{(n,q)} } C_s = \Z_n. 
\end{eqnarray}
Hence, the distinct $q$-cyclotomic cosets modulo $n$ partition $\Z_n$. 

Let $m=\ord_{n}(q)$ be the order of $q$ modulo $n$, and let $\alpha$ be a generator of $\gf(q^m)^*$. Put $\beta=\alpha^{(q^m-1)/n}$. 
Then $\beta$ is a primitive $n$-th root of unity in $\gf(q^m)$. The minimal polynomial $\m_{\beta^s}(x)$ 
of $\beta^s$ over $\gf(q)$ is the monic polynomial of the smallest degree over $\gf(q)$ with $\beta^s$ 
as a root.  It is straightforward to see that this polynomial is given by 
\begin{eqnarray}
\m_{\beta^s}(x)=\prod_{i \in C_s} (x-\beta^i) \in \gf(q)[x], 
\end{eqnarray} 
which is irreducible over $\gf(q)$. It then follows from (\ref{eqn-cosetPP}) that 
\begin{eqnarray}\label{eqn-canonicalfact}
x^n-1=\prod_{s \in  \Gamma_{(n,q)}} \m_{\beta^s}(x)
\end{eqnarray}
which is the factorization of $x^n-1$ into irreducible factors over $\gf(q)$. This canonical factorization of $x^n-1$ 
over $\gf(q)$ is crucial for the study of cyclic codes.

Let $\delta$ be an integer with $2 \leq \delta \leq n$ and let $h$ be an integer.  
A \emph{BCH code\index{BCH codes}} over $\gf(q)$ 
with length $n$ and \emph{designed distance} $\delta$, denoted by $\C_{(q,n,\delta,h)}$, is a cyclic code with 
generator polynomial 
\begin{eqnarray}\label{eqn-BCHdefiningSet}
g_{(q,n,\delta,h)}=\lcm(\m_{\beta^h}(x), \m_{\beta^{h+1}}(x), \cdots, \m_{\beta^{h+\delta-2}}(x)) 
\end{eqnarray}
where the least common multiple is computed over $\gf(q)$.

It may happen that $\C_{(q,n,\delta_1,h)}$ and $\C_{(q,n,\delta_2,h)}$ are identical for two distinct 
$\delta_1$ and $\delta_2$. The 
maximum designed distance of a BCH code is also called the \emph{Bose distance\index{Bose distance}}. 

When $h=1$, the code $\C_{(q,n,\delta,h)}$ with the generator polynomial in (\ref{eqn-BCHdefiningSet}) is called a \emph{narrow-sense\index{narrow sense}} BCH code. If $n=q^m-1$, then $\C_{(q,n,\delta,h)}$ is referred to as a \emph{primitive\index{primitive BCH}} BCH code. 

BCH codes are a subclass of cyclic codes with interesting properties. In many cases BCH codes are the best linear codes. 
For example, among all binary cyclic codes of odd length $n$ with $n \leq 125$ the best cyclic code is always a BCH code 
except for two special cases \cite{Dingbook15}. Reed-Solomon codes are also BCH codes and are widely used in communication 
devices and consumer electronics. In the past ten years, a lot of progress on the study of BCH codes has been made 
(see, for example, \cite{LWL19,LiSIAM,LLFLR,SYW,YLLY}).  

It is well known that the extended code $\overline{\C_{(q,q^m-1,\delta,1)}}$ of the narrow-sense primitive BCH code 
$\C_{(q,q^m-1,\delta,1)}$ holds $2$-designs, as the permutation automorphism group of the extended code contains 
the general affine group as a subgroup (see, for example, 
\cite{DingZhouConf17} and \cite[Chapter 8]{Dingbook18}). 
However, It is very rare that an infinite family of 
cyclic codes hold an infinite family of $3$-designs. 
In this paper, we will present an infinite family of BCH codes 
holding an infinite family of $4$-designs, which makes a breakthrough in 71 years.     

\subsection{AMDS codes and NMDS codes}

An $[n, k, n-k+1]$ linear code is called an MDS code. 
An $[n, k, n-k]$ linear code is said to be almost maximum distance separable (almost MDS or AMDS for short). 
A code is said to be near  maximum distance separable (near MDS or NMDS for short) if the code and its dual code 
both are almost maximum distance separable. MDS codes do hold $t$-designs with very large $t$. 
Unfortunately, all $t$-designs held in MDS codes are complete and thus trivial.  The first near MDS code was the $[11, 6, 5]$ ternary Golay code 
discovered in 1949 by Golay \cite{Golay49}. This ternary code holds $4$-designs, and its extended code holds a Steiner system 
$S(5, 6, 12)$ with the largest strength known. 
Ding and Tang very recently presented an infinite family of near MDS codes over $\gf(3^m)$ holding an infinite family of $3$-designs and an 
infinite family of near MDS  codes over $\gf(2^{2m})$ holding an infinite family of $2$-designs \cite{DingTang19}. 

NMDS codes have nice properties \cite{DodLan95,DodLan00,FaldumWillems97,TD13}. In particular, up to a multiple, there is a natural 
correspondence between the minimum weight codewords of an NMDS code $\C$ and its dual 
$\C^\perp$, which follows from the next result \cite{FaldumWillems97}. 

\begin{theorem}\label{thm-121FW}
Let $\C$ be an NMDS code. Then for every minimum weight codeword $\bc$ in $\C$, there exists, 
up to a multiple, a unique minimum weight codeword $\bc^\perp$ in $\C^\perp$ such that 
$\support(\bc) \cap \support(\bc^\perp)=\emptyset$. In particular, $\C$ and $\C^\perp$ 
have the same number of minimum weight codewords. 
\end{theorem}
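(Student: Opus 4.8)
The plan is to exploit the rigid numerology of an NMDS code. If $\C$ is $[n,k,n-k]$, then $\C^\perp$ is $[n,n-k,k]$, so a minimum weight codeword $\bc$ of $\C$ is zero on a set of exactly $k$ coordinates, whereas a minimum weight codeword of $\C^\perp$ is supported on a set of exactly $k$ coordinates; the whole point is that these two $k$-sets can be matched. Concretely, I would fix a minimum weight $\bc\in\C$, set $S=\support(\bc)$ (so $|S|=n-k$), let $T$ be the complement of $S$ (so $|T|=k$), and manufacture from $\bc$ a minimum weight codeword of $\C^\perp$ whose support is exactly $T$.

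For existence, let $H$ be a generator matrix of $\C^\perp$, i.e.\ a parity check matrix of $\C$, and let $H_S$ be the $(n-k)\times(n-k)$ submatrix of $H$ formed by the columns indexed by $S$. Since $H\bc^{\top}=0$ and $\bc$ vanishes off $S$, we get $H_S(\bc|_S)^{\top}=0$ with $\bc|_S\neq 0$, so $H_S$ is singular; picking a nonzero left kernel vector $\bu$ (so $\bu H_S=0$) produces a codeword $\bu H$ of $\C^\perp$ that is nonzero (because $H$ has full row rank) and vanishes on all of $S$, hence is supported inside $T$. As $|T|=k=d^\perp$, this codeword has weight exactly $k$ and support exactly $T$; call it $\bc^\perp$.

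For uniqueness up to a scalar it suffices to show that $H_S$ has corank $1$. If its corank were at least $2$, there would be a two-dimensional space of codewords of $\C^\perp$ all supported inside the $k$-set $T$; restricting to the $T$-coordinates identifies this with a two-dimensional subspace of $\gf(q)^k$, which always contains a nonzero vector with a zero coordinate, and lifting that vector back gives a nonzero codeword of $\C^\perp$ of weight at most $k-1<d^\perp$, a contradiction. Hence the left kernel of $H_S$ is one-dimensional and $\bc^\perp=\bu H$ is determined by $\bc$ up to a scalar.

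For the final counting statement I would run the mirror image of the construction with the roles of $\C$ and $\C^\perp$ interchanged (using $(\C^\perp)^\perp=\C$ and $d^{\perp\perp}=n-k$): a minimum weight codeword $\bc^\perp$ of $\C^\perp$ with support $T$ yields, via a generator matrix $G$ of $\C$ and the singular $k\times k$ block $G_T$, a unique-up-to-scalar minimum weight codeword of $\C$ supported on the complement of $T$. Since the space of minimum weight codewords of $\C$ supported on a prescribed $(n-k)$-set is one-dimensional, these two maps are mutually inverse on scalar classes of minimum weight codewords, so they give a bijection between the scalar classes of $\C$ and of $\C^\perp$; as every class has exactly $q-1$ members, the two numbers of minimum weight codewords agree. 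The one point that really needs care — the crux of the whole argument — is the step that $H_S$ has corank \emph{exactly} one (equivalently the ``weight $\le k-1$'' contradiction), together with checking that the two constructions are genuine inverses so that the counting equality is a clean bijection.
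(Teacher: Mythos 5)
The paper does not prove this theorem at all: it is quoted verbatim from Faldum and Willems \cite{FaldumWillems97} and used as a black box, so there is no in-paper argument to compare against. Your proof is correct and complete. The existence step (singularity of the square block $H_S$ forces a nonzero codeword of $\C^\perp$ vanishing on $S$, which then has weight exactly $k=d^\perp$ because its support sits inside the $k$-set $T$) is sound, and you correctly identified the crux: that the left kernel of $H_S$ has dimension exactly one, which you get from the observation that a two-dimensional space of vectors supported on a $d^\perp$-set always contains a nonzero vector of weight at most $d^\perp-1$. The mirror argument with $G_T$ shows the analogous one-dimensionality on the $\C$ side, which is exactly what is needed to make the two maps mutually inverse on scalar classes, and dividing by the common factor $q-1$ gives the equality of the two counts. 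This is essentially the standard linear-algebra proof of the Faldum--Willems correspondence, and nothing in it is specific to the characteristic or to the codes studied later in the paper.
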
 

By Theorem \ref{thm-121FW}, if the minimum weight codewords of an NMDS code support a $t$-design, 
so do the minimum weight codewords of its dual, and the two $t$-designs are complementary of each other.

\section{Combinatorial  $t$-designs from elementary symmetric polynomials}

The objective of this section is to construct  $3$-designs and $4$-designs from elementary symmetric polynomials.
These results  would play a crucial role in proving that the codes constructed in the next section support $3$-designs or $4$-designs.

We define $[k]:= \{1,2, \cdots, k\}$. 
The \emph{elementary symmetric polynomial} (\emph{ESP}) of degree $\ell$ in $k$ variables $u_1, u_2, \cdots, u_k$, written  $\sigma_{k,\ell}$, is defined by
\begin{align}\label{eq:esp}
\sigma_{k,\ell}(u_1, \cdots, u_{k})= \sum_{I\subseteq [k], \# I=\ell}  \prod_{j\in I} u_j.
\end{align}
In commutative algebra, the elementary symmetric polynomials are a type of basic building block for symmetric polynomials, 
in the sense that any symmetric polynomial can be expressed as a polynomial in elementary symmetric polynomials.

Let $q=2^m$. Let $U_{q+1}$ be the subgroup of $\gf(q^2)^*$ of order $q+1$, that is, $U_{q+1}=\{u \in \gf(q^2)^*: u^{q+1}=1\}$. 
For any integer $k$ with $1 \leq k \leq q+1$, let $\binom{U_{q+1}}{k}$ denote the set of all $k$-subsets of $U_{q+1}$. 
 Define 
\begin{align}\label{eq:sp-B}
\cB_{\sigma_{k,\ell},q+1}=\left \{ \{u_1, \cdots, u_k\} \in \binom{U_{q+1}}{k} : \sigma_{k, \ell}(u_1, \cdots, u_k)=0 \right \}.
\end{align}
The incidence structure $\bD_{\sigma_{k,\ell},q+1}=(U_{q+1}, \cB_{\sigma_{k,\ell},q+1})$ may be a $t$-$(q+1,k,\lambda)$ design for some $\lambda$, where $U_{q+1}$ is the point set, 
and the incidence relation is the set membership. In this case, we say that the ESP $\sigma_{k,\ell}$ supports a $t$-$(q+1,k,\lambda)$ design.
The ESP $\sigma_{k,\ell}$ always supports a $1$-design,  but may not support $2$-designs.
Define the block sets $\cB_{\sigma_{6,3},q+1}^{0}$ and $\cB_{\sigma_{6,3},q+1}^{1}$ by
\begin{eqnarray}\label{eq:B0}
\cB_{\sigma_{6,3},q+1}^{0}=\left\{ 
\begin{array}{lr}
 \{u_1,u_2,u_3,u_4,u_5,u_6\} \in \cB_{\sigma_{6,3},q+1}:  &\{ u_{i_1},u_{i_2},u_{i_3},u_{i_4},u_{i_5}\} \in \cB_{\sigma_{5,2},q+1},\\
                                                        &1\le i_1<i_2<i_3<i_4<i_5\le 6
\end{array}
\right \},
\end{eqnarray} 
and 
\begin{align}\label{eq:B1}
\cB_{\sigma_{6,3},q+1}^{1}=\cB_{\sigma_{6,3},q+1} \setminus  \cB_{\sigma_{6,3},q+1}^{0}.
\end{align}

The following three theorems and corollary are the main results of this section. They show an interesting application of ESPs in the theory of combinatorial designs.

\begin{theorem}\label{thm:esp4-design}
Let $q=2^m$ with $m \ge  5$ odd. Then the incidence structure $(U_{q+1}, \cB_{\sigma_{6,3}, q+1})$ is a $4$-$\left (q+1, 6, \frac{q-8}{2} \right )$ design, where 
the block set $\cB_{\sigma_{6,3},q+1}$
is given by (\ref{eq:sp-B}). 
\end{theorem}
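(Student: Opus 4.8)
The plan is to show that the number of blocks in $\cB_{\sigma_{6,3},q+1}$ containing a fixed $4$-subset $\{a_1,a_2,a_3,a_4\}\subseteq U_{q+1}$ is independent of the choice of the $4$-subset, and equal to $(q-8)/2$. Fixing the four points, a block through them is determined by an unordered pair $\{x,y\}$ of distinct elements of $U_{q+1}\setminus\{a_1,a_2,a_3,a_4\}$ with $\sigma_{6,3}(a_1,a_2,a_3,a_4,x,y)=0$. Writing $e_i:=\sigma_{4,i}(a_1,a_2,a_3,a_4)$ and $s_1:=x+y$, $s_2:=xy$, one expands
\begin{equation*}
\sigma_{6,3}(a_1,a_2,a_3,a_4,x,y)=e_3 + e_2 s_1 + e_1 s_2 + s_1 s_2,
\end{equation*}
(using characteristic $2$), so the condition becomes $e_3+e_2 s_1 = (e_1+s_1)s_2$. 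The strategy is: for each admissible value of $s_1$, solve for $s_2$, then count the pairs $\{x,y\}\subseteq U_{q+1}\setminus\{a_1,\dots,a_4\}$ realizing $(s_1,s_2)$, and sum. The count of such pairs is governed by whether $z^2+s_1 z+s_2$ splits in $\gf(q^2)$ with both roots in $U_{q+1}$, i.e. by the trace condition $\tr_{\gf(q^2)/\gf(q^2/?)}$... more precisely, by whether $x,y\in U_{q+1}$, which using the parametrization $u\mapsto u$ on the conic $u^{q+1}=1$ translates to a condition on $s_1,s_2$.

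Concretely, I would use the standard bijection between $U_{q+1}$ and the projective line: an element $u\in U_{q+1}$ satisfies $u^q=u^{-1}$, and a pair $\{x,y\}\subseteq U_{q+1}$ corresponds to a monic quadratic $z^2+s_1z+s_2$ with $s_2=xy\in U_{q+1}$ (since $(xy)^{q+1}=1$) and with the extra ``unitary'' constraint $s_1^q = s_1 s_2^{-1}$ (equivalently $s_1^q s_2 = s_1$), which is exactly the condition that the two roots of an irreducible-or-split quadratic over $\gf(q^2)$ lie on $U_{q+1}$; additionally one needs the quadratic to have distinct roots in $\gf(q^2)$, i.e. $s_1\neq 0$ and $\tr_{\gf(q^2)/\gf(2)}(s_2/s_1^2)=0$. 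So the counting reduces to: count $s_1\in\gf(q^2)^*$ and $s_2\in U_{q+1}$ with (i) $s_1^q s_2 = s_1$, (ii) $e_3+e_2s_1=(e_1+s_1)s_2$, (iii) $\tr_{\gf(q^2)/\gf(2)}(s_2/s_1^2)=0$, and (iv) $\{x,y\}\cap\{a_1,a_2,a_3,a_4\}=\emptyset$, then divide appropriately. From (i), $s_2 = s_1^{1-q}$, so $s_2$ is determined by $s_1$; substituting into (ii) gives a single equation in $s_1$ over $\gf(q^2)$, namely $(e_3+e_2 s_1)s_1^{q-1} = e_1+s_1$, i.e. a low-degree equation (after clearing, degree roughly $q+1$ in $s_1$, but it factors through the norm/trace structure). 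The main computation is to count its solutions in $U_{q+1}$-compatible form and then impose (iii).

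The cleanest route is probably to exploit homogeneity: because $\PGL$ (or rather the relevant subgroup $\PGL_2(q)$ acting on $U_{q+1}\cong\PG(1,q)$) acts $3$-transitively on $U_{q+1}$, I can normalize three of the four fixed points, say $a_1=1$, $a_2=0$—wait, $0\notin U_{q+1}$, so instead use the sharply $3$-transitive action to move $a_1,a_2,a_3$ to a canonical triple, reducing to a one-parameter family indexed by $a_4$; then the count becomes a function of the cross-ratio-type invariant of $a_4$ relative to $a_1,a_2,a_3$, and I must show it is constant. This is where the real work lies: showing the solution count of the resulting equation in one variable over $\gf(q^2)$ is independent of the parameter. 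I expect to handle this via a character-sum / Weil-bound-free exact evaluation, or by a clever substitution turning the equation into Artin–Schreier form $w^2+w = (\text{something independent of the parameter})$, whose number of solutions depends only on a trace that I can show is parameter-free. I would also need the auxiliary fact (essentially Theorem~\ref{thm:esp4-design}'s analogue for $\sigma_{5,2}$ or direct computation) controlling the degenerate subcases where $e_1+s_1=0$ or $e_3+e_2s_1=0$, which correspond to blocks passing through the fixed $4$-set in a special position; these contribute a fixed small correction.

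The hard part will be the exact count of solutions to the one-variable equation over $\gf(q^2)$ after normalization, together with carefully subtracting the contributions of pairs $\{x,y\}$ that collide with $\{a_1,a_2,a_3,a_4\}$ or with each other (the $s_1=0$ and repeated-root cases). Getting the final tally to come out to exactly $(q-8)/2$ — in particular getting the constant $-8$ right, which encodes the ``$4$ forbidden points, counted with the right multiplicity after dividing the ordered count by $2$'' — requires tracking all degenerate cases precisely; the parity condition $m$ odd enters to guarantee $\gf(2)\subseteq\gf(q)$ but $3\nmid(q+1)$-type divisibility facts, ensuring the relevant quadratics and cubics behave uniformly and no extra coincidences inflate the count. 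Once constancy of $\lambda_4$ is established, $\lambda=(q-8)/2$ follows by the counting identity $\binom{q+1}{4}\lambda = \binom{6}{4} b$ together with a separate computation of $b=|\cB_{\sigma_{6,3},q+1}|$, or directly from the normalized count; and $m\ge 5$ is exactly the threshold making $(q-8)/2\ge 1$ so the structure is a genuine (nontrivial) design.
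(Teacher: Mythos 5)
Your proposal has several concrete problems. First, the expansion of $\sigma_{6,3}$ is wrong: with $e_i=\sigma_{4,i}(a_1,\dots,a_4)$, $s_1=x+y$, $s_2=xy$, one has $\sigma_{6,3}=e_3+e_2s_1+e_1s_2$ (the polynomial is multilinear in the six variables, so no $s_1s_2$ term can occur), and hence the condition is $e_3+e_2s_1=e_1s_2$, not $e_3+e_2s_1=(e_1+s_1)s_2$; everything downstream is computed from the wrong equation. Second, your criterion for ``both roots of $z^2+s_1z+s_2$ lie in $U_{q+1}$'' is not sufficient: the conditions $s_2\in U_{q+1}$ and $s_1^qs_2=s_1$ only say that the root set is invariant under $z\mapsto z^{-q}$, and they are equally satisfied by split pairs of the form $\{y,y^{-q}\}$ with $y\notin U_{q+1}$; moreover the trace condition you impose, $\tr_{\gf(q^2)/\gf(2)}(s_2/s_1^2)=0$, vanishes identically on elements of $\gf(q)$ and so cannot discriminate the cases --- the correct dichotomy goes through $\tr_{\gf(q)/\gf(2)}$ after normalizing by $\sqrt{s_2}$, as in Lemma~\ref{lem:quadeq-U}. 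Third, the $\PGL_2(q)$-normalization presupposes that $\cB_{\sigma_{6,3},q+1}$ is invariant under that action on $U_{q+1}$, which you do not establish and which is far from obvious. Most importantly, the central computation --- the exact, parameter-independent count of solutions of the resulting equation in one variable over $\gf(q^2)$, together with the bookkeeping of degenerate pairs --- is precisely the content of the theorem, and your proposal explicitly defers it (``I expect to handle this via a character-sum \dots\ or by a clever substitution''). As written this is a plan with the hard part missing, not a proof.

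The paper's argument avoids all of this via one simple observation you missed: for fixed $u_1,\dots,u_5$ the equation $\sigma_{6,3}=0$ is \emph{linear} in $u_6$, namely $\sigma_{5,3}+u_6\sigma_{5,2}=0$, and for $m$ odd the coefficient $\sigma_{5,2}$ never vanishes on $5$-subsets of $U_{q+1}$ (Lemma~\ref{lem:B52odd}). Hence every $u_5$ outside the fixed $4$-set determines a unique $u_6=\sigma_{5,3}/\sigma_{5,2}\in U_{q+1}$, and one only has to discard the $9$ values of $u_5$ for which $u_6$ collides with $u_5$ or with one of $u_1,\dots,u_4$ (these $9$ values are pairwise distinct by Lemma~\ref{lem:sigma4-5Sigma63}); the remaining $q-8$ values pair up into $(q-8)/2$ unordered pairs, each giving one block. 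No quadratic in two unknowns, no group action, and no character sums are needed. If you want to salvage your route, the linearity in $u_6$ is the observation to import.
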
 

\begin{theorem}\label{thm:espsteiner}
Let $q=2^m$ with $m \ge  4$ even. Then the incidence structure $(U_{q+1}, \cB_{\sigma_{5,2}, q+1})$ is a 
Steiner system $S(3,5,q+1)$, where  the block set $\cB_{\sigma_{5,2},q+1}$
is given by (\ref{eq:sp-B}). 
\end{theorem}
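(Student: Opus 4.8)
The plan is to show directly that every $3$-subset of $U_{q+1}$ is contained in exactly one block of $\cB_{\sigma_{5,2},q+1}$; since blocks have size $5$ and $q+1>5>3$, this is exactly the assertion that $(U_{q+1},\cB_{\sigma_{5,2},q+1})$ is a Steiner system $S(3,5,q+1)$. Fix distinct $u_1,u_2,u_3\in U_{q+1}$ and put $p_1=u_1+u_2+u_3$, $p_2=\sigma_{3,2}(u_1,u_2,u_3)$, $p_3=u_1u_2u_3$ and $c=p_1^2+p_2=u_1^2+u_2^2+u_3^2+u_1u_2+u_1u_3+u_2u_3$. Splitting the five variables of $\sigma_{5,2}$ into $\{u_1,u_2,u_3\}$ and $\{u_4,u_5\}$ gives $\sigma_{5,2}(u_1,\dots,u_5)=p_2+p_1(u_4+u_5)+u_4u_5$, so a $5$-set $\{u_1,\dots,u_5\}\subseteq U_{q+1}$ is a block if and only if $u_4\ne u_5$, $u_4,u_5\notin\{u_1,u_2,u_3\}$ and $(u_4+p_1)(u_5+p_1)=c$. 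The first point to settle is that $c\ne 0$: since $m$ is even, $\gf(4)\subseteq\gf(q)$, so one may take $\omega\in\gf(q)$ with $\omega^2+\omega+1=0$ and factor $c=(u_1+\omega u_2+\omega^2u_3)(u_1+\omega^2u_2+\omega u_3)$; if a factor vanished, applying $x\mapsto x^q$ (which fixes $\omega$ and inverts elements of $U_{q+1}$) and multiplying back would yield, for a suitable labelling $\{i,j,k\}=\{1,2,3\}$, the relation $1=u_i^{q+1}=1+u_ju_k^{-1}+u_ku_j^{-1}$, forcing $u_j=u_k$. Hence $c\ne0$, and the block condition becomes $u_5=\tau(u_4)$ where $\tau(u)=(p_1u+p_2)/(u+p_1)$; as this relation in $u_4,u_5$ is symmetric, $u_4=\tau(u_5)$ too.

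I would then turn the condition that $u$ and $\tau(u)$ both lie in $U_{q+1}$ into a polynomial condition. For $u\in U_{q+1}$ we have $u^q=u^{-1}$, $p_1^q=p_2p_3^{-1}$, $p_2^q=p_1p_3^{-1}$, $p_3^q=p_3^{-1}$, whence $\tau(u)^q=(p_1u+p_2)/(p_2u+p_3)$ and $\tau(u)^{q+1}=(p_1u+p_2)^2/\big((u+p_1)(p_2u+p_3)\big)$; setting this equal to $1$ and using $p_1^2+p_2=c$ and $p_2^2+p_1p_3=c^qp_3^2$ shows that, for $u\in U_{q+1}$,
\[
\tau(u)\in U_{q+1}\iff Q(u):=c\,u^2+D\,u+c^qp_3^2=0,\qquad D:=p_1p_2+p_3=(u_1+u_2)(u_1+u_3)(u_2+u_3).
\]
Carrying out the same computation at $u=u_i$ gives $\tau(u_i)^{q+1}=\big((u_i^2+u_ju_k)/(u_i(u_j+u_k))\big)^2$, which equals $1$ only if $(u_i+u_j)(u_i+u_k)=0$; hence $\tau(u_i)\notin U_{q+1}$, and in particular no $u_i$ is a root of $Q$. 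Putting these together with the block description: every block through $\{u_1,u_2,u_3\}$ has the form $\{u_1,u_2,u_3,a,b\}$ with $a,b$ the two roots of $Q$ (apply the displayed equivalence to $u_4$ and to $u_5=\tau(u_4)$), and conversely $\{u_1,u_2,u_3,a,b\}$ is a block as soon as $a\ne b$ and $a,b\in U_{q+1}$ (indeed $(a+p_1)(b+p_1)=c$ then follows at once from $a+b=D/c$, $ab=c^qp_3^2/c$ and the three identities $c+p_1^2=p_2$, $D=p_1p_2+p_3$, $c^qp_3^2=p_2^2+p_1p_3$). So the entire claim reduces to: \emph{$Q$ has two distinct roots, and both lie in $U_{q+1}$.}

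Distinctness is clear, since $D\ne0$ (the $u_i$ are distinct) makes $Q$ separable; write its roots as $a,b$, so $a+b=D/c$ and $ab=c^qp_3^2/c$. One checks $N_{\gf(q^2)/\gf(q)}(ab)=1$, so $N(a)N(b)=1$, and that $D^q=Dp_3^{-2}$ (so $p_3/D\in\gf(q)$, hence $\Tr_{\gf(q^2)/\gf(q)}(p_3/D)=0$), which gives $\Tr_{\gf(q^2)/\gf(2)}\!\big(c\cdot c^qp_3^2/D^2\big)=0$ and therefore $a,b\in\gf(q^2)$. Next, the identities $D^{q-1}=p_3^{-2}=(c^qp_3^2)^q/c=c^q/(c^qp_3^2)$ show that $\widetilde Q(u):=(c^qp_3^2)^q\,u^2+D^q\,u+c^q$, the reciprocal of the Frobenius twist of $Q$, equals $p_3^{-2}Q(u)$; since the roots of $\widetilde Q$ are $a^{-q},b^{-q}$, we get $\{a^{-q},b^{-q}\}=\{a,b\}$. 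The involution $x\mapsto x^{-q}$ of $\gf(q^2)^*$ has $U_{q+1}$ as its exact fixed-point set, so on the $2$-set $\{a,b\}$ it is either the identity — which is what we want, as it means $a,b\in U_{q+1}$ — or the transposition, and the latter holds if and only if $a/b\in\gf(q)^*$, i.e.\ if and only if the $\gf(q)$-quadratic $Z^2+gZ+1$ with $g:=D^2/(c^{q+1}p_3^2)\in\gf(q)$ is reducible over $\gf(q)$. Thus everything comes down to the single trace identity
\[
\Tr_{\gf(q)/\gf(2)}\!\big(c^{q+1}p_3^2/D^2\big)=1 ,
\]
for then $Z^2+gZ+1$ is irreducible, $a,b\in U_{q+1}$, and $\{u_1,u_2,u_3,a,b\}$ is the unique block through $\{u_1,u_2,u_3\}$, completing the proof.

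I expect this last trace identity to be the main obstacle. I would normalise $p_3=u_1u_2u_3=1$ first — legitimate because multiplying all coordinates by a fixed element of $U_{q+1}$ preserves $\cB_{\sigma_{5,2},q+1}$, $\sigma_{5,2}$ being homogeneous — and then the identity becomes a symmetric-polynomial identity in $u_1,u_2$ over $\gf(q^2)$ with $u_3=(u_1u_2)^{-1}$, to be established using $\gf(4)\subseteq\gf(q)$ together with $u_i^{q+1}=1$. This is precisely where the hypothesis that $m$ is even is indispensable: for $m$ odd the very first step already fails (for the triple $\{u,\omega u,\omega^2u\}$, which lies in $U_{q+1}$ when $m$ is odd, one gets $c=0$), and indeed the theorem itself is false in that case.
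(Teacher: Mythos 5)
Your reduction is sound, and it is essentially the paper's own reduction in different clothing: your $c$, $D$ and $c^qp_3^2$ are the paper's $\sigma_{3,1}^2+\sigma_{3,2}$, $\sigma_{3,1}\sigma_{3,2}+\sigma_{3,3}$ and $\sigma_{3,2}^2+\sigma_{3,1}\sigma_{3,3}$; your $Q$ is $c$ times the quadratic $u^2+au+b$ of Lemmas \ref{lem:u4+u5} and \ref{lem:quadeq-U}; and the quantity whose trace you need, $c^{q+1}p_3^2/D^2$, is exactly the paper's $b/a^2$. The non-vanishing of $c$ (Lemma \ref{lem:sigma3not=0}), the exclusion of $u_1,u_2,u_3$ from the roots of $Q$ (part of Lemma \ref{lem:quadeq-u4u5}), and both directions of your block description all check out. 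Your mechanism for placing the two roots in $U_{q+1}$ (the reciprocal of the Frobenius twist of $Q$ is proportional to $Q$, so $x\mapsto x^{-q}$ permutes $\{a,b\}$, and the transposition case is ruled out by an irreducibility criterion over $\gf(q)$) is a mild variant of the paper's (normalize by $\sqrt{b}\in U_{q+1}$ and use that the roots of $T^2+(a/\sqrt{b})T+1$ are $\gf(q)$-conjugates with product $1$); both hinge on the same trace value.

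The genuine gap is that you never prove the one fact to which, as you say, ``everything comes down'': $\Tr_{\gf(q)/\gf(2)}\bigl(c^{q+1}p_3^2/D^2\bigr)=1$ for $m$ even. You only announce a plan (normalize $p_3=1$, reduce to a symmetric identity in two variables), and that plan does not by itself explain why the trace is $1$ rather than $0$; this is where the real content of the theorem sits, and where the parity of $m$ enters quantitatively rather than merely through $\gf(4)\subseteq\gf(q)$. Without it your argument shows only that every $3$-subset lies in \emph{at most} one block. The paper supplies the identity in Lemma \ref{lem:tr(b/a2)} via the decomposition
\[
\frac{c^{q+1}p_3^2}{D^2}=1+\frac{u_1u_2}{(u_1+u_2)^2}+\frac{u_2u_3}{(u_2+u_3)^2}+\frac{u_3u_1}{(u_3+u_1)^2},
\]
combined with Lemma \ref{lem:tr=1}: for distinct $u,v\in U_{q+1}$ the element $w=uv/(u+v)^2$ lies in $\gf(q)$ and has $\Tr_{\gf(q)/\gf(2)}(w)=1$, because $T^2+T+w^2=0$ has the root $wu/v\in\gf(q^2)\setminus\gf(q)$. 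Summing gives $3+\Tr_{\gf(q)/\gf(2)}(1)=3+m\equiv 1\pmod 2$ when $m$ is even. You would need to carry out this computation (or an equivalent one) to close the proof.
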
 

\begin{theorem}\label{thm:esp3-design}
Let $q=2^m$ with $m \ge  4$ even. Then the incidence structure $(U_{q+1}, \cB^0_{\sigma_{6,3}, q+1})$ is a $3$-$\left (q+1, 6, 2(q-4) \right )$ design,
and the incidence structure $(U_{q+1}, \cB_{\sigma_{6,3}, q+1})$ is a $3$-$\left (q+1, 6, \frac{(q-4)^2}{6} \right )$ design. 
\end{theorem}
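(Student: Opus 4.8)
The plan is to reduce everything to counting, over the multiplicative group $U_{q+1}\subseteq\gf(q^2)^*$ with $q=2^m$, the number of $k$-subsets on which a prescribed elementary symmetric polynomial vanishes, and then to track how $3$-subsets sit inside these blocks. For Theorem \ref{thm:esp3-design} the point set is $U_{q+1}$ of size $q+1$, the block size is $k=6$, and the two claimed $3$-designs differ only by whether one also imposes that \emph{some} $5$-subset of the block lies in $\cB_{\sigma_{5,2},q+1}$. Since a $3$-design must have $\lambda$ independent of the chosen $3$-subset, the core task is: fix three distinct elements $u_1,u_2,u_3\in U_{q+1}$ and count the number of ways to extend to $\{u_1,\dots,u_6\}$ with $\sigma_{6,3}(u_1,\dots,u_6)=0$, and separately the number of such extensions lying in $\cB^0_{\sigma_{6,3},q+1}$. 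The hoped-for outcome is that both counts are independent of the initial triple, by a symmetry/transitivity argument, and then $\lambda$ is pinned down by the identity $\binom{q+1}{3}\lambda=\binom{6}{3}b$ once $b=\#\cB_{\sigma_{6,3},q+1}$ (resp. $\#\cB^0_{\sigma_{6,3},q+1}$) is known.

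First I would record the relevant automorphisms. The group $\PGL(2,q)$ acts on $\PG(1,q)=\gf(q)\cup\{\infty\}$, and a suitable subgroup fixing $U_{q+1}$ setwise acts $3$-transitively on $U_{q+1}$ — this is the standard fact that the stabiliser in $\PGL(2,q^2)$ of the ``circle'' $U_{q+1}$ induces $\PGL(2,q)$ on it, which is sharply $3$-transitive. The essential point is that the conditions $\sigma_{6,3}=0$ and membership in $\cB_{\sigma_{5,2},q+1}$ are \emph{not} literally preserved by this whole group (fractional linear maps do not commute with elementary symmetric polynomials), so I cannot simply quote Theorem \ref{thm-designCodeAutm}. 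Instead I would either (a) pass to the earlier Theorems \ref{thm:esp4-design} and \ref{thm:espsteiner} in this section, which are presumably proved first, and derive the $3$-design structure of $\cB_{\sigma_{6,3},q+1}$ from the $4$-design structure via \eqref{eqn-lambdas} (a $4$-design is automatically a $3$-design with $\lambda_3=\lambda_4(q+1-3)/(6-3)$), or (b) give a direct parametrisation. Route (a) is clean for the second assertion — but note Theorem \ref{thm:esp3-design} assumes $m$ \emph{even}, whereas Theorem \ref{thm:esp4-design} assumes $m$ odd, so the $4$-design theorem does \emph{not} apply here and I genuinely must redo the count in the even case.

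So the real work is a direct count. I would normalise by $3$-transitivity: move $(u_1,u_2,u_3)$ to a fixed triple, say $(1,\gamma,\gamma^{-1})$ for a convenient $\gamma$, or better work with the ``difference'' coordinates used in the proof of Theorem \ref{thm:espsteiner}. Writing $e_j=\sigma_{6,j}(u_1,\dots,u_6)$, the vanishing of $e_3$ together with the constraint $u_i^{q+1}=1$ (equivalently $\bar u_i=u_i^{-1}$, so $e_6=\prod u_i$ has norm $1$ and $e_3$ is related to $\bar e_3 e_6$) turns the problem into solving a system over $\gf(q^2)$; in characteristic $2$ the symmetric functions simplify and, crucially, the Steiner-system count in Theorem \ref{thm:espsteiner} tells me exactly how many $5$-subsets of a given $6$-set lie in $\cB_{\sigma_{5,2},q+1}$ — each triple lies in a unique $\sigma_{5,2}$-block, which controls $\cB^0$ versus $\cB^1$. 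I expect the main obstacle to be this last bookkeeping: showing that the number of $\sigma_{5,2}$-vanishing $5$-subsets among the six $5$-subsets of a block in $\cB_{\sigma_{6,3},q+1}$ is constant (presumably $0$ for blocks in $\cB^1$ by definition, and a fixed positive value for blocks in $\cB^0$), and simultaneously that the number of blocks through a fixed triple is constant. Once both constancies are established, the two values of $\lambda$, namely $2(q-4)$ and $(q-4)^2/6$, follow by arithmetic from $b$ and the Steiner parameters; I would double-check the divisibility $6 \mid (q-4)^2$ for $q=2^m$ with $m$ even (indeed $q\equiv 4\pmod 6$ then, since $q\equiv 1\pmod 3$ and $q\equiv 0\pmod 2$, giving $q-4\equiv 0 \pmod 6$, so $(q-4)^2/6$ need not even be the reduced form — but it is an integer, which is what matters) as a sanity check on the statement.
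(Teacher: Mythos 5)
Your framing is right — you correctly observe that the automorphism-group shortcut is unavailable, that the $4$-design route via Equation~(\ref{eqn-lambdas}) fails because Theorem~\ref{thm:esp4-design} requires $m$ odd while this theorem requires $m$ even, and that the job is therefore to fix a triple $\{u_1,u_2,u_3\}$ and count its extensions to blocks. That is indeed the paper's strategy. But everything after that framing is a to-do list rather than a proof: the count, which is the entire content of the theorem, is never performed. Concretely, the paper's argument runs as follows. For a block $B\in\cB^0_{\sigma_{6,3},q+1}$ containing the triple, Lemma~\ref{lem:one-one} says $B$ contains a \emph{unique} $5$-subset in $\cB_{\sigma_{5,2},q+1}$; since that $5$-subset omits only one point of $B$, it contains either all three of $u_1,u_2,u_3$ or exactly two. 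This dichotomy yields the explicit parametrisation $\mathcal T^0_1=\{S^0\cup\{u_6\}\}$ (where $S^0$ is the unique Steiner block through the triple, giving $q-4$ blocks) together with $\mathcal T^0_{i,j}=\{\{u_1,u_2,u_3,u_4\}\cup Q(u_i,u_j,u_4)\}$ for the three pairs $\{i,j\}\subset\{1,2,3\}$ (each contributing $(q-4)/3$ after accounting for the $3$-fold overcount), and disjointness again comes from Lemma~\ref{lem:one-one}; the total is $2(q-4)$. For the full block set the remaining blocks are parametrised by ordered pairs $(u_4,u_5)$ with $u_6=\sigma_{5,3}/\sigma_{5,2}$ forced, where $u_5$ must avoid an exclusion set $S^0\cup S^1$ whose cardinality is exactly $11$ by Lemmas~\ref{lem:sigma4-5Sigma63} and~\ref{lem:S11}, and one divides by $6$ for the orderings, giving $(q-4)(q-10)/6$ and hence $\lambda=(q-4)+(q-4)(q-10)/6=(q-4)^2/6$. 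None of this structure — the dichotomy via uniqueness, the exclusion sets of sizes $5$ and $11$, the overcount factors $3$ and $6$ — appears in your proposal, and these are precisely the points where the proof could fail.

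There is also a logical circularity in your fallback plan: you propose to deduce $\lambda$ from $b=\#\cB_{\sigma_{6,3},q+1}$ via $\binom{q+1}{3}\lambda=\binom{6}{3}b$, but you give no independent way to compute $b$. In the paper the implication goes the other way: $b$ is computed \emph{from} the design property after the theorem is proved. So the constancy of the number of blocks through a triple must be established directly, as above; it cannot be extracted from a global block count you do not have. Your secondary worry (constancy of the number of $\sigma_{5,2}$-vanishing $5$-subsets inside a block) is also slightly misplaced: by Lemma~\ref{lem:one-one} that number is $0$ or $1$, which is exactly the definition of the partition $\cB^1\sqcup\cB^0$ and need not be constant over all of $\cB_{\sigma_{6,3},q+1}$.
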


The following corollary follows immediately  from the previous theorem.

\begin{corollary}\label{cor:B1-WT6}
Let $q=2^m$ with $m \ge  4$ even. Then the incidence structure $(U_{q+1}, \cB^1_{\sigma_{6,3}, q+1})$ is a $3$-$\left (q+1, 6, \frac{(q-4)(q-16)}{6}\right )$ design.
\end{corollary}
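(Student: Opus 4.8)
\emph{Proof proposal.}

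The plan is to use the elementary fact that a set difference of two $t$-designs sharing the same point set and the same block size is again a $t$-design. Concretely, recall the observation: if $(\cP,\cB)$ is a $t$-$(v,k,\lambda)$ design and $(\cP,\cB')$ is a $t$-$(v,k,\lambda')$ design with $\cB'\subseteq\cB$, then for every $t$-subset $T$ of $\cP$ the number of blocks of $\cB\setminus\cB'$ containing $T$ equals $\lambda-\lambda'$, which is independent of $T$; hence $(\cP,\cB\setminus\cB')$ is a $t$-$(v,k,\lambda-\lambda')$ design, and it is automatically simple, being a subfamily of the set of $k$-subsets of $\cP$.

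First I would record that, by the defining equations \eqref{eq:B0} and \eqref{eq:B1}, we have $\cB^0_{\sigma_{6,3},q+1}\subseteq\cB_{\sigma_{6,3},q+1}$ and $\cB^1_{\sigma_{6,3},q+1}=\cB_{\sigma_{6,3},q+1}\setminus\cB^0_{\sigma_{6,3},q+1}$, so the observation above applies with $\cP=U_{q+1}$, $v=q+1$, $k=6$, $t=3$. Theorem~\ref{thm:esp3-design} supplies the two ingredients: $(U_{q+1},\cB_{\sigma_{6,3},q+1})$ is a $3$-$(q+1,6,\tfrac{(q-4)^2}{6})$ design and $(U_{q+1},\cB^0_{\sigma_{6,3},q+1})$ is a $3$-$(q+1,6,2(q-4))$ design. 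Then I would conclude that $(U_{q+1},\cB^1_{\sigma_{6,3},q+1})$ is a $3$-$(q+1,6,\lambda)$ design with $\lambda=\tfrac{(q-4)^2}{6}-2(q-4)=\tfrac{(q-4)(q-4-12)}{6}=\tfrac{(q-4)(q-16)}{6}$, exactly the claimed parameters.

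The only matters that require checking are bookkeeping: $\lambda$ must be a nonnegative integer. Integrality is automatic, since $\lambda$ counts the blocks of $\cB^1_{\sigma_{6,3},q+1}$ through a fixed triple; nonnegativity holds because $q=2^m\ge 16$ when $m\ge 4$, so $(q-4)(q-16)\ge 0$. (In the boundary case $m=4$, i.e.\ $q=16$, one gets $\lambda=0$ and $\cB^1_{\sigma_{6,3},17}=\emptyset$, the degenerate design.) I do not expect any analytic obstacle: the corollary is genuinely a one-line consequence of Theorem~\ref{thm:esp3-design}, the substantive work having already been carried out in proving that theorem.
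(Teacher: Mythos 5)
Your proposal is correct and matches the paper's intent exactly: the paper states that the corollary ``follows immediately'' from Theorem~\ref{thm:esp3-design}, and the implicit argument is precisely the set-difference computation $\lambda=\tfrac{(q-4)^2}{6}-2(q-4)=\tfrac{(q-4)(q-16)}{6}$ that you spell out. Your remark on the degenerate case $q=16$ is a sensible extra check (and is consistent with the paper's later restriction to $m\ge 6$ in Theorem~\ref{thm:even-3designs}).
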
 

From Theorems \ref{thm:esp4-design}, \ref{thm:espsteiner} and \ref{thm:esp3-design}, one gets
\begin{eqnarray*}
\# \cB_{\sigma_{5,2}, q+1}=\left\{ 
\begin{array}{ll}
 \frac{1}{10} \binom{q+1}{3}, &  \text{ if } q=2^{2m}, \\
0, & \text{ if } q=2^{2m+1},
\end{array}
\right. 
\end{eqnarray*} 
and
\begin{eqnarray*}
\# \cB_{\sigma_{6,3}, q+1}=\left\{ 
\begin{array}{ll}
 \frac{(q-4)^2}{120} \binom{q+1}{3}, &  \text{ if } q=2^{2m}, \\
\frac{q-8}{30}\binom{q+1}{4}, & \text{ if } q=2^{2m+1}.
\end{array}
\right. 
\end{eqnarray*} 
In general, it's difficult to determine $\# \cB_{\sigma_{k,\ell}, q+1}$. It would be interesting to settle the following problem.
\begin{open}
Let $q=2^m$, and $k, \ell$ be two positive integers with  $\ell \le \frac{k}{2}$. 
Determine the cardinality of the block set $\cB_{\sigma_{k,\ell}, q+1}$ given by (\ref{eq:sp-B}) for $(k, \ell) \neq  (6,3) \text{ and } (5,2)$.

\end{open}

To prove Theorems \ref{thm:esp4-design}, \ref{thm:espsteiner}, and \ref{thm:esp3-design}, we need the following lemmas. 
The first one is on quadratic equations over finite fields of characteristic two \cite{LN97}, and is documented next. 

\begin{lemma}\label{lem:quadeq}
Let $f(T)=T^2+aT+b \in \gf(2^m)$ be a  polynomial of degree $2$. Then
\begin{enumerate}
\item $f$ has exactly one root in $\gf(2^m)$ if and only if $a=0$; 
\item $f$ has exactly two roots in $\gf(2^m)$ if and only if $a\neq 0$ and $\tr_{q/2}\left (\frac{b}{a^2} \right )=0$; and 
\item $f$ has exactly two roots in $\gf(2^{2m}) \setminus \gf(2^m)$ if and only if $a\neq 0$ and $\tr_{q/2}\left (\frac{b}{a^2}\right )=1$.
\end{enumerate}
\end{lemma}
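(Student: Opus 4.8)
The plan is to separate the degenerate case $a=0$ from the generic case $a\neq 0$, and in the latter to reduce $f$ to an Artin--Schreier polynomial and appeal to the additive analogue of Hilbert~90. This is essentially the classical argument; the point is that in characteristic two a monic quadratic either has a double root or is separable, and in the separable case a single affine substitution normalizes it.

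First I would dispose of part~(1). If $a=0$, then in characteristic two $f(T)=T^2+b=(T+b^{2^{m-1}})^2$, since $x\mapsto x^2$ is a bijection of $\gf(2^m)$; hence $f$ has the single (repeated) root $b^{2^{m-1}}\in\gf(2^m)$. Conversely, if $f$ has exactly one root in $\gf(2^m)$ then that root is repeated, so $\gcd(f,f')\neq 1$; as $f'(T)=a$, this forces $a=0$. Thus ``$f$ has exactly one root in $\gf(2^m)$'' is equivalent to $a=0$, and from here on we may assume $a\neq 0$.

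Next, the substitution $T=aS$ rewrites $f(T)=0$ as $a^2(S^2+S+b/a^2)=0$; since $a\in\gf(2^m)^*$, the map $S\mapsto aS$ preserves membership in $\gf(2^m)$ and in $\gf(2^{2m})$, so it suffices to count the roots of the Artin--Schreier polynomial $g(S)=S^2+S+c$ with $c:=b/a^2$. Because $g'(S)=1$, $g$ is separable and hence has two distinct roots in its splitting field, which for a quadratic over $\gf(2^m)$ lies inside $\gf(2^{2m})$; moreover if $S_0$ is a root of $g$ then so is $S_0+1$, and $S_0\neq S_0+1$.

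The heart of the matter is the standard fact that the image of the $\gf(2)$-linear map $\wp\colon x\mapsto x^2+x$ on $\gf(2^m)$ is exactly $\ker \tr_{q/2}$: one has $\tr_{q/2}(\wp(x))=\tr_{q/2}(x^2)+\tr_{q/2}(x)=0$ because $\tr_{q/2}(x^2)=\tr_{q/2}(x)$ in characteristic two, so $\image\wp\subseteq\ker\tr_{q/2}$, and both sets have $2^{m-1}$ elements (as $\ker\wp=\gf(2)$ and $\tr_{q/2}$ is surjective), forcing equality. Consequently, if $\tr_{q/2}(c)=0$ then $g$ has a root $S_0\in\gf(2^m)$, so both roots $S_0,S_0+1$ lie in $\gf(2^m)$, which yields part~(2); and if $\tr_{q/2}(c)=1$ then $g$ has no root in $\gf(2^m)$, so its two roots lie in $\gf(2^{2m})\setminus\gf(2^m)$, which yields part~(3). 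Since the three situations $a=0$; $a\neq 0$ with $\tr_{q/2}(c)=0$; and $a\neq 0$ with $\tr_{q/2}(c)=1$ are mutually exclusive and exhaust all cases, the ``only if'' directions of (2) and (3) follow as well. I do not anticipate a serious obstacle; the only slightly delicate point is the cardinality count giving $\image\wp=\ker\tr_{q/2}$, and that rests squarely on the characteristic-two identity $\tr_{q/2}(x^2)=\tr_{q/2}(x)$.
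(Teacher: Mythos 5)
Your proof is correct. The paper itself does not prove this lemma at all --- it is quoted as a known fact with a citation to Lidl--Niederreiter \cite{LN97} --- so there is no in-paper argument to compare against; what you have written is precisely the standard proof underlying that reference: split off the inseparable case $a=0$, normalize by $T=aS$ to the Artin--Schreier form $S^2+S+c$ with $c=b/a^2$, and use the additive Hilbert~90 fact $\image(\wp)=\ker(\tr_{q/2})$ via the kernel/cardinality count. All steps check out. One minor remark: your sentence ``if $f$ has exactly one root in $\gf(2^m)$ then that root is repeated'' is not immediate on its own (a separable quadratic could a priori have one root in the field and one outside), but this is harmless because your closing trichotomy --- the three cases $a=0$, $a\neq 0$ with $\tr_{q/2}(c)=0$, and $a\neq 0$ with $\tr_{q/2}(c)=1$ are exclusive and exhaustive, with root counts $1$, $2$, $0$ in $\gf(2^m)$ respectively --- delivers all the ``only if'' directions, including that of part~(1), without needing that intermediate claim.
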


\begin{lemma}\label{lem:tr=1}
Let $\{u_1,u_2\} \in \binom{U_{q+1}}{2}$. Then $\frac{u_1u_2}{u_1^2+u_2^2} \in  \gf(q)$ and $\tr_{q/2}\left ( \frac{u_1u_2}{u_1^2+u_2^2} \right )=1$.
\end{lemma}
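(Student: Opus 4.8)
The plan is to exhibit an explicit quadratic over $\gf(q)$ whose constant term equals $\frac{u_1u_2}{u_1^2+u_2^2}$ and whose two roots visibly lie outside $\gf(q)$, and then read off the trace condition from Lemma \ref{lem:quadeq}.

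First I would record the elementary fact that every $u\in U_{q+1}$ satisfies $u^{q}=u^{-1}$, since $u^{q+1}=1$. Writing $w:=u_1+u_2$, which is nonzero because $u_1\neq u_2$ and the characteristic is $2$, a one-line computation with the Frobenius gives $w^{q}=u_1^{q}+u_2^{q}=u_1^{-1}+u_2^{-1}=w/(u_1u_2)$. Hence, setting $T:=u_1/w\in\gf(q^2)$, we obtain $T^{q}=u_1^{q}/w^{q}=u_2/w$, so that
\[
T+T^{q}=\frac{u_1+u_2}{w}=1 \qquad\text{and}\qquad T\cdot T^{q}=\frac{u_1u_2}{w^{2}}=\frac{u_1u_2}{u_1^{2}+u_2^{2}}.
\]
In particular $z:=\frac{u_1u_2}{u_1^{2}+u_2^{2}}=T^{1+q}=\Norm_{\gf(q^2)/\gf(q)}(T)$ lies in $\gf(q)$, which is the first assertion of the lemma.

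Next, from $T\cdot T^{q}=z$ together with $T^{q}=T+1$ (the relation $T+T^{q}=1$), we get $T^{2}+T=T(T+1)=z$, so $T$, and likewise its conjugate $T+1=T^{q}$, is a root of $X^{2}+X+z\in\gf(q)[X]$. Since $T^{q}=T+1\neq T$, neither root lies in $\gf(q)$; that is, $X^{2}+X+z$ has two roots in $\gf(q^2)\setminus\gf(q)$. Applying Lemma \ref{lem:quadeq}(3) with $a=1$ and $b=z$ then yields $\tr_{q/2}(z)=1$, completing the proof.

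There is no serious obstacle here: the only thing one has to spot is that the correct auxiliary element is $u_1/(u_1+u_2)$, after which everything follows from $u^{q}=u^{-1}$ on $U_{q+1}$ and from Lemma \ref{lem:quadeq}. The one point to be mildly careful about is that the coefficients of the auxiliary quadratic genuinely lie in the base field $\gf(q)$ — which they do, precisely because its constant term is the norm $T^{1+q}$ — so that Lemma \ref{lem:quadeq} is applicable over $\gf(q)$ rather than only over $\gf(q^2)$.
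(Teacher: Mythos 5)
Your proof is correct and follows essentially the same route as the paper's: both exhibit an element of $\gf(q^2)\setminus\gf(q)$ satisfying an Artin--Schreier quadratic $T^2+T+c=0$ with $c$ determined by $z=\frac{u_1u_2}{u_1^2+u_2^2}$, and then invoke Lemma \ref{lem:quadeq}(3). The only (cosmetic) difference is your choice of auxiliary element $u_1/(u_1+u_2)$, which gives constant term $z$ directly, whereas the paper uses $a\cdot(u_1/u_2)$ and gets constant term $z^2$, requiring the extra remark that $\tr_{q/2}(z)=\tr_{q/2}(z^2)$.
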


\begin{proof}
Let $a=\frac{u_1u_2}{u_1^2+u_2^2}$. Then $a^q=\frac{u_1^{-1}u_2^{-1}}{u_1^{-2}+u_2^{-2}}=a$. Thus $a\in  \gf(q)$.

Note that $\frac{1}{a}=u +\frac{1}{u}$, where $u=\frac{u_1}{u_2} \in  U_{q+1}$. One has
\begin{align}\label{eqn-dec251}
(au)^2+(au)+a^2=0,
\end{align}
where $au \in \gf(q^2) \setminus \gf(q)$. Hence, the equation $T^2+T+a^2=0$ has two roots in $\gf(2^{2m}) \setminus \gf(2^m)$. 
It then follows from Lemma \ref{lem:quadeq} that $\tr(a)=\tr(a^2)=1$.  
This completes the proof.
\end{proof}

\begin{lemma}\label{lem:sigma41}
Let $q=2^m$ and $\{u_1,u_2,u_3,u_4\} \in \binom{U_{q+1}}{4}$. Then we have the following. 
\begin{enumerate}
\item $u_1+u_2+u_3+u_4 \neq 0$.
\item If  $m$ is even, then $u_1+u_2+u_3\neq 0$.
\end{enumerate}
\end{lemma}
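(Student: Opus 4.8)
The plan is to work inside $\gf(q^2)$ and exploit the defining relation $u^{q+1}=1$ for $u \in U_{q+1}$, together with the characterization of quadratic (and cubic) behaviour over characteristic-two fields. For part (1), suppose toward a contradiction that $u_1+u_2+u_3+u_4=0$ with the four elements distinct and all in $U_{q+1}$. Writing $u_4=u_1+u_2+u_3$ and imposing $u_4^{q+1}=1$, i.e. $u_4 \cdot u_4^{q} = 1$, and using $u_i^q = u_i^{-1}$ for each $i$, gives a symmetric polynomial identity in $u_1,u_2,u_3$ and their inverses. The idea is to clear denominators and rewrite everything in terms of the ESPs $\sigma_{3,1}=u_1+u_2+u_3$, $\sigma_{3,2}$, $\sigma_{3,3}=u_1u_2u_3$, noting $\sigma_{3,1}=u_4$. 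After simplification this should collapse to a relation forcing two of the $u_i$ to coincide — contradicting distinctness — or forcing $u_4=0$, which is impossible since $0 \notin U_{q+1}$. An alternative, cleaner route: dividing through by $u_1$, set $v_i=u_i/u_1 \in U_{q+1}$, so $1+v_2+v_3+v_4=0$; then the pair $\{1,v_2\}$ and the pair $\{v_3,v_4\}$ both have sum $1+v_2$, and applying Lemma \ref{lem:tr=1} to these two pairs (which live in $U_{q+1}$) pins down the trace of $\frac{v_3v_4}{v_3^2+v_4^2}$ versus $\frac{v_2}{1+v_2^2}$; since $v_3+v_4 = 1+v_2$ these two quantities are governed by the same denominator, and a short computation shows the two products $v_3v_4$ and $v_2$ must then be equal, forcing $\{v_3,v_4\}=\{1,v_2\}$, a contradiction.

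For part (2), with $m$ even, suppose $u_1+u_2+u_3=0$. Normalizing by $u_1$ as above, put $u=u_2/u_1$ and $w=u_3/u_1$, so $1+u+w=0$, i.e. $w=1+u$, with $u, w \in U_{q+1}$ and $u \neq 1$ (distinctness). The conditions $u^{q+1}=1$ and $(1+u)^{q+1}=1$ expand, using $u^q=u^{-1}$, to $1+u+u^{-1}+1 = 1$, i.e. $u+u^{-1}=1$, hence $u^2+u+1=0$. So $u$ is a primitive cube root of unity, meaning $u \in \gf(4)$. But $u \in \gf(4) \cap \gf(q^2)$, and $\gf(4) \subseteq \gf(q^2)=\gf(2^{2m})$ always; the point is whether such a $u$ lies in $U_{q+1}$. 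Since $u^2+u+1=0$ gives $u^3=1$, we need $3 \mid q+1 = 2^m+1$. Now $2^m \equiv (-1)^m \pmod 3$, so $2^m+1 \equiv 0 \pmod 3$ exactly when $m$ is odd. Thus for $m$ even, $3 \nmid q+1$, so $U_{q+1}$ contains no element of order $3$, and no such $u$ exists — contradiction. This also explains transparently why the hypothesis "$m$ even" is needed and cannot be dropped.

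The main obstacle I anticipate is part (1): unlike part (2), the relation $\sigma_{3,1}=u_4 \in U_{q+1}$ does not immediately reduce to a low-degree equation in a single variable, so the symmetric-function bookkeeping (or, in the trace-based approach, the case analysis ensuring the two pairs genuinely coincide rather than merely having equal trace) needs to be done carefully. I would lean on the trace approach via Lemma \ref{lem:tr=1}, since it reuses machinery already set up in the paper and converts the problem into comparing $\frac{v_2}{1+v_2^2}$ with $\frac{v_3 v_4}{(v_3+v_4)^2} = \frac{v_3 v_4}{(1+v_2)^2}$; establishing that these force $v_2 = v_3 v_4$ (hence, with $v_3 + v_4 = 1 + v_2$, that $\{v_3,v_4\}=\{1,v_2\}$) is the one genuinely non-routine computation. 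Everything else is elementary manipulation with $u^q = u^{-1}$ on $U_{q+1}$ and the elementary fact that $3 \mid 2^m+1$ iff $m$ is odd.
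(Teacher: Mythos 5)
Your first route for part (1) is exactly the paper's argument: clearing denominators in the relation forced by $u_4^{q+1}=1$ (equivalently, by raising $u_1+u_2+u_3+u_4=0$ to the $q$-th power) yields $\sigma_{3,1}\sigma_{3,2}+\sigma_{3,3}=0$, and the step you leave as ``should collapse'' is precisely the factorization $\sigma_{3,1}\sigma_{3,2}+\sigma_{3,3}=(u_1+u_2)(u_2+u_3)(u_3+u_1)$, which holds in characteristic two and gives the contradiction, so the trace-based alternative (which, as written, only compares traces and would still need the sharper observation that $v_3+v_4=a$ forces $v_3v_4=a^{1-q}=v_2$) is unnecessary. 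Part (2) is the paper's argument in normalized form: the paper derives $u_1^2+u_1u_2+u_2^2=0$, hence $u_1^3=u_2^3$, and concludes from $\gcd(3,2^m+1)=1$ for $m$ even, which is the same divisibility fact $3\mid 2^m+1\iff m$ odd that you invoke via the primitive cube root of unity.
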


\begin{proof}
Assume that $u_1+u_2+u_3+u_4=0$.  We have then
\begin{align*}
\frac{1}{u_1}+\frac{1}{u_2}+\frac{1}{u_3}+\frac{1}{u_4}=(u_1+u_2+u_3+u_4)^q=0.
\end{align*}
It follows from $u_4=u_1+u_2+u_3$ that 
\begin{align*}
\frac{1}{u_1}+\frac{1}{u_2}+\frac{1}{u_3}+\frac{1}{u_1+u_2+u_3}=0. 
\end{align*}
Multiplying  both sides of the previous equation by $u_1u_2u_3(u_1+u_2+u_3)$ yields
\begin{align*}
(u_1+u_2+u_3)(u_1u_2+u_2u_3+u_3u_1)+u_1u_2u_3=0,
\end{align*}
which is the same as
\begin{align*}
(u_1+u_2)(u_2+u_3)(u_3+u_1)=0,
\end{align*}
which  is contrary to our assumption that $u_1,u_2, u_3$ are pairwise distinct.
Thus, $u_1+u_2+u_3+u_4\neq 0$.

Let $q=2^m$ with $m$ even. Assume that $u_1+u_2+u_3=0$.
Then $\frac{1}{u_1+u_2}=\frac{1}{u_3}=\frac{1}{u_1}+\frac{1}{u_2}=\frac{u_1+u_2}{u_1u_2}$.
We then  have $u_1^2+u_1u_2+u_2^2=0$. Thus, $u_1^3=u_2^3$. Since $m$ is even , $\gcd(3, 2^m+1)=1$. 
It then follows from $u_1^3=u_2^3$ that $u_1=u_2$,  which  is contrary to our assumption that $u_1\neq u_2$.
This completes the proof.
\end{proof}

\begin{lemma}\label{lem:sigmasigma}
Let $\sigma_{3,1}, \sigma_{3,2}, \sigma_{3,3}$ be the ESPs given by (\ref{eq:esp}) with $\{u_1,u_2,u_3\} \in  \binom{U_{q+1}}{3}$. Then
\begin{enumerate}
\item  $\sigma_{3,1}\sigma_{3,2}+\sigma_{3,3}=(u_1+u_2)(u_2+u_3)(u_3+u_1)$.
\item $\sigma_{3,1}\sigma_{3,2}+\sigma_{3,3}\neq 0$.
\item $\sigma_{3,2}^2+\sigma_{3,1}\sigma_{3,3}= \sigma_{3,3}^2\left (\sigma_{3,1}^2+ \sigma_{3,2} \right )^q$.
\end{enumerate}
\end{lemma}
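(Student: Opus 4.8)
\textbf{Proof proposal for Lemma \ref{lem:sigmasigma}.}

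The plan is to establish all three identities by direct symbolic manipulation in the polynomial ring, exploiting the defining relation $u_i^{q+1}=1$ (equivalently $u_i^q = u_i^{-1}$) for each $u_i \in U_{q+1}$. Throughout I abbreviate $\sigma_i := \sigma_{3,i}$ and recall $\sigma_1 = u_1+u_2+u_3$, $\sigma_2 = u_1u_2+u_2u_3+u_3u_1$, $\sigma_3 = u_1u_2u_3$.

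For part (1), I would expand $(u_1+u_2)(u_2+u_3)(u_3+u_1)$ directly; in characteristic $2$ this equals $\sigma_1\sigma_2 + \sigma_3$, a classical symmetric-function identity. Concretely, multiplying out and collecting terms, the coefficient of each monomial of the form $u_i^2 u_j$ is $1$ and the coefficient of $u_1u_2u_3$ is $3 \equiv 1 \pmod 2$, while $\sigma_1\sigma_2 = \sum_{i\ne j} u_i^2 u_j + 3u_1u_2u_3$, so the difference is exactly $\sigma_3$ and the claim follows. Part (2) is then immediate: since $u_1,u_2,u_3$ are pairwise distinct elements of $U_{q+1}$, each factor $u_i+u_j$ is nonzero, hence the product is nonzero, giving $\sigma_1\sigma_2 + \sigma_3 \ne 0$.

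For part (3), I would compute $(\sigma_1^2 + \sigma_2)^q$ using $u_i^q = u_i^{-1}$. Since the Frobenius map $x \mapsto x^q$ is a field automorphism of characteristic $2$, we get $(\sigma_1^2+\sigma_2)^q = (\sigma_1^q)^2 + \sigma_2^q = (u_1^{-1}+u_2^{-1}+u_3^{-1})^2 + (u_1^{-1}u_2^{-1}+u_2^{-1}u_3^{-1}+u_3^{-1}u_1^{-1})$. Now $u_1^{-1}+u_2^{-1}+u_3^{-1} = \sigma_2/\sigma_3$ and $u_1^{-1}u_2^{-1}+u_2^{-1}u_3^{-1}+u_3^{-1}u_1^{-1} = \sigma_1/\sigma_3$. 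Hence $(\sigma_1^2+\sigma_2)^q = \sigma_2^2/\sigma_3^2 + \sigma_1/\sigma_3 = (\sigma_2^2 + \sigma_1\sigma_3)/\sigma_3^2$, and multiplying through by $\sigma_3^2$ yields $\sigma_3^2(\sigma_1^2+\sigma_2)^q = \sigma_2^2 + \sigma_1\sigma_3$, which is exactly the asserted identity (note $\sigma_3 \ne 0$ since no $u_i$ is zero, so the manipulation is legitimate).

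I do not anticipate a genuine obstacle here: all three parts are formal consequences of the symmetric-function dictionary together with the single structural fact $u^q = u^{-1}$ on $U_{q+1}$ and the pairwise distinctness of the $u_i$. The only point requiring a little care is bookkeeping the characteristic-$2$ binomial coefficients correctly in part (1) (so that $3u_1u_2u_3$ survives as $u_1u_2u_3$) and justifying that Frobenius distributes over the elementary symmetric polynomials in part (3), which is just additivity and multiplicativity of $x \mapsto x^q$. These identities are precisely the algebraic engine needed later to control when $\sigma_{5,2}$ and $\sigma_{6,3}$ vanish, so stating them cleanly now pays off in the proofs of Theorems \ref{thm:esp4-design}, \ref{thm:espsteiner}, and \ref{thm:esp3-design}.
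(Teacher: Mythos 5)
Your verification is correct and complete; the paper itself simply declares these identities ``straightforward'' and omits the proof, so your direct computation (expanding the product for parts (1)--(2) and applying $u^q=u^{-1}$ together with the Newton-type inversions $\sigma_{3,1}^q=\sigma_{3,2}/\sigma_{3,3}$, $\sigma_{3,2}^q=\sigma_{3,1}/\sigma_{3,3}$ for part (3)) is exactly the intended argument. One small bookkeeping slip in part (1): the coefficient of $u_1u_2u_3$ in the expansion of $(u_1+u_2)(u_2+u_3)(u_3+u_1)$ is $2\equiv 0\pmod 2$, not $3\equiv 1$ (the coefficient $3$ belongs to $\sigma_{3,1}\sigma_{3,2}$); with that correction the difference of the two expressions is indeed $u_1u_2u_3=\sigma_{3,3}$ and the identity stands as you state it.
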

\begin{proof}
The proofs are straightforward and omitted.  
\end{proof}

\begin{lemma}\label{lem:sigma3not=0}
Let $q=2^m$ with $m$ even. Let $\sigma_{3,1}, \sigma_{3,2}, \sigma_{3,3}$ be the ESPs given by (\ref{eq:esp}) with 
$\{u_1,u_2,u_3\} \in  \binom{U_{q+1}}{3}$. Then
\begin{enumerate}
\item $\sigma_{3,1}^2+ \sigma_{3,2}\neq 0$; and 
\item $\sigma_{3,2}^2+\sigma_{3,1}\sigma_{3,3} \neq 0$.
\end{enumerate}
\end{lemma}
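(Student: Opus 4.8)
The plan is to reduce both statements to the non-vanishing facts already established in Lemmas \ref{lem:sigma41} and \ref{lem:sigmasigma}, exploiting the identity in part (3) of Lemma \ref{lem:sigmasigma}. For part (1), I would argue by contradiction: suppose $\sigma_{3,1}^2 + \sigma_{3,2} = 0$. Writing this out as $(u_1+u_2+u_3)^2 = u_1u_2 + u_2u_3 + u_3u_1$ and expanding the left side in characteristic two gives $u_1^2 + u_2^2 + u_3^2 = u_1u_2 + u_2u_3 + u_3u_1$, i.e. $u_1^2 + u_2^2 + u_3^2 + u_1u_2 + u_2u_3 + u_3u_1 = 0$. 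I expect this to factor in a way that forces a coincidence among the $u_i$, or alternatively one can apply the Frobenius $x \mapsto x^q$ (which inverts each $u_i$ since $u_i \in U_{q+1}$) to get a companion relation among the $1/u_i$, clear denominators, and compare with the original — the combination should collapse to a product of differences $(u_i + u_j)$, contradicting that $u_1, u_2, u_3$ are pairwise distinct. It is also worth checking whether this first part is essentially a restatement of the $u_1 + u_2 + u_3 \neq 0$ computation in Lemma \ref{lem:sigma41}(2), since the relation $u_1^2 + u_1u_2 + u_2^2 = 0$ appearing there is structurally the same kind of obstruction; the condition $\gcd(3, 2^m+1) = 1$ for $m$ even is likely the key arithmetic input here as well.

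For part (2), the identity $\sigma_{3,2}^2 + \sigma_{3,1}\sigma_{3,3} = \sigma_{3,3}^2(\sigma_{3,1}^2 + \sigma_{3,2})^q$ from Lemma \ref{lem:sigmasigma}(3) does almost all the work: the right-hand side is a product of three factors, so it vanishes if and only if $\sigma_{3,3} = 0$ or $\sigma_{3,1}^2 + \sigma_{3,2} = 0$. But $\sigma_{3,3} = u_1u_2u_3 \neq 0$ since each $u_i \in U_{q+1} \subseteq \gf(q^2)^*$ is nonzero, and $\sigma_{3,1}^2 + \sigma_{3,2} \neq 0$ by part (1) just proved. Hence the right-hand side is nonzero, so $\sigma_{3,2}^2 + \sigma_{3,1}\sigma_{3,3} \neq 0$, and part (2) follows immediately.

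The main obstacle is therefore entirely concentrated in part (1): establishing $\sigma_{3,1}^2 + \sigma_{3,2} \neq 0$ cleanly. The danger is that the naive expansion $u_1^2 + u_2^2 + u_3^2 + u_1u_2 + u_2u_3 + u_3u_1 = 0$ does not factor over $\gf(q^2)$ in an obvious way, so I would instead pair it with its Frobenius image. Since $u_i^q = u_i^{-1}$, applying $q$-th powering to the hypothesis yields $u_1^{-2} + u_2^{-2} + u_3^{-2} + (u_1u_2)^{-1} + (u_2u_3)^{-1} + (u_3u_1)^{-1} = 0$; multiplying through by $(u_1u_2u_3)^2 = \sigma_{3,3}^2$ converts this into a polynomial relation which, combined with the original, should force $u_i = u_j$ for some $i \neq j$ — possibly after again invoking $\gcd(3, 2^m+1) = 1$. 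I would carry out exactly this symmetrization and expect it to close the argument; if it does not factor directly, the fallback is to set $u = u_1/u_3$, $v = u_2/u_3$ and reduce to a two-variable relation on $U_{q+1}$, then use that the only solutions of $u^2 + uv + v^2 = 0$ with $u, v$ cube roots of each other collapse when $3 \nmid q+1$.
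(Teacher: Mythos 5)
Your part (2) is exactly the paper's argument: combine the identity $\sigma_{3,2}^2+\sigma_{3,1}\sigma_{3,3}=\sigma_{3,3}^2(\sigma_{3,1}^2+\sigma_{3,2})^q$ from Lemma \ref{lem:sigmasigma}(3) with $\sigma_{3,3}=u_1u_2u_3\neq 0$ and part (1); that piece is complete. The weight of the lemma is entirely in part (1), and there your write-up does not close the argument. Your first route -- hoping that $u_1^2+u_2^2+u_3^2+u_1u_2+u_2u_3+u_3u_1$ ``collapses to a product of differences $(u_i+u_j)$'' -- fails: over $\gf(4)\subseteq\gf(q)$ (recall $m$ is even) this quadratic form factors as $(u_1+\omega u_2+\omega^2 u_3)(u_1+\omega^2 u_2+\omega u_3)$ with $\omega$ a primitive cube root of unity, not as a product of differences of the $u_i$, and the vanishing of one of these factors is not an immediate contradiction because $\omega u_2,\omega^2u_3\notin U_{q+1}$ (indeed $\omega^{q+1}=\omega^2\neq 1$). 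So that branch is a dead end as described.

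Your Frobenius branch, by contrast, does work and is genuinely different from the paper's proof, but you leave the decisive computation at ``should force $u_i=u_j$.'' Carried out: raising $\sigma_{3,1}^2+\sigma_{3,2}=0$ to the $q$-th power and clearing $\sigma_{3,3}^2$ gives $\sigma_{3,2}^2+\sigma_{3,1}\sigma_{3,3}=0$; substituting $\sigma_{3,2}=\sigma_{3,1}^2$ yields $\sigma_{3,1}(\sigma_{3,1}^3+\sigma_{3,3})=0$, and since $\sigma_{3,1}\neq 0$ for $m$ even by Lemma \ref{lem:sigma41}(2), one gets $\sigma_{3,2}=\sigma_{3,1}^2$ and $\sigma_{3,3}=\sigma_{3,1}^3$, whence $(x+u_1)(x+u_2)(x+u_3)=x^3+\sigma_{3,1}x^2+\sigma_{3,1}^2x+\sigma_{3,1}^3=(x+\sigma_{3,1})^3$ in characteristic two, forcing $u_1=u_2=u_3$ -- a contradiction, with no use of $\gcd(3,2^m+1)=1$ beyond what is already inside Lemma \ref{lem:sigma41}(2). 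The paper instead multiplies the hypothesis by $\sigma_{3,1}$ to obtain $u_1^3+u_2^3+u_3^3+u_1u_2u_3=0$ and then invokes Lemma \ref{lem:sigma41}(1) (no four distinct elements of $U_{q+1}$ sum to zero) together with the injectivity of cubing on $U_{q+1}$ for $m$ even. Either finishing step is fine, but as submitted your part (1) is a plan with an unverified key step, not a proof.
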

\begin{proof}
Assume that $\sigma_{3,1}^2+ \sigma_{3,2}= 0$, that is
\begin{align*}
u_1^2+u_2^2+u_3^2+u_1u_2+u_2u_3+u_3u_1=0.
\end{align*}
Multiplying both sides of previous equation by $u_1+u_2+u_3$ yields
\begin{align*}
u_1^3+u_2^3+u_3^3+u_1u_2u_3=0.
\end{align*}
It then follows that $\#\{u_1^3,u_2^3,u_3^3,u_1u_2u_3\}=3$ from Lemma \ref{lem:sigma41}, which is contrary to the  assumption that $m$ is even.
Combining Part 1 and  Lemma \ref{lem:sigmasigma} gives Part 2. 
This completes the proof.
\end{proof}

\begin{lemma}\label{lem:u4+u5}
 Let $u_j \in U_{q+1}$ such that $\sigma_{5,2}=0$, where $j \in \{1,2,3,4,5\}$.
Then
\begin{eqnarray*}
\left\{ 
\begin{array}{ll}
(\sigma_{3,1}^2+ \sigma_{3,2})(u_4+u_5)&=\sigma_{3,1}\sigma_{3,2}+\sigma_{3,3}, \\
(\sigma_{3,1}^2+ \sigma_{3,2})u_4u_5&=\sigma_{3,2}^2+\sigma_{3,1}\sigma_{3,3},  
\end{array}
\right. 
\end{eqnarray*} 
where $\sigma_{3,1}, \sigma_{3,2}, \sigma_{3,3}$ and $\sigma_{5,2}$ are the ESPs given by (\ref{eq:esp}). 
\end{lemma}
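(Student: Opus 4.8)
The plan is to treat $u_4$ and $u_5$ as the two roots of a quadratic whose coefficients are expressed in terms of the remaining symmetric data, and to read off those coefficients from the two equations $\sigma_{5,1}$ (not needed) and $\sigma_{5,2}=0$ after eliminating one of the two unknowns. Concretely, write $s_1=\sigma_{3,1}$, $s_2=\sigma_{3,2}$, $s_3=\sigma_{3,3}$ for the ESPs in $u_1,u_2,u_3$, and set $p=u_4+u_5$, $r=u_4u_5$. I first expand $\sigma_{5,2}(u_1,\dots,u_5)$ by separating which of the five indices fall in $\{1,2,3\}$ versus $\{4,5\}$: the degree-$2$ monomials split into those with both indices in $\{1,2,3\}$ (contributing $s_2$), those with one index in $\{1,2,3\}$ and one in $\{4,5\}$ (contributing $s_1\cdot p$), and the single monomial $u_4u_5=r$. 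Hence $\sigma_{5,2}=s_2+s_1 p+r$, so the hypothesis $\sigma_{5,2}=0$ gives the first relation
\begin{align*}
r=s_1 p+s_2 .
\end{align*}

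To get a second independent relation I exploit the group structure of $U_{q+1}$: raising to the $q$-th power inverts every $u_j$, and since $u_j^{q+1}=1$ we have $u_j^{-1}=u_j^{q}$, so applying the Frobenius $x\mapsto x^q$ to $\sigma_{5,2}=0$ produces the statement that the degree-$2$ ESP of the inverses $u_1^{-1},\dots,u_5^{-1}$ also vanishes. The degree-$\ell$ ESP of the inverses equals $\sigma_{5,5-\ell}/\sigma_{5,5}$, so $\sigma_{5,3}=0$ as well (using $\sigma_{5,5}\neq 0$). Now I expand $\sigma_{5,3}$ by the same split-by-blocks bookkeeping: three indices in $\{1,2,3\}$ gives $s_3$; two in $\{1,2,3\}$ and one in $\{4,5\}$ gives $s_2\cdot p$; one in $\{1,2,3\}$ and two in $\{4,5\}$ gives $s_1\cdot r$. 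Therefore $\sigma_{5,3}=s_3+s_2 p+s_1 r=0$. Substituting $r=s_1 p+s_2$ from the first relation yields $s_3+s_2 p+s_1(s_1 p+s_2)=0$, i.e. $(s_1^2+s_2)p=s_1 s_2+s_3$, which is exactly the first asserted identity. Feeding this back, $(s_1^2+s_2)r=(s_1^2+s_2)(s_1 p+s_2)=s_1(s_1 s_2+s_3)+s_2(s_1^2+s_2)=s_1 s_3+s_2^2$, which is the second asserted identity; here I use Lemma \ref{lem:sigmasigma}(1) only implicitly, and no division is performed, so the identities hold unconditionally as stated (the factor $\sigma_{3,1}^2+\sigma_{3,2}$ is known to be nonzero in the relevant even-$m$ setting by Lemma \ref{lem:sigma3not=0}, but the lemma itself is just the two polynomial equations).

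The only genuinely delicate point is the passage $\sigma_{5,2}=0\Rightarrow\sigma_{5,3}=0$: one must be sure that $u_j^{q}=u_j^{-1}$ for $u_j\in U_{q+1}$ (immediate from $u_j^{q+1}=1$), that Frobenius is additive so it sends the vanishing of a polynomial expression over $\gf(q^2)$ to the vanishing of its conjugate, and that the ESPs of a tuple and of its reciprocals are related by $e_\ell(u_1^{-1},\dots,u_5^{-1})=e_{5-\ell}(u_1,\dots,u_5)/e_5(u_1,\dots,u_5)$ with $e_5=\sigma_{5,5}=u_1u_2u_3u_4u_5\neq 0$. Everything else is the routine block-splitting expansion of $\sigma_{5,2}$ and $\sigma_{5,3}$ together with one substitution, so I would state those expansions, do the one substitution, and be done.
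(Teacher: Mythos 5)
Your proof is correct and follows essentially the same route as the paper: expand $\sigma_{5,2}=0$ to get $u_4u_5+\sigma_{3,1}(u_4+u_5)+\sigma_{3,2}=0$, apply the $q$-th power (using $u^q=u^{-1}$ on $U_{q+1}$) to obtain the companion relation $\sigma_{3,1}u_4u_5+\sigma_{3,2}(u_4+u_5)+\sigma_{3,3}=0$, and eliminate. Your packaging of the second step as ``$\sigma_{5,3}=0$ via the reciprocal-ESP identity'' is only a cosmetic rephrasing of the paper's direct conjugation of the expanded equation.
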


\begin{proof}
Let us observe first that
\begin{align}\label{eq:u4+u5}
u_4u_5+\sigma_{3,1} (u_4+u_5)+\sigma_{3,2}=0.
\end{align}
Raising to the $q$-th power both sides of Equation (\ref{eq:u4+u5})  yields
\begin{align*}
u_4^{-1}u_5^{-1}+\sigma_{3,1}^q(u_4^{-1}+u_{5}^{-1})+\sigma_{3,2}^{q}=0,
\end{align*}
which is the same as
\begin{align}\label{eq:u4+u5--2}
\sigma_{3,1} u_4 u_5 +\sigma_{3,2}(u_4+u_5)+\sigma_{3,3}=0.
\end{align}
The desired conclusion then follows from Equations (\ref{eq:u4+u5}) and (\ref{eq:u4+u5--2}). This completes the proof.
\end{proof}

\begin{lemma}\label{lem:one-one}
Let $q=2^m$ with $m$ even and $\{u_1,u_2,u_3,u_4,u_5,u_6\} \in \cB^{0}_{\sigma_{6,3}, q+1}$.
Let $A$ and $A'$ be two $5$-subsets of $\{u_1,u_2,u_3,u_4,u_5,u_6\}$ such that
$A,A' \in \cB_{\sigma_{5,2}, q+1}$. Then $A=A'$.
\end{lemma}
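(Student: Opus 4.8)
The plan is to argue by contradiction, assuming $A\neq A'$. Since $A$ and $A'$ are two distinct $5$-subsets of the $6$-set $\{u_1,u_2,u_3,u_4,u_5,u_6\}$, their intersection $A\cap A'$ has exactly four elements and $A\cup A'=\{u_1,\dots,u_6\}$. After relabelling the points, I may assume $A=\{u_1,u_2,u_3,u_4,u_5\}$ and $A'=\{u_1,u_2,u_3,u_4,u_6\}$, so that the $3$-subset $\{u_1,u_2,u_3\}$ is contained in both $A$ and $A'$. All six points are pairwise distinct, since $\{u_1,\dots,u_6\}$ is a block of $\cB^{0}_{\sigma_{6,3},q+1}\subseteq\binom{U_{q+1}}{6}$.

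Let $\sigma_{3,1},\sigma_{3,2},\sigma_{3,3}$ be the ESPs of $u_1,u_2,u_3$. Because $\sigma_{5,2}(A)=0$, I apply Lemma \ref{lem:u4+u5} to $A$ with the decomposition $\{u_1,u_2,u_3\}\cup\{u_4,u_5\}$ to obtain
\[
(\sigma_{3,1}^2+\sigma_{3,2})(u_4+u_5)=\sigma_{3,1}\sigma_{3,2}+\sigma_{3,3}.
\]
Likewise, because $\sigma_{5,2}(A')=0$, applying Lemma \ref{lem:u4+u5} to $A'$ with the decomposition $\{u_1,u_2,u_3\}\cup\{u_4,u_6\}$ gives
\[
(\sigma_{3,1}^2+\sigma_{3,2})(u_4+u_6)=\sigma_{3,1}\sigma_{3,2}+\sigma_{3,3},
\]
with exactly the same $\sigma_{3,1},\sigma_{3,2},\sigma_{3,3}$ appearing, since the common $3$-subset in both decompositions is $\{u_1,u_2,u_3\}$.

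Now I invoke that $m$ is even: by Part 1 of Lemma \ref{lem:sigma3not=0}, $\sigma_{3,1}^2+\sigma_{3,2}\neq 0$. Dividing the two displayed identities by this nonzero element yields $u_4+u_5=u_4+u_6$, hence $u_5=u_6$, which contradicts the pairwise distinctness of $u_1,\dots,u_6$. Therefore $A=A'$, as claimed. The proof is almost entirely bookkeeping; the only point that needs care is to invoke Lemma \ref{lem:u4+u5} with a $3$-part/$2$-part split that is simultaneously valid for $A$ and for $A'$, which is possible precisely because $|A\cap A'|=4$. After that, the non-vanishing supplied by Lemma \ref{lem:sigma3not=0} (which is where the hypothesis $m$ even enters) closes the argument. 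I note in passing that the condition $\sigma_{6,3}(u_1,\dots,u_6)=0$ is not actually used: only the distinctness of the six points and the two vanishing conditions $\sigma_{5,2}(A)=\sigma_{5,2}(A')=0$ are needed.
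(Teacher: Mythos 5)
Your proof is correct and follows essentially the same route as the paper's: reduce by symmetry to $A=\{u_1,\dots,u_5\}$, $A'=\{u_1,\dots,u_4,u_6\}$, apply Lemma \ref{lem:u4+u5} to both five-sets with the common split off $\{u_1,u_2,u_3\}$, and cancel the nonzero factor $\sigma_{3,1}^2+\sigma_{3,2}$ via Lemma \ref{lem:sigma3not=0} to force $u_5=u_6$. Your side remark that $\sigma_{6,3}=0$ is never used is accurate and consistent with the paper's argument.
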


\begin{proof} 
Suppose that $A \neq A'$. Due to symmetry, let $A=\{u_1, u_2, u_3, u_4, u_5\} \in \cB_{\sigma_{5,2}, q+1}$ 
and $A'=\{u_1, u_2, u_3, u_4, u_6\} \in \cB_{\sigma_{5,2}, q+1}$. It then follows from Lemma  \ref{lem:u4+u5} that 
$$ 
(\sigma_{3,1}^2+ \sigma_{3,2})(u_4+u_5) =\sigma_{3,1}\sigma_{3,2}+\sigma_{3,3}=(\sigma_{3,1}^2+ \sigma_{3,2})(u_4+u_6), 
$$ 
which gives 
$$ 
(\sigma_{3,1}^2+ \sigma_{3,2})(u_5+u_6)=0.  
$$ 
It then follows from Lemma \ref{lem:sigma3not=0} that $u_5+u_6=0$, which is contrary to the assumption that 
$u_5 \neq u_6$. 
\end{proof}

The following follows immediately from Lemmas \ref{lem:sigmasigma},  \ref{lem:sigma3not=0}, and \ref{lem:u4+u5}.

\begin{lemma}\label{lem:all-not-0}
 Let $\{u_1, u_2, u_3\} \in \binom{U_{q+1}}{3}$ and $u_4,u_5 \in U_{q+1}$ such that $\sigma_{5,2}=0$.
Then none of 
$\sigma_{3,1}^2+ \sigma_{3,2}, \sigma_{3,1}\sigma_{3,2}+\sigma_{3,3}$ and $\sigma_{3,2}^2+\sigma_{3,1}\sigma_{3,3}$
equals zero, and $u_4\neq u_5$.
\end{lemma}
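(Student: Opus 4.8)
The plan is to derive everything from Lemmas \ref{lem:sigmasigma}, \ref{lem:sigma3not=0}, and \ref{lem:u4+u5} exactly as the statement advertises. We are given a $3$-subset $\{u_1,u_2,u_3\} \in \binom{U_{q+1}}{3}$ together with $u_4,u_5 \in U_{q+1}$ satisfying $\sigma_{5,2}(u_1,\dots,u_5)=0$. First I would apply Lemma \ref{lem:sigma3not=0}: since $q=2^m$ with $m$ even (the hypothesis $q=2^m$ with $m$ even is inherited from the context of these lemmas), we get immediately that $\sigma_{3,1}^2+\sigma_{3,2} \neq 0$ and $\sigma_{3,2}^2+\sigma_{3,1}\sigma_{3,3} \neq 0$. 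Second, I would apply Part 2 of Lemma \ref{lem:sigmasigma}, which states $\sigma_{3,1}\sigma_{3,2}+\sigma_{3,3} = (u_1+u_2)(u_2+u_3)(u_3+u_1) \neq 0$ because $u_1,u_2,u_3$ are pairwise distinct. This already disposes of the first claim: none of the three listed quantities vanishes.

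For the remaining claim, $u_4 \neq u_5$, I would use Lemma \ref{lem:u4+u5}, which gives
\[
(\sigma_{3,1}^2+\sigma_{3,2})\, u_4 u_5 = \sigma_{3,2}^2+\sigma_{3,1}\sigma_{3,3}.
\]
If we had $u_4 = u_5$, then since $u_4 \in U_{q+1} \subseteq \gf(q^2)^*$ we would have $u_4 u_5 = u_4^2 \neq 0$. Because $\sigma_{3,1}^2+\sigma_{3,2} \neq 0$ (just established), the left-hand side would be nonzero, which is consistent; so instead I would look at the other equation of Lemma \ref{lem:u4+u5}, namely $(\sigma_{3,1}^2+\sigma_{3,2})(u_4+u_5) = \sigma_{3,1}\sigma_{3,2}+\sigma_{3,3}$. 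In characteristic two, $u_4 = u_5$ forces $u_4+u_5 = 0$, so the left-hand side is $0$, whence $\sigma_{3,1}\sigma_{3,2}+\sigma_{3,3}=0$, contradicting Part 2 of Lemma \ref{lem:sigmasigma}. Hence $u_4 \neq u_5$.

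There is essentially no obstacle here: the lemma is a bookkeeping corollary, and the only thing to be careful about is citing the right hypothesis ($m$ even) so that Lemma \ref{lem:sigma3not=0} applies, and noting that the characteristic-two identity $u_4+u_5 = 0 \iff u_4 = u_5$ is what converts the additive relation of Lemma \ref{lem:u4+u5} into the desired conclusion. I would present the argument in three short lines: (i) quote Lemma \ref{lem:sigma3not=0} for the two quadratic-type expressions, (ii) quote Part 2 of Lemma \ref{lem:sigmasigma} for $\sigma_{3,1}\sigma_{3,2}+\sigma_{3,3}$, and (iii) combine the additive identity of Lemma \ref{lem:u4+u5} with $\sigma_{3,1}\sigma_{3,2}+\sigma_{3,3}\neq 0$ to rule out $u_4 = u_5$.
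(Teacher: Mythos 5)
Your argument for $u_4\neq u_5$ is sound and works for every $m$: the first identity of Lemma \ref{lem:u4+u5} together with Lemma \ref{lem:sigmasigma} gives $(\sigma_{3,1}^2+\sigma_{3,2})(u_4+u_5)=(u_1+u_2)(u_2+u_3)(u_3+u_1)\neq 0$, so $u_4+u_5$ cannot vanish. The gap is in the first half. You obtain $\sigma_{3,1}^2+\sigma_{3,2}\neq 0$ and $\sigma_{3,2}^2+\sigma_{3,1}\sigma_{3,3}\neq 0$ by invoking Lemma \ref{lem:sigma3not=0}, which is stated (and true) only for $m$ even, and you justify this by saying the parity hypothesis is ``inherited from the context.'' It is not: Lemma \ref{lem:all-not-0} carries no parity assumption, and it is applied with $m$ odd in the derivation of Lemma \ref{lem:B52odd} --- precisely the case where Lemma \ref{lem:sigma3not=0} is unavailable. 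Nor is the concern vacuous for odd $m$: since $3\mid 2^m+1$, taking $\{u_1,u_2,u_3\}$ to be the three cube roots of unity in $U_{q+1}$ gives $\sigma_{3,1}=\sigma_{3,2}=0$, so $\sigma_{3,1}^2+\sigma_{3,2}$ really can vanish on $\binom{U_{q+1}}{3}$; the content of the lemma is that the hypothesis $\sigma_{5,2}=0$ excludes such triples.

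The repair is short and uses only pieces you already have in hand. The same identity you use to rule out $u_4=u_5$ shows that if $\sigma_{3,1}^2+\sigma_{3,2}=0$ then its right-hand side $\sigma_{3,1}\sigma_{3,2}+\sigma_{3,3}$ would be zero, contradicting Lemma \ref{lem:sigmasigma}; hence $\sigma_{3,1}^2+\sigma_{3,2}\neq 0$ with no parity assumption. Then Part 3 of Lemma \ref{lem:sigmasigma}, namely $\sigma_{3,2}^2+\sigma_{3,1}\sigma_{3,3}=\sigma_{3,3}^2\left(\sigma_{3,1}^2+\sigma_{3,2}\right)^q$ with $\sigma_{3,3}=u_1u_2u_3\neq 0$, delivers the last non-vanishing. (A minor slip: the identity $\sigma_{3,1}\sigma_{3,2}+\sigma_{3,3}=(u_1+u_2)(u_2+u_3)(u_3+u_1)$ is Part 1 of Lemma \ref{lem:sigmasigma}; Part 2 is the resulting non-vanishing.)
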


\begin{lemma}\label{lem:tr(b/a2)}
Let $q=2^m$ and $\{u_1, u_2, u_3\} \in \binom{U_{q+1}}{3}$ such that 
$(\sigma_{3,1}^2+ \sigma_{3,2})(\sigma_{3,1}\sigma_{3,2}+\sigma_{3,3})(\sigma_{3,2}^2+\sigma_{3,1}\sigma_{3,3})\neq  0$.
 Put $a=\frac{\sigma_{3,1}\sigma_{3,2}+\sigma_{3,3}}{\sigma_{3,1}^2+ \sigma_{3,2}}$ 
and $b=\frac{\sigma_{3,2}^2+\sigma_{3,1}\sigma_{3,3}}{\sigma_{3,1}^2+ \sigma_{3,2}}$. Then
$b\in U_{q+1}, \frac{b}{a^2} \in \gf(q)$ and $\tr_{q/2}\left ( \frac{b}{a^2} \right )\equiv 1+ m \pmod 2$.
\end{lemma}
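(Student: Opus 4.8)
The plan is to reduce everything to the quadratic root-counting criterion of Lemma~\ref{lem:quadeq} applied to a carefully chosen polynomial $T^2 + aT + b$, where $a$ and $b$ are the quantities named in the statement. First I would record the non-vanishing hypothesis: since $(\sigma_{3,1}^2+\sigma_{3,2})(\sigma_{3,1}\sigma_{3,2}+\sigma_{3,3})(\sigma_{3,2}^2+\sigma_{3,1}\sigma_{3,3})\neq 0$, the quantities $a$ and $b$ are well-defined and nonzero. The key observation, available from Lemma~\ref{lem:u4+u5} read in reverse, is that $a$ and $b$ are exactly the "sum" and "product" of a putative pair $u_4,u_5\in U_{q+1}$ completing $\{u_1,u_2,u_3\}$ to a block of $\cB_{\sigma_{5,2},q+1}$; concretely, one checks that $a = u_4+u_5$ and $b = u_4u_5$ are forced by the two linear equations $(\sigma_{3,1}^2+\sigma_{3,2})(u_4+u_5)=\sigma_{3,1}\sigma_{3,2}+\sigma_{3,3}$ and $(\sigma_{3,1}^2+\sigma_{3,2})u_4u_5=\sigma_{3,2}^2+\sigma_{3,1}\sigma_{3,3}$. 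So $u_4,u_5$ are the two roots (in the algebraic closure) of $T^2+aT+b$.

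Next I would establish $b\in U_{q+1}$, i.e. $b^{q+1}=1$, equivalently $b\cdot b^q = 1$. This is a direct computation: applying the Frobenius $x\mapsto x^q$ to the ESPs of elements of $U_{q+1}$ swaps $\sigma_{3,1}\leftrightarrow \sigma_{3,2}/\sigma_{3,3}$ and fixes things up to powers of $\sigma_{3,3}$ (since $u_i^q=u_i^{-1}$), and using part~(3) of Lemma~\ref{lem:sigmasigma}, namely $\sigma_{3,2}^2+\sigma_{3,1}\sigma_{3,3}=\sigma_{3,3}^2(\sigma_{3,1}^2+\sigma_{3,2})^q$, together with the analogous identity for the denominator, one gets $b^q = b^{-1}$. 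Then $\frac{b}{a^2}\in\gf(q)$ follows similarly by checking it is fixed by $x\mapsto x^q$: raising $a$ to the $q$-th power and using Lemma~\ref{lem:sigmasigma} again shows $a^q$ and $b^q$ relate to $a$ and $b$ in a way that makes $(b/a^2)^q = b/a^2$. Alternatively, and perhaps more cleanly, since $b=u_4u_5$ and $a=u_4+u_5$ are symmetric functions of a pair in $U_{q+1}$, one has $a^q = u_4^{-1}+u_5^{-1} = a/b$ and $b^q=1/b$, whence $b/a^2$ is $q$-fixed and $b\in U_{q+1}$ — but this shortcut presumes $u_4,u_5\in U_{q+1}$, which is part of what is in question, so I would prefer the direct algebraic verification via Lemma~\ref{lem:sigmasigma}.

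For the trace condition, the idea is: $u_4,u_5$ are roots of $T^2+aT+b$, so setting $S = u_4/a$ (legitimate since $a\neq 0$), $S$ satisfies $S^2 + S + b/a^2 = 0$. By Lemma~\ref{lem:quadeq}, this equation has its two roots in $\gf(q)$ iff $\tr_{q/2}(b/a^2)=0$, and in $\gf(q^2)\setminus\gf(q)$ iff $\tr_{q/2}(b/a^2)=1$. So I must decide whether $u_4,u_5$ (equivalently $u_4/a$) lie in $\gf(q)$ or not. Here is where the parity of $m$ enters. If $u_4, u_5 \in U_{q+1}$ with $u_4\neq u_5$, then by Lemma~\ref{lem:tr=1} we have $\tr_{q/2}\!\big(\tfrac{u_4u_5}{u_4^2+u_5^2}\big)=1$, and $\tfrac{u_4u_5}{u_4^2+u_5^2} = \tfrac{b}{a^2}$ (since $u_4^2+u_5^2 = (u_4+u_5)^2 = a^2$ in characteristic two), giving $\tr_{q/2}(b/a^2)=1$. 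The remaining issue is that the roots $u_4,u_5$ need not a priori lie in $U_{q+1}$: they lie in $U_{q+1}$ precisely when $\gf(q)(u_4) = \gf(q^2)$, i.e. when $u_4\notin\gf(q)$. I expect the resolution to go as follows. Both cases are possible depending on $m$: one shows that the two roots of $T^2+aT+b$ lie in $\gf(q^2)\setminus\gf(q)$ (hence in $U_{q+1}$, using $b\in U_{q+1}$ together with $b^q=b^{-1}$ forcing $u_4^{q+1}=1$ once $u_4^q\neq u_4$) exactly when $m$ is even, and lie in $\gf(q)$ exactly when $m$ is odd — which is consistent with Theorem~\ref{thm:espsteiner}/Theorem~\ref{thm:esp4-design}, where $\cB_{\sigma_{5,2},q+1}$ is nonempty only for $m$ even. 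Concretely: Lemma~\ref{lem:sigma3not=0} (valid for $m$ even) was used to guarantee $\sigma_{3,1}^2+\sigma_{3,2}\neq0$ unconditionally in the $m$-even case, and the combinatorial reason the pair exists in $U_{q+1}$ only for $m$ even should translate into the trace of $b/a^2$ being $0$ for $m$ odd and $1$ for $m$ even. Packaging both parities uniformly gives $\tr_{q/2}(b/a^2)\equiv 1+m\pmod 2$.

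\textbf{Main obstacle.} The routine parts are the algebraic identities $b^q=b^{-1}$ and $(b/a^2)^q = b/a^2$ via Lemma~\ref{lem:sigmasigma}; these are bookkeeping with the Frobenius action on ESPs. The genuinely delicate step is pinning down the parity dependence — i.e. proving that $\tr_{q/2}(b/a^2)=0$ when $m$ is odd. The clean half ($m$ even $\Rightarrow$ trace $=1$) follows from Lemma~\ref{lem:tr=1} once one knows the roots lie in $U_{q+1}$, but for $m$ odd one cannot invoke Lemma~\ref{lem:tr=1} and must instead argue directly, presumably by showing $a^2 = $ (something whose ratio with $b$ has trace determined by $m$), perhaps by relating $b/a^2$ back to $\tfrac{u_1u_2}{u_1^2+u_2^2}$-type quantities among $u_1,u_2,u_3$ whose traces are all $1$ by Lemma~\ref{lem:tr=1}, and counting how many such terms appear (an odd versus even count flipping with $m$). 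I would expect the author to exploit an identity expressing $b/a^2$ as a sum of three or a related odd number of "trace-one" quantities plus a correction term sensitive to $\gcd(3,q+1)$ — exactly the kind of $m$-parity distinction already seen in Lemma~\ref{lem:sigma41}(2) and Lemma~\ref{lem:sigma3not=0}.
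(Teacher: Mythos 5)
Your treatment of $b\in U_{q+1}$ (via Part 3 of Lemma \ref{lem:sigmasigma}) matches the paper, and your Frobenius bookkeeping for $\frac{b}{a^2}\in\gf(q)$ is a workable alternative to how the paper gets it. The problem is the trace claim, which is the whole content of the lemma. Your primary route — locate the roots $u_4,u_5$ of $T^2+aT+b$ and then read off the trace — is circular: by Lemma \ref{lem:quadeq}, whether those roots lie in $\gf(q)$ or in $\gf(q^2)\setminus\gf(q)$ is \emph{equivalent} to the value of $\tr_{q/2}(b/a^2)$, so you cannot use the location of the roots as input without an independent argument for where they sit. The "consistency with Theorem \ref{thm:espsteiner} / Lemma \ref{lem:B52odd}" you invoke is not available either, since those results are proved downstream \emph{from} this lemma (via Lemma \ref{lem:quadeq-U}). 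So the proposal as written establishes nothing about the parity dependence; it only observes what the answer must be if the rest of the paper is correct.

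Your "main obstacle" paragraph does gesture at the right idea, but the decisive identity is never produced. The paper's proof is a one-line computation once you have
\begin{align*}
\frac{b}{a^2}= \frac{u_1u_2}{(u_1+u_2)^2}+\frac{u_2u_3}{(u_2+u_3)^2}+\frac{u_3u_1}{(u_3+u_1)^2}+1,
\end{align*}
which one verifies by clearing denominators using $\sigma_{3,1}\sigma_{3,2}+\sigma_{3,3}=(u_1+u_2)(u_2+u_3)(u_3+u_1)$ from Lemma \ref{lem:sigmasigma}. Each of the three fractions lies in $\gf(q)$ and has absolute trace $1$ by Lemma \ref{lem:tr=1}, so $\frac{b}{a^2}\in\gf(q)$ and $\tr_{q/2}(b/a^2)=3+\tr_{q/2}(1)\equiv 1+m\pmod 2$. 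Note that the number of trace-one terms is always three and does not flip with $m$; the entire $m$-dependence comes from the constant term, via $\tr_{q/2}(1)\equiv m\pmod 2$ — not from a $\gcd(3,q+1)$-sensitive correction as you speculated. Without exhibiting this (or an equivalent) identity, the proof has a genuine gap precisely at the step you yourself flagged as delicate.
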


\begin{proof}
First, $b\in  U_{q+1}$ follows from Part 3 of Lemma \ref{lem:sigmasigma}.
Next, observe that
\begin{align}\label{eq:12+23+31+0}
\frac{b}{a^2}= \frac{u_1u_2}{(u_1+u_2)^2}+\frac{u_2u_3}{(u_2+u_3)^2}+\frac{u_3u_1}{(u_3+u_1)^2}+1.
\end{align}
The desired conclusion then follows from Lemma  \ref{lem:tr=1} and Equation (\ref{eq:12+23+31+0}).
This completes the proof.
\end{proof}

\begin{lemma}\label{lem:quadeq-U}
Let the notation and assumption be the same as in Lemma \ref{lem:tr(b/a2)}.
Let $f(u)$ be   the  quadratic polynomial $u^2+au+b$. Then we have the following. 
\begin{enumerate}
\item If $m$ is odd, $f$ has no root in $U_{q+1} \setminus \left \{\sqrt{b} \right \}$.
\item If $m$ is even, $f$ has exactly two roots in $U_{q+1}$.
\end{enumerate}
\end{lemma}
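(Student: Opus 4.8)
The plan is to reduce the statement to the quadratic-root criterion of Lemma \ref{lem:quadeq} applied to the polynomial $f(u)=u^2+au+b$, where $a$ and $b$ are as in Lemma \ref{lem:tr(b/a2)}. The first observation is that since $b\in U_{q+1}$ (by Lemma \ref{lem:tr(b/a2)}) and $\sqrt{b}$ makes sense in characteristic two, one checks that $\sqrt{b}\in U_{q+1}$ as well, so that $\sqrt{b}$ is a natural ``trivial'' element to exclude. The heart of the matter is to decide, for a root $u$ of $f$ in $\gf(q^2)$, whether $u\in U_{q+1}$, i.e.\ whether $u^{q+1}=1$. If $u$ and $u'$ are the two roots of $f$ in $\gf(q^2)$, then $uu'=b$ and $u+u'=a$; since $b\in U_{q+1}$ we get $u^{q+1}\cdot (u')^{q+1}=b^{q+1}=1$, so the two roots have reciprocal norms: $(u')^{q+1}=(u^{q+1})^{-1}$. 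Hence either both roots lie in $U_{q+1}$, or neither does, unless the two roots coincide — and they coincide exactly when $a=0$, which is excluded by hypothesis ($\sigma_{3,1}\sigma_{3,2}+\sigma_{3,3}\neq0$ forces $a\neq0$). So $f$ has either $0$ or $2$ roots in $U_{q+1}$, and I must determine which case occurs, and relate it to the parity of $m$.

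To do this I would compute $f(\sqrt{b})$: since $(\sqrt b)^2=b$ we get $f(\sqrt{b})=b+a\sqrt{b}+b=a\sqrt{b}\neq0$ because $a\neq0$ and $b\neq0$. Wait — that shows $\sqrt{b}$ is \emph{not} a root of $f$, so the phrasing ``$U_{q+1}\setminus\{\sqrt b\}$'' in Part 1 must be anticipating a later application where $\sqrt b$ arises as a root of a related polynomial (for the $\sigma_{6,3}$ analysis, where the degree-$6$ condition splits off $\sqrt b$); for the present lemma the key point is simply whether $f$ has roots in $U_{q+1}$ at all. So the real content is: \emph{$f$ has two roots in $U_{q+1}$ if $m$ is even, and no roots in $U_{q+1}$ if $m$ is odd.} I would prove this by the substitution $u=b^{1/2}w$ (legitimate since squaring is a bijection of $\gf(q^2)$), which turns $f(u)=0$ into $bw^2+ab^{1/2}w+b=0$, i.e.\ $w^2+(a b^{-1/2})w+1=0$; writing $c=ab^{-1/2}$ this is $w^2+cw+1=0$, whose roots satisfy $w w'=1$ and $w+w'=c$. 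A root $u=b^{1/2}w$ lies in $U_{q+1}$ iff $u^{q}=u^{-1}$ iff $b^{q/2}w^q=b^{-1/2}w^{-1}$; using $b^{q+1}=1$ (so $b^{q}=b^{-1}$, hence $b^{q/2}=b^{-1/2}$) this becomes $w^q=w^{-1}=w'$, i.e.\ $w\in U_{q+1}$. Thus $f$ has roots in $U_{q+1}$ iff $w^2+cw+1$ has roots in $U_{q+1}$.

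Now the polynomial $w^2+cw+1$ is reciprocal, and the standard device is to set $v=w+w^{-1}=w+w'$ when $w\in U_{q+1}$, equivalently $v=c$; but one must be careful since we do not yet know $w\in\gf(q)$. Instead I would argue directly via Lemma \ref{lem:quadeq} and the trace computation already done in Lemma \ref{lem:tr(b/a2)}: the roots of $f$ in $\gf(q^2)$ lie in $\gf(q)$ iff $\tr_{q/2}(b/a^2)=0$, and lie in $\gf(q^2)\setminus\gf(q)$ iff $\tr_{q/2}(b/a^2)=1$. By Lemma \ref{lem:tr(b/a2)}, $\tr_{q/2}(b/a^2)\equiv 1+m\pmod 2$: so for $m$ even the roots are in $\gf(q^2)\setminus\gf(q)$, while for $m$ odd the roots are in $\gf(q)$. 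To finish, I combine this with the norm argument: a root $u\notin\gf(q)$ has $u^q$ equal to the other root $u'$, so $u^{q+1}=uu'=b$; but $b\in U_{q+1}$ means $b^{q+1}=1$, and moreover for $u\notin\gf(q)$ one checks $u^{q+1}=b$ forces $u^{q+1}\in U_{q+1}$, and in fact $u^{q+1}=b=b^{q(q+1)/(q)}$... here the clean statement is: $u^{q+1}=b$ and, since the map $x\mapsto x^{q+1}$ sends $\gf(q^2)^*$ onto $\gf(q)^*$ with kernel $U_{q+1}$, the two $f$-roots lie in $U_{q+1}$ iff $b=1$ — no, that is too strong. The correct finish uses $a\ne0$: $u\in U_{q+1}$ iff $u^{q+1}=1$ iff (writing $u^q=u'$, the conjugate) $uu'=1$ iff $b=1$ is \emph{not} the criterion; rather, since $a=u+u'$ and $b=uu'$, we have $u\in U_{q+1}\iff u^{q+1}=1$, and $u^{q+1}=u\cdot u^q$; when $u\notin\gf(q)$, $u^q=u'$ and $u^{q+1}=b$, whereas when $u\in\gf(q)$, $u^q=u$ and $u^{q+1}=u^2$, which equals $1$ iff $u=1$. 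So for $m$ odd (roots in $\gf(q)$), $u\in U_{q+1}$ would force $u=1$, and then $f(1)=1+a+b=0$; one shows $1+a+b\ne0$ by expanding $a$ and $b$ in the $u_i$ and invoking Lemma \ref{lem:sigma41} or \ref{lem:sigmasigma}, which gives Part 1. For $m$ even (roots in $\gf(q^2)\setminus\gf(q)$), $u^{q+1}=b$, and since $b\in U_{q+1}$ we need $b=1$; if $b\ne1$ the roots are not in $U_{q+1}$, contradicting what we want, so the remaining subtlety is that I have mis-set-up and should instead normalize. \emph{The main obstacle} — and the step I would spend the most care on — is precisely this norm bookkeeping: getting the substitution $u=b^{1/2}w$ and the identity $b^{q/2}=b^{-1/2}$ correct so that ``$u\in U_{q+1}$'' becomes exactly ``$w$ is fixed-or-swapped appropriately under Frobenius'', and then reading off from the trace parity in Lemma \ref{lem:tr(b/a2)} that $m$ even gives two such $w$ and $m$ odd gives none (with the lone exception $\sqrt b$, which accounts for the set difference in Part 1). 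The rest is the routine Lemma \ref{lem:quadeq} dichotomy plus the non-vanishing facts already established in Lemmas \ref{lem:all-not-0} and \ref{lem:sigma3not=0}.
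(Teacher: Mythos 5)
Your proposal opens with the right device --- the substitution $u=\sqrt{b}\,w$, which is exactly what the paper's proof uses --- but you then abandon it for a direct argument that is not valid, and you never return to complete the normalized one. The flaw in the direct route is that $f(u)=u^2+au+b$ has coefficients in $\gf(q^2)$, not in $\gf(q)$: both $a$ and $b$ are built from $u_1,u_2,u_3\in U_{q+1}$ and in general lie outside $\gf(q)$ (indeed $b\in U_{q+1}$ and $U_{q+1}\cap\gf(q)=\{1\}$). Consequently Lemma \ref{lem:quadeq}, which is a statement about quadratics with coefficients in the \emph{base} field $\gf(2^m)$, cannot be applied to $f$ itself, and the dichotomy you invoke (``the roots of $f$ lie in $\gf(q)$ iff $\tr_{q/2}(b/a^2)=0$, else in $\gf(q^2)\setminus\gf(q)$'') is unjustified; likewise the assertion that a root $u\notin\gf(q)$ has $u^q$ equal to the other root presumes that Frobenius permutes the roots, which again requires $\gf(q)$-coefficients. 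This is precisely what produces the contradiction you run into for $m$ even ($u^{q+1}=b$ ``forces $b=1$''), and for $m$ odd it leads you to the false conclusion that the roots lie in $\gf(q)$ (their product is $b$, which is generally not in $\gf(q)$). You recognize the problem (``I have mis-set-up and should instead normalize'') but the proposal ends there without carrying out the repair, so neither part is actually proved.

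The repair is short and is essentially the paper's entire proof: the normalized polynomial $T^2+(a/\sqrt{b})T+1$ \emph{does} have coefficients in $\gf(q)$ (from $b/a^2\in\gf(q)$ and $b^{q+1}=1$ one gets $a^{q-1}=b^{-1}$, hence $(a/\sqrt{b})^q=a/\sqrt{b}$ --- a point your proposal never verifies and which is what licenses the use of Lemma \ref{lem:quadeq}), with $\tr_{q/2}\left(b/a^2\right)\equiv 1+m\pmod 2$ supplied by Lemma \ref{lem:tr(b/a2)}. For $m$ odd, a root $u\in U_{q+1}\setminus\{\sqrt{b}\}$ of $f$ would give the root $u/\sqrt{b}\in U_{q+1}\setminus\{1\}\subseteq\gf(q^2)\setminus\gf(q)$ of the normalized polynomial, forcing $\tr_{q/2}(b/a^2)=1$, a contradiction; the exclusion of $\sqrt{b}$ is exactly what guarantees $u/\sqrt{b}\neq 1$, not (as you speculate) an anticipation of a later application. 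For $m$ even, the two roots $u',u'^q$ of the normalized polynomial lie in $\gf(q^2)\setminus\gf(q)$ and multiply to $1$, so $u'^{q+1}=1$, i.e.\ $u'\in U_{q+1}$, and then $\sqrt{b}\,u'$ and $\sqrt{b}\,u'^q$ are two distinct roots of $f$ in $U_{q+1}$.
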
 

\begin{proof}
Let $m$ be odd. Assume that there exists an $u\in U_{q+1} \setminus \left \{\sqrt{b} \right \}$ such  that $f(u)=0$.
Then
\begin{align*}
\left (\frac{u}{\sqrt{b}} \right )^2+ \frac{a}{\sqrt{b}} \left (\frac{u}{\sqrt{b}} \right )+1=0. 
\end{align*}
From Lemma \ref{lem:quadeq} and $\frac{u}{\sqrt{b}} \in U_{q+1} \setminus \{1\} \subseteq \gf(q^2) \setminus \gf(q)$,
$\tr_{q/2}\left ( \frac{b}{a^2} \right )=1$, which is contrary to the result of Lemma \ref{lem:tr(b/a2)}.

Let  $m$ be even. From Lemmas \ref{lem:quadeq} and  \ref{lem:tr(b/a2)}, there  exists $u'\in \gf(q^2) \setminus  \gf(q)$
such that $u', u'^q$ are exactly  the two solutions   of the 
quadratic equation $T^2+\frac{a}{\sqrt{b}} T+1=0$. 
It's easily checked that $u_4=\sqrt{b}u'$ and $u_5=\sqrt{b}u'^q$ are the two roots  of $f$.
The desired conclusion then  follows from the fact $u'^{q+1}=1$.
This  completes the proof.
\end{proof}

Combining Lemmas \ref{lem:all-not-0}, \ref{lem:u4+u5},  and \ref{lem:quadeq-U} gives the following.
\begin{lemma}\label{lem:B52odd}
Let $q=2^m$ with $m$ odd and $\{u_1,u_2,u_3,u_4,u_5\} \in \binom{U_{q+1}}{5}$. Then
$ 
\sigma_{5,2} \neq 0.
$ 

\end{lemma}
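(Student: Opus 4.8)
The plan is to argue by contradiction. Suppose that $\sigma_{5,2}(u_1,\dots,u_5)=0$ for some $\{u_1,u_2,u_3,u_4,u_5\}\in\binom{U_{q+1}}{5}$; I will show this forces $u_4=u_5$, contradicting distinctness. Fixing the triple $\{u_1,u_2,u_3\}\in\binom{U_{q+1}}{3}$ and regarding $u_4,u_5$ as in Lemma~\ref{lem:u4+u5}, the hypothesis $\sigma_{5,2}=0$ puts us exactly in the situation of Lemma~\ref{lem:all-not-0}. That lemma yields two things at once: first, $u_4\neq u_5$ (so there is something to contradict); second, none of $\sigma_{3,1}^2+\sigma_{3,2}$, $\sigma_{3,1}\sigma_{3,2}+\sigma_{3,3}$, $\sigma_{3,2}^2+\sigma_{3,1}\sigma_{3,3}$ vanishes, so the denominators in the next step are legitimate and, moreover, the nonvanishing hypothesis required by Lemma~\ref{lem:tr(b/a2)} (hence by Lemma~\ref{lem:quadeq-U}) is automatically satisfied.

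Next I would set $a=\dfrac{\sigma_{3,1}\sigma_{3,2}+\sigma_{3,3}}{\sigma_{3,1}^2+\sigma_{3,2}}$ and $b=\dfrac{\sigma_{3,2}^2+\sigma_{3,1}\sigma_{3,3}}{\sigma_{3,1}^2+\sigma_{3,2}}$, as in Lemma~\ref{lem:tr(b/a2)}. By Lemma~\ref{lem:u4+u5} this gives $u_4+u_5=a$ and $u_4u_5=b$, and since we are in characteristic $2$, $(u-u_4)(u-u_5)=u^2+au+b$; thus $u_4$ and $u_5$ are precisely the two roots of the quadratic $f(u)=u^2+au+b$, and both of them lie in $U_{q+1}$. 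Now invoke Part~1 of Lemma~\ref{lem:quadeq-U}: because $m$ is odd, $f$ has no root in $U_{q+1}\setminus\{\sqrt{b}\}$. Consequently $u_4\in\{\sqrt{b}\}$ and $u_5\in\{\sqrt{b}\}$, i.e.\ $u_4=u_5=\sqrt{b}$, contradicting $u_4\neq u_5$. Therefore $\sigma_{5,2}\neq 0$.

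I do not expect a genuine obstacle here: the statement is essentially a bookkeeping corollary of Lemmas~\ref{lem:all-not-0}, \ref{lem:u4+u5} and \ref{lem:quadeq-U}, and the only point needing a moment's attention is verifying that the running nonvanishing hypothesis of Lemma~\ref{lem:quadeq-U} holds for the triple at hand — which, as noted above, is exactly the conclusion of Lemma~\ref{lem:all-not-0}, so it is immediate once the contradiction hypothesis $\sigma_{5,2}=0$ is in force.
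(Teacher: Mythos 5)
Your argument is correct and is exactly the paper's intended proof: the paper simply states that Lemma \ref{lem:B52odd} follows by combining Lemmas \ref{lem:all-not-0}, \ref{lem:u4+u5} and \ref{lem:quadeq-U}, which is precisely the chain you spell out (nonvanishing of the three symmetric expressions, hence $u_4+u_5=a$ and $u_4u_5=b$ so that $u_4,u_5$ are the roots of $u^2+au+b$, which for $m$ odd has no two distinct roots in $U_{q+1}$). The only cosmetic redundancy is invoking Lemma \ref{lem:all-not-0} for $u_4\neq u_5$, which is already part of the hypothesis that $\{u_1,\dots,u_5\}$ is a $5$-subset.
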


\begin{lemma}\label{lem:quadeq-u4u5}
Let $q=2^m$ with $m$ even and $\{u_1, u_2, u_3\} \in \binom{U_{q+1}}{3}$. Let $u_4, u_5$ be
the two solutions  of   the  quadratic equation $u^2+au+b=0$, where $a=\frac{\sigma_{3,1}\sigma_{3,2}+\sigma_{3,3}}{\sigma_{3,1}^2+ \sigma_{3,2}}$ 
and $b=\frac{\sigma_{3,2}^2+\sigma_{3,1}\sigma_{3,3}}{\sigma_{3,1}^2+ \sigma_{3,2}}$. Then 
$$\{u_1,u_2,u_3,u_4,u_5\} \in \cB_{\sigma_{5,2},q+1}.$$
\end{lemma}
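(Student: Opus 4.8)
The plan is to verify the three things implicit in the assertion $\{u_1,u_2,u_3,u_4,u_5\}\in\cB_{\sigma_{5,2},q+1}$: that $a$ and $b$ are well defined and the quadratic has two roots in $U_{q+1}$; that $\sigma_{5,2}(u_1,\ldots,u_5)=0$; and that the five points are pairwise distinct, so that we really have a $5$-subset and not merely a vanishing of the symmetric function on a multiset.

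First I would record that the quantities occurring in $a$ and $b$ are nonzero. Since $m$ is even, Lemma~\ref{lem:sigma3not=0} gives $\sigma_{3,1}^2+\sigma_{3,2}\neq 0$ and $\sigma_{3,2}^2+\sigma_{3,1}\sigma_{3,3}\neq 0$, while Lemma~\ref{lem:sigmasigma}(2) gives $\sigma_{3,1}\sigma_{3,2}+\sigma_{3,3}\neq 0$; hence $a,b$ are well defined and nonzero, and the hypothesis of Lemmas~\ref{lem:tr(b/a2)} and~\ref{lem:quadeq-U} is met. Applying Lemma~\ref{lem:quadeq-U}(2) (with $m$ even) then shows that $f(u)=u^2+au+b$ has exactly two roots in $U_{q+1}$; these are the $u_4,u_5$ in the statement, and they are distinct from each other (a monic quadratic over a field of characteristic two with nonzero linear term has no repeated root).

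Next I would check $\sigma_{5,2}(u_1,\ldots,u_5)=0$. By Vieta's formulas $u_4+u_5=a$ and $u_4u_5=b$. Grouping the index pairs occurring in $\sigma_{5,2}$ into those inside $\{u_1,u_2,u_3\}$, the pair $\{u_4,u_5\}$, and the mixed ones yields
\begin{align*}
\sigma_{5,2}(u_1,\ldots,u_5)=\sigma_{3,2}+\sigma_{3,1}(u_4+u_5)+u_4u_5=\sigma_{3,2}+\sigma_{3,1}a+b ,
\end{align*}
and substituting the definitions of $a,b$ over the common denominator $\sigma_{3,1}^2+\sigma_{3,2}$ turns the numerator into $\sigma_{3,1}^2\sigma_{3,2}+\sigma_{3,2}^2+\sigma_{3,1}^2\sigma_{3,2}+\sigma_{3,1}\sigma_{3,3}+\sigma_{3,2}^2+\sigma_{3,1}\sigma_{3,3}$, which vanishes in characteristic two. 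I would stress that this step is purely algebraic in $u_1,\ldots,u_5$ and uses neither that the five values are distinct nor any congruence on $m$, so it remains valid in the degenerate case examined below.

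The last step — which I expect to be the only real obstacle — is to show that no $u_i$ with $i\in\{1,2,3\}$ equals $u_4$ or $u_5$. By the symmetry of $u_1,u_2,u_3$ and of $u_4,u_5$ it suffices to exclude $u_1=u_4$. If $u_1=u_4$, then $u_5=a+u_1$, and expanding the relation $\sigma_{5,2}(u_1,u_2,u_3,u_1,u_5)=0$ from the previous step gives $u_1^2+u_2u_3+u_5(u_2+u_3)=0$; substituting $u_5=a+u_1$ and simplifying in characteristic two yields $a(u_2+u_3)=(u_1+u_2)(u_1+u_3)$. On the other hand, Lemma~\ref{lem:sigmasigma}(1) identifies $\sigma_{3,1}\sigma_{3,2}+\sigma_{3,3}$ with $(u_1+u_2)(u_2+u_3)(u_3+u_1)$, so $a=(u_1+u_2)(u_2+u_3)(u_3+u_1)/(\sigma_{3,1}^2+\sigma_{3,2})$. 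Comparing the two expressions for $a$ and cancelling the nonzero factor $(u_1+u_2)(u_1+u_3)$ leaves $(u_2+u_3)^2=\sigma_{3,1}^2+\sigma_{3,2}$, which expands to $(u_1+u_2)(u_1+u_3)=0$, contradicting $u_1\notin\{u_2,u_3\}$. Combining the three steps, $\{u_1,u_2,u_3,u_4,u_5\}$ is a $5$-subset of $U_{q+1}$ on which $\sigma_{5,2}$ vanishes, i.e.\ it lies in $\cB_{\sigma_{5,2},q+1}$.
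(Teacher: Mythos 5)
Your proof is correct and follows the same overall skeleton as the paper's: first show that $u_4,u_5$ are two distinct elements of $U_{q+1}$ via Lemmas \ref{lem:sigmasigma}, \ref{lem:sigma3not=0} and \ref{lem:quadeq-U}, then get $\sigma_{5,2}=0$ from Vieta's formulas, and finally rule out coincidences between $\{u_1,u_2,u_3\}$ and $\{u_4,u_5\}$. The only genuine divergence is in that last step. The paper argues by cases on the size of $\{u_1,u_2,u_3\}\cap\{u_4,u_5\}$: size $2$ is excluded by a short computation, and size $1$ by reindexing the five elements so that the repeated value occupies the last two slots and then invoking Lemma \ref{lem:all-not-0}. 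You instead reduce, by the symmetry of $u_1,u_2,u_3$ and of the two roots, to the single case $u_1=u_4$ and derive $(u_1+u_2)(u_1+u_3)=0$ by direct algebra; I checked the computation (the identity $\sigma_{5,2}(u_1,u_2,u_3,u_1,u_5)=u_1^2+u_2u_3+u_5(u_2+u_3)$, the substitution $u_5=a+u_1$, and the identification of $a$ through Lemma \ref{lem:sigmasigma}) and it is correct; the cancelled factor $(u_1+u_2)(u_1+u_3)$ is indeed nonzero. Your treatment handles both coincidence patterns uniformly and is self-contained, at the cost of a slightly longer explicit calculation, whereas the paper's version reuses Lemma \ref{lem:all-not-0} already established for other purposes. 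Both routes are valid.
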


\begin{proof}
First, employing Lemmas \ref{lem:sigmasigma}, \ref{lem:sigma3not=0},
and \ref{lem:quadeq-U}, we have that $u_4, u_5 \in U_{q+1}$ and $u_4\neq u_5$.
Using $\sigma_{5,2}=u_4 u_5+(u_4+u_5) \sigma_{3,1}+\sigma_{3,2}$ and Vieta's formulas yields
\begin{align*}
\sigma_{5,2}= \frac{\sigma_{3,2}^2+\sigma_{3,1}\sigma_{3,3}}{\sigma_{3,1}^2+ \sigma_{3,2}}+ \frac{\sigma_{3,1}\sigma_{3,2}+
\sigma_{3,3}}{\sigma_{3,1}^2+ \sigma_{3,2}} \sigma_{3,1}+\sigma_{3,2} 
= 0.
\end{align*}

Suppose that $u_4=u_i$ and $u_5=u_j$ for some $i,j \in \{1,2,3\}$.  By symmetry, let $(i,j)=(3,2)$. Then
\begin{align*}
\sigma_{5,2}= u_3u_4+u_2u_5 
= u_2^2+u_3^2 
= 0,
\end{align*}
which is contrary to $u_2 \neq u_3$.
Thus, $\#\left  ( \{u_1, u_2, u_3\} \cap \{u_4, u_5\}  \right ) \neq 2$.

Assume that $\#\left  ( \{u_1, u_2, u_3\} \cap \{u_4, u_5\}  \right ) = 1$.
By the symmetry of $u_1,u_2,u_3$, let $u_5=u_3$ and $u_4\not \in \{u_1, u_2, u_3\}$.
Then $\sigma_{5,2}(u_1,u_2,u_4, u_5, u_3)=0$. Note that
 $\{u_1,u_2, u_4\} \in  \binom{U_{q+1}}{3}$ and $u_5=u_3$, which is contrary to Lemma \ref{lem:all-not-0}.
 Thus, $\#\left  ( \{u_1, u_2, u_3\} \cap \{u_4, u_5\}  \right ) \neq   1$.
 Hence, $\{u_1,u_2,u_3,u_4,u_5\} \in \binom{U_{q+1}}{5}$.
 This completes the proof.
\end{proof}

\begin{lemma}\label{lem:sigma4notzero}
Let  $\{u_1,u_2,u_3,u_4 \} \in \binom{U_{q+1}}{4}$. Then $\sigma_{4,3} \sigma_{4,1} \neq 0$ and $(\sigma_{4,3}+u_i \sigma_{4,2})(\sigma_{4,2}+u_i \sigma_{4,1})\neq 0$,
where $i\in \{1,2,3,4\}.$
\end{lemma}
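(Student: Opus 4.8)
The plan is to trade every nonvanishing assertion in the lemma for a statement about the elementary symmetric polynomials $\tau_1,\tau_2,\tau_3$ of a three-element subset, so that Lemmas \ref{lem:sigma41} and \ref{lem:sigmasigma} can be invoked directly. The one structural fact that makes this work is that on $U_{q+1}$ the Frobenius $x\mapsto x^{q}$ coincides with inversion $x\mapsto x^{-1}$.

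First I would handle $\sigma_{4,1}\sigma_{4,3}\neq 0$. The factor $\sigma_{4,1}=u_1+u_2+u_3+u_4\neq 0$ is Part 1 of Lemma \ref{lem:sigma41}. For $\sigma_{4,3}$, pulling out $\sigma_{4,4}=u_1u_2u_3u_4$ and using $u_j^{q}=u_j^{-1}$ yields the identity $\sigma_{4,3}=\sigma_{4,4}\,\sigma_{4,1}^{q}$; since $\sigma_{4,4}\neq 0$ (each $u_j$ is a unit) and $\sigma_{4,1}\neq 0$, this gives $\sigma_{4,3}\neq 0$. The same computation, using that complementation of $2$-subsets of $\{1,2,3,4\}$ is a bijection, gives $\sigma_{4,2}^{q}=\sigma_{4,2}/\sigma_{4,4}$.

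Next comes the observation that keeps the argument short: the two factors $\sigma_{4,3}+u_i\sigma_{4,2}$ and $\sigma_{4,2}+u_i\sigma_{4,1}$ are Frobenius-conjugate up to a nonzero scalar. Indeed, applying $x\mapsto x^{q}$ to $\sigma_{4,2}+u_i\sigma_{4,1}$ and substituting the two identities above gives
\[
(\sigma_{4,2}+u_i\sigma_{4,1})^{q}=\frac{\sigma_{4,3}+u_i\sigma_{4,2}}{u_i\,\sigma_{4,4}},
\]
so one of the two factors vanishes if and only if the other does. Hence it suffices to prove $\sigma_{4,2}+u_i\sigma_{4,1}\neq 0$ for each $i$; by the symmetry in $u_1,u_2,u_3,u_4$ I may take $i=4$. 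Writing $\tau_\ell=\sigma_{3,\ell}(u_1,u_2,u_3)$, a one-line computation in characteristic two collapses the mixed terms and gives $\sigma_{4,2}+u_4\sigma_{4,1}=\tau_2+u_4^{2}$.

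So the whole lemma comes down to $\tau_2+u_4^{2}\neq 0$. Suppose $u_4^{2}=\tau_2$. Applying the $q$-th power and using $u_4^{2q}=u_4^{-2}$ together with $\tau_2^{q}=\tau_1/\tau_3$ (the latter because $(u_iu_j)^{q}=(u_iu_j)^{-1}$ and $\tau_3=u_1u_2u_3\neq 0$) gives $u_4^{-2}=\tau_1/\tau_3$; multiplying the two relations yields $1=\tau_1\tau_2/\tau_3$, i.e. $\tau_1\tau_2+\tau_3=0$. But by Parts 1 and 2 of Lemma \ref{lem:sigmasigma} this quantity equals $(u_1+u_2)(u_2+u_3)(u_3+u_1)\neq 0$, a contradiction. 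The only real point of care is spotting the Frobenius-conjugacy of the two factors and the reduction to $\tau_1\tau_2+\tau_3$; beyond that, everything is routine bookkeeping with the map $x\mapsto x^{q}=x^{-1}$ on $U_{q+1}$, and I anticipate no genuine obstacle.
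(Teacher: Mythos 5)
Your proof is correct, and its first two steps coincide with the paper's: the identity $\sigma_{4,3}\sigma_{4,1}=\sigma_{4,4}\sigma_{4,1}^{q+1}$ disposes of the first claim via Part 1 of Lemma \ref{lem:sigma41}, and the norm-type identity $(\sigma_{4,3}+u_i\sigma_{4,2})(\sigma_{4,2}+u_i\sigma_{4,1})=u_i\sigma_{4,4}(\sigma_{4,2}+u_i\sigma_{4,1})^{q+1}$ (your Frobenius-conjugacy observation is exactly this) reduces everything to $\sigma_{4,2}+u_4\sigma_{4,1}=\sigma_{3,2}+u_4^2\neq 0$. Where you diverge is the endgame. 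The paper views $u_1u_2+u_2u_3+u_3u_1+u_4^2=0$ as a vanishing sum of four elements of $U_{q+1}$ and invokes Part 1 of Lemma \ref{lem:sigma41}, which forces a case split on whether $u_4^2$ coincides with one of the three products $u_iu_j$ (that degenerate case is settled by cancellation). You instead apply the $q$-th power to $u_4^2=\sigma_{3,2}$, multiply the two relations to get $\sigma_{3,1}\sigma_{3,2}+\sigma_{3,3}=0$, and contradict Parts 1 and 2 of Lemma \ref{lem:sigmasigma}. Your route avoids the case analysis entirely and leans on a lemma the paper proves anyway for other purposes, so it is arguably the cleaner of the two; the paper's route has the mild advantage of reusing Lemma \ref{lem:sigma41} without needing the factorization $\sigma_{3,1}\sigma_{3,2}+\sigma_{3,3}=(u_1+u_2)(u_2+u_3)(u_3+u_1)$. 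Both are complete and correct.
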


\begin{proof}
Note that
\begin{align*}
\sigma_{4,3} \sigma_{4,1}=\sigma_{4,4}\sigma_{4,1}^{q+1}.
\end{align*}
By Part 1 of  Lemma \ref{lem:sigma41}, $\sigma_{4,3} \sigma_{4,1} \neq 0$.

Note that $(\sigma_{4,3}+u_i \sigma_{4,2})(\sigma_{4,2}+u_i \sigma_{4,1})= u_i \sigma_{4,4}(\sigma_{4,2}+u_i \sigma_{4,1})^{q+1}$.
We only need to prove that $\sigma_{4,2}+u_i \sigma_{4,1}\neq 0$. On  the contrary, suppose that $\sigma_{4,2}+u_i \sigma_{4,1}= 0$.
Using the symmetry of $u_1, u_2,u_3,u_4$, choose $u_i=u_4$.
Then $\sigma_{3,2}+u_4^2=u_1u_2+u_2u_3+u_3u_1+u_4^2=0$, which is contrary to Part 1 of Lemma \ref{lem:sigma41} 
if $u_4^2 \not\in \{u_1u_2, u_2u_3, u_3u_1\}$.  If $u_4^2  \in \{u_1u_2, u_2u_3, u_3u_1\}$, due to symmetry assume that 
$u_4^2=u_1u_2$. It then follows from  $u_1u_2+u_2u_3+u_3u_1+u_4^2=0$ that $u_1=u_2$, which is contradictory to the 
assumption that $u_1 \neq u_2$.  
This completes the proof.
\end{proof}

From Lemma \ref{lem:sigma4notzero},  The following is easily checked.

\begin{lemma}\label{lem:sigma4-5U}
Let  $\{u_1,u_2,u_3,u_4 \} \in \binom{U_{q+1}}{4}$. 
Then $\sqrt{\frac{\sigma_{4,3}}{\sigma_{4,1}}}, \frac{\sigma_{4,3}+u_i \sigma_{4,2}}{\sigma_{4,2}+u_i \sigma_{4,1}} \in U_{q+1}$,
where $i\in \{1,2,3,4\}.$
\end{lemma}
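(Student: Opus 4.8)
The plan is to verify both claims through the norm-one criterion: a nonzero $x\in\gf(q^2)$ lies in $U_{q+1}$ exactly when $x^{q}=x^{-1}$, i.e.\ $x^{q+1}=1$. Everything reduces to a few ``reciprocal'' identities for the $\sigma_{4,\ell}$ on $U_{q+1}$, which I would record first. Since $u^{q}=u^{-1}$ for every $u\in U_{q+1}$, and since the complement of an $\ell$-subset of a $4$-set is a $(4-\ell)$-subset, one immediately obtains $\sigma_{4,4}=u_1u_2u_3u_4\in U_{q+1}$ (so $\sigma_{4,4}^{\,q+1}=1$), together with
\[
\sigma_{4,1}^{\,q}=\frac{\sigma_{4,3}}{\sigma_{4,4}},\qquad
\sigma_{4,2}^{\,q}=\frac{\sigma_{4,2}}{\sigma_{4,4}},\qquad
\sigma_{4,3}^{\,q}=\frac{\sigma_{4,1}}{\sigma_{4,4}}.
\]

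For $\sqrt{\sigma_{4,3}/\sigma_{4,1}}$: by Lemma \ref{lem:sigma4notzero} (or Part 1 of Lemma \ref{lem:sigma41}) both $\sigma_{4,1}$ and $\sigma_{4,3}$ are nonzero, so $\sigma_{4,3}/\sigma_{4,1}$ is a well-defined element of $\gf(q^2)^*$. Dividing the last two identities above yields $(\sigma_{4,3}/\sigma_{4,1})^{q}=\sigma_{4,1}/\sigma_{4,3}$, hence $\sigma_{4,3}/\sigma_{4,1}\in U_{q+1}$. Since $q+1$ is odd, squaring is a bijection of $U_{q+1}$, and in characteristic two the square root in $\gf(q^2)$ is single-valued; therefore the square root of $\sigma_{4,3}/\sigma_{4,1}$, which may be written explicitly as $(\sigma_{4,3}/\sigma_{4,1})^{(q+2)/2}$, again lies in $U_{q+1}$.

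For $\dfrac{\sigma_{4,3}+u_i\sigma_{4,2}}{\sigma_{4,2}+u_i\sigma_{4,1}}$: the numerator and the denominator are both nonzero by Lemma \ref{lem:sigma4notzero}, so the quotient makes sense. Raising to the $q$-th power and substituting the identities above together with $u_i^{\,q}=u_i^{-1}$ gives, after a short manipulation,
\[
(\sigma_{4,3}+u_i\sigma_{4,2})^{q}=\sigma_{4,4}^{-1}u_i^{-1}(\sigma_{4,2}+u_i\sigma_{4,1}),\qquad
(\sigma_{4,2}+u_i\sigma_{4,1})^{q}=\sigma_{4,4}^{-1}u_i^{-1}(\sigma_{4,3}+u_i\sigma_{4,2}).
\]
Taking the ratio of these two equations, the common factor $\sigma_{4,4}^{-1}u_i^{-1}$ cancels and one sees that the $q$-th power of $\dfrac{\sigma_{4,3}+u_i\sigma_{4,2}}{\sigma_{4,2}+u_i\sigma_{4,1}}$ is its own reciprocal, so this element lies in $U_{q+1}$. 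The only care needed is in the two displayed $q$-power computations — purely bookkeeping of $\pm1$ in the exponents — plus the observation that in characteristic two ``$\sqrt{\,\cdot\,}$'' is unambiguous and preserves membership in $U_{q+1}$; no genuine obstacle is anticipated, consistent with the paper's remark that the lemma is ``easily checked''.
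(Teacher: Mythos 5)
Your proof is correct and follows the route the paper intends: the paper omits the verification entirely (``easily checked'' from Lemma \ref{lem:sigma4notzero}), and the intended check is exactly the norm-one criterion you carry out, with nonvanishing of $\sigma_{4,1}\sigma_{4,3}$ and of $(\sigma_{4,3}+u_i\sigma_{4,2})(\sigma_{4,2}+u_i\sigma_{4,1})$ supplied by Lemma \ref{lem:sigma4notzero}. Your Frobenius identities $\sigma_{4,\ell}^{\,q}=\sigma_{4,4-\ell}/\sigma_{4,4}$ and the observation that the square root $x^{(q+2)/2}$ of $x\in U_{q+1}$ stays in $U_{q+1}$ are all accurate, so there is no gap.
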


\begin{lemma}\label{lem:sigma4-5Sigma631}
Let  $\{u_1,u_2,u_3,u_4 \} \in \binom{U_{q+1}}{4}$. 
Then $\sigma_{6,3}\left (u_1,u_2,u_3,u_4, \sqrt{\frac{\sigma_{4,3}}{\sigma_{4,1}}}, \sqrt{\frac{\sigma_{4,3}}{\sigma_{4,1}}} \right )=0$ and
$$ \sigma_{6,3}\left (u_1,u_2,u_3,u_4,\frac{\sigma_{4,3}+u_i \sigma_{4,2}}{\sigma_{4,2}+u_i \sigma_{4,1}}, u_i\right )=0,$$
where $i\in \{1,2,3,4\}.$
\end{lemma}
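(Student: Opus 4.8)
The plan is to verify both identities by a direct symbolic computation in the function field of the $u_i$'s, exploiting the defining relations coming from $U_{q+1}$, namely $u_i^{q+1}=1$, i.e. $u_i^q = 1/u_i$ for each $i$. First I would recall the elementary observation that, for a multiset of six elements, $\sigma_{6,3}$ evaluated at $(u_1,u_2,u_3,u_4,v,w)$ can be expanded in terms of the ESPs $\sigma_{4,j}=\sigma_{4,j}(u_1,u_2,u_3,u_4)$ together with the two extra variables $v,w$. Concretely, grouping the $3$-subsets of $\{u_1,u_2,u_3,u_4,v,w\}$ according to how many of $v,w$ they contain gives
\begin{align*}
\sigma_{6,3}(u_1,u_2,u_3,u_4,v,w) = \sigma_{4,3} + (v+w)\sigma_{4,2} + vw\,\sigma_{4,1}.
\end{align*}
This single formula is the engine for both claims.

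For the first identity, I would substitute $v=w=\sqrt{\sigma_{4,3}/\sigma_{4,1}}$. Then $v+w=0$ in characteristic two, so the middle term drops out, and $vw = \sigma_{4,3}/\sigma_{4,1}$, whence $\sigma_{6,3} = \sigma_{4,3} + (\sigma_{4,3}/\sigma_{4,1})\sigma_{4,1} = \sigma_{4,3}+\sigma_{4,3}=0$. For the second identity, I would substitute $v = \frac{\sigma_{4,3}+u_i\sigma_{4,2}}{\sigma_{4,2}+u_i\sigma_{4,1}}$ and $w=u_i$; one then must check that $\sigma_{4,3}+(v+u_i)\sigma_{4,2}+vu_i\sigma_{4,1}=0$. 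Multiplying through by the denominator $\sigma_{4,2}+u_i\sigma_{4,1}$ (nonzero by Lemma~\ref{lem:sigma4notzero}), the claim reduces to the polynomial identity
\begin{align*}
(\sigma_{4,3}+u_i\sigma_{4,2})(\sigma_{4,2}+u_i\sigma_{4,1}) = \sigma_{4,3}(\sigma_{4,2}+u_i\sigma_{4,1}) + u_i\sigma_{4,2}(\sigma_{4,2}+u_i\sigma_{4,1}),
\end{align*}
i.e. after cancellation $u_i\sigma_{4,2}^2 = u_i^2\sigma_{4,3}\sigma_{4,1} + u_i\sigma_{4,2}^2 \pmod{2}$ type bookkeeping — more precisely one expands both sides and checks the remaining terms cancel, which ultimately comes down to the Newton/Vieta-type relation among $\sigma_{4,1},\sigma_{4,2},\sigma_{4,3}$ together with $u_i$ being one of the four roots of $T^4+\sigma_{4,1}T^3+\sigma_{4,2}T^2+\sigma_{4,3}T+\sigma_{4,4}=0$. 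Since $u_i$ is a root, $u_i^4+\sigma_{4,1}u_i^3+\sigma_{4,2}u_i^2+\sigma_{4,3}u_i+\sigma_{4,4}=0$, and this is exactly the relation that makes the cross terms vanish.

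Thus the skeleton is: (i) establish the expansion $\sigma_{6,3}(u_1,\dots,u_4,v,w)=\sigma_{4,3}+(v+w)\sigma_{4,2}+vw\,\sigma_{4,1}$; (ii) plug in the two prescribed choices of $(v,w)$; (iii) for the first, use $\mathrm{char}=2$ to kill $v+w$ and simplify directly; (iv) for the second, clear denominators using Lemma~\ref{lem:sigma4notzero}/Lemma~\ref{lem:sigma4-5U} and reduce to the relation $u_i^4+\sigma_{4,1}u_i^3+\sigma_{4,2}u_i^2+\sigma_{4,3}u_i+\sigma_{4,4}=0$ satisfied by any $u_i\in\{u_1,u_2,u_3,u_4\}$. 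The only genuinely nontrivial point is keeping the algebra honest in step (iv); it is a routine but slightly fiddly polynomial manipulation, and that is where I expect to spend the effort. Everything else is formal. Since the paper says "The following is easily checked," I would present this as a short computation and note that the two $U_{q+1}$-membership facts needed to make the expressions well-defined are supplied by Lemma~\ref{lem:sigma4-5U}.
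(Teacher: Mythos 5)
Your proposal is correct and takes essentially the same route as the paper: both rest on the expansion $\sigma_{6,3}(u_1,u_2,u_3,u_4,v,w)=\sigma_{4,3}+(v+w)\sigma_{4,2}+vw\,\sigma_{4,1}$ followed by direct substitution of the two prescribed pairs $(v,w)$. One simplification for step (iv): the quartic relation satisfied by $u_i$ is not needed at all, since after clearing the denominator the claim $\sigma_{4,3}+(v+u_i)\sigma_{4,2}+vu_i\sigma_{4,1}=0$ is, in characteristic two, exactly the linear equation in $v$ whose solution is $v=\frac{\sigma_{4,3}+u_i\sigma_{4,2}}{\sigma_{4,2}+u_i\sigma_{4,1}}$, so all terms cancel identically.
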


\begin{proof}
Set $u_5=u_6= \sqrt{\frac{\sigma_{4,3}}{\sigma_{4,1}}}$. Then
\begin{align*}
\sigma_{6,3}\left (u_1,u_2,u_3,u_4, u_5, u_6 \right )=& \sigma_{4,3} +(u_5+u_6) \sigma_{4,2} +u_5 u_6 \sigma_{4,1}\\
=& \sigma_{4,3} + u_5^2 \sigma_{4,1}\\
=& 0.
\end{align*}
Thus,  $\sigma_{6,3}\left (u_1,u_2,u_3,u_4, \sqrt{\frac{\sigma_{4,3}}{\sigma_{4,1}}}, \sqrt{\frac{\sigma_{4,3}}{\sigma_{4,1}}} \right )=0$.

Choose $\sigma_5=\frac{\sigma_{4,3}+u_i \sigma_{4,2}}{\sigma_{4,2}+u_i \sigma_{4,1}}$
and $\sigma_6=u_i$. Then
\begin{align*}
\sigma_{6,3}=& \sigma_{4,3} +(u_5+u_6) \sigma_{4,2} +u_5 u_6 \sigma_{4,1}\\
=& \sigma_{4,3} +\left ( \frac{\sigma_{4,3}+u_i \sigma_{4,2}}{\sigma_{4,2}+u_i \sigma_{4,1}} + u_i \right ) \sigma_{4,2} 
+\frac{\sigma_{4,3}+u_i \sigma_{4,2}}{\sigma_{4,2}+u_i \sigma_{4,1}}u_i\sigma_{4,1}\\
=& 0.
\end{align*}
This completes the proof.
\end{proof}

\begin{lemma}\label{lem:sigma4-5Sigma63}
Let  $\{u_1,u_2,u_3,u_4 \} \in \binom{U_{q+1}}{4}$ such that $\sigma_{5,2}(u_1,u_2,u_3,u_4,u_5)\neq 0$ for any $u_5 \in U_{q+1} \setminus \{u_1,u_2,u_3,u_4\}$. 
Let $S$ be the subset of $U_{q+1}$ given by
$$\left \{ \frac{\sigma_{4,3}+u_i \sigma_{4,2}}{\sigma_{4,2}+u_i \sigma_{4,1}}: i=1,2,3,4 \right \} \bigcup
\left  \{u_i: i=1,2,3,4 \right \} \bigcup \left \{\sqrt{\frac{\sigma_{4,3}}{\sigma_{4,1}}}  \right \}.$$
Then $\# S =9$.
\end{lemma}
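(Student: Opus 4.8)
The plan is to show that the nine explicitly listed elements of $U_{q+1}$—namely the four $u_i$, the four ratios $v_i := \frac{\sigma_{4,3}+u_i \sigma_{4,2}}{\sigma_{4,2}+u_i \sigma_{4,1}}$, and the single square root $w := \sqrt{\sigma_{4,3}/\sigma_{4,1}}$—are pairwise distinct. Lemma \ref{lem:sigma4-5U} already guarantees these all lie in $U_{q+1}$, so only distinctness is at issue. I would organize the verification into the three natural types of coincidence and dispose of each by exhibiting the algebraic identity it forces and contradicting one of the earlier lemmas (chiefly Lemma \ref{lem:sigma41} and the hypothesis $\sigma_{5,2}(u_1,u_2,u_3,u_4,u_5)\neq 0$ for all $u_5\notin\{u_1,u_2,u_3,u_4\}$).

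First, the $u_i$ are pairwise distinct by hypothesis. Next, consider a coincidence $v_i = u_j$. By Lemma \ref{lem:sigma4-5Sigma631}, $\sigma_{6,3}(u_1,u_2,u_3,u_4,v_i,u_i)=0$; substituting $v_i=u_j$ turns this into a statement about a multiset on $\{u_1,\dots,u_4\}$ with one element repeated, which should reduce (after clearing denominators and using $\sigma_{4,2}+u_i\sigma_{4,1}\neq 0$ from Lemma \ref{lem:sigma4notzero}) to $\sigma_{5,2}$ of a $5$-subset of $\{u_1,u_2,u_3,u_4\}\cup\{u_j\}$ vanishing, or to a low-degree relation among three of the $u$'s forbidden by Lemma \ref{lem:sigma41}; I would split on whether $i=j$ or $i\neq j$. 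Then consider $v_i = v_j$ for $i\neq j$: writing out $\frac{\sigma_{4,3}+u_i \sigma_{4,2}}{\sigma_{4,2}+u_i \sigma_{4,1}} = \frac{\sigma_{4,3}+u_j \sigma_{4,2}}{\sigma_{4,2}+u_j \sigma_{4,1}}$ and cross-multiplying, the $\sigma_{4,3}\sigma_{4,2}$ and $u_iu_j\sigma_{4,2}\sigma_{4,1}$ terms cancel, leaving $(u_i+u_j)(\sigma_{4,2}^2+\sigma_{4,3}\sigma_{4,1})=0$; since $u_i\neq u_j$ this forces $\sigma_{4,2}^2 = \sigma_{4,3}\sigma_{4,1}$, i.e. $\sigma_{4,2}^2 = \sigma_{4,4}\sigma_{4,1}^{q+1}\cdot(\text{stuff})$—a relation I expect to contradict $\sigma_{4,1}\sigma_{4,3}\neq 0$ together with the $m$ even hypothesis, or else to say that the quadratic whose roots would be $u_5$ with $\sigma_{5,2}=0$ degenerates, contradicting the standing hypothesis.

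The remaining cases involve $w$. For $w = u_j$: squaring gives $\sigma_{4,3} = u_j^2\sigma_{4,1}$, i.e. $\sigma_{4,2}+u_j\sigma_{4,1}$ divides something—more precisely this is exactly the relation $\sigma_{4,3}+u_j^2\sigma_{4,1}=0$, which via $\sigma_{6,3}(u_1,\dots,u_4,u_j,u_j)$ and the computation in Lemma \ref{lem:sigma4-5Sigma631} would say $\{u_1,u_2,u_3,u_4\}$ minus $\{u_k,u_l\}$ (the two indices $\neq j$, er, there are three) forces a forbidden symmetric relation; I would instead argue directly that $\sigma_{4,3}=u_j^2\sigma_{4,1}$ combined with $\sigma_{4,2}+u_j\sigma_{4,1}\neq 0$ and Lemma \ref{lem:sigma41} is impossible, or reduce to $v_j = u_j$ which is already handled. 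For $w = v_i$: square and substitute to get $\frac{\sigma_{4,3}}{\sigma_{4,1}} = \left(\frac{\sigma_{4,3}+u_i \sigma_{4,2}}{\sigma_{4,2}+u_i \sigma_{4,1}}\right)^2$; cross-multiplying produces a relation that, after simplification, again isolates $\sigma_{4,2}^2+\sigma_{4,3}\sigma_{4,1}$ as a factor (times a power of $u_i$ and $\sigma_{4,1}$), reducing to the same impossible identity as in the $v_i=v_j$ case.

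The main obstacle I anticipate is not any single case but the bookkeeping: each coincidence must be pushed through a clearing-of-denominators step that invokes the nonvanishing of $\sigma_{4,2}+u_i\sigma_{4,1}$, $\sigma_{4,1}$, $\sigma_{4,3}$ (all from Lemmas \ref{lem:sigma41} and \ref{lem:sigma4notzero}), and then the resulting polynomial identity in $u_1,u_2,u_3,u_4$ must be recognized either as a specialization of $\sigma_{5,2}=0$ (excluded by hypothesis) or as a degree-$\le 3$ symmetric relation excluded by Lemma \ref{lem:sigma41}; getting the symmetry reductions right—which index to set equal to which, and confirming that after the identification the remaining three or four elements still form a legitimate subset of $U_{q+1}$ to which the cited lemma applies—is where care is needed. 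Once all $\binom{9}{2}$ coincidences are ruled out (they collapse, by the symmetry of the $u_i$, into the five case types above), $\#S = 9$ follows.
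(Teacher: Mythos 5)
Your case decomposition is sound and your two determinant-style computations are correct: cross-multiplying $v_i=v_j$ ($i\neq j$) does leave $(u_i+u_j)(\sigma_{4,2}^2+\sigma_{4,1}\sigma_{4,3})=0$, and (since squaring is injective in characteristic $2$) $v_i=\sqrt{\sigma_{4,3}/\sigma_{4,1}}$ reduces to $(\sigma_{4,2}^2+\sigma_{4,1}\sigma_{4,3})(\sigma_{4,3}+u_i^2\sigma_{4,1})=0$, hence to the same relation once $u_i\neq\sqrt{\sigma_{4,3}/\sigma_{4,1}}$ is known. The genuine gap is that you never rule out $\sigma_{4,2}^2+\sigma_{4,1}\sigma_{4,3}=0$. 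Your first suggested route cannot work: the lemma carries no ``$m$ even'' hypothesis, and $\sigma_{4,2}^2=\sigma_{4,1}\sigma_{4,3}$ is entirely compatible with $\sigma_{4,1}\sigma_{4,3}\neq 0$ (it merely says $\sigma_{4,2}/\sigma_{4,1}$ has norm $1$ over $\gf(q)$). The only hypothesis strong enough to exclude it is the standing assumption on $\sigma_{5,2}$, and your second guess gestures at this without making the connection. The connection can be made: $\sigma_{4,2}^2=\sigma_{4,1}\sigma_{4,3}$ gives $\sigma_{4,2}/\sigma_{4,1}=\sqrt{\sigma_{4,3}/\sigma_{4,1}}=:w\in U_{q+1}$, and since $w\notin\{u_1,\dots,u_4\}$ (your first case), the element $u_5=w$ satisfies $\sigma_{5,2}(u_1,\dots,u_4,w)=\sigma_{4,2}+w\,\sigma_{4,1}=0$, contradicting the hypothesis. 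Without this (or an equivalent) step, the two hardest coincidences remain unproved.

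For comparison, the paper never forms this determinant. It writes $\sigma_{6,3}(u_1,\dots,u_4,u_5,y)=\sigma_{5,3}(u_1,\dots,u_5)+y\,\sigma_{5,2}(u_1,\dots,u_5)$ and uses Lemma \ref{lem:sigma4-5Sigma631}: if $v_i=v_j$ (resp.\ $v_i=w$), then $u_5:=v_i$ is a common fifth element at which this linear function of $y$ vanishes for both $y=u_i$ and $y=u_j$ (resp.\ $y=u_i$ and $y=w$); since $\sigma_{5,2}(u_1,\dots,u_5)\neq 0$ by hypothesis, $y$ is uniquely $\sigma_{5,3}/\sigma_{5,2}$, forcing $u_i=u_j$ (resp.\ $u_i=w$, already excluded). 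The remaining cases are settled there by the direct factorizations $\sigma_{4,3}+u_j^2\sigma_{4,1}=\prod_{k\neq j}(u_j+u_k)$ and $\sigma_{4,3}+u_i\sigma_{4,2}+u_j(\sigma_{4,2}+u_i\sigma_{4,1})=(u_i+u_j)^2(u_k+u_l)$, which need neither Lemma \ref{lem:sigma41} nor the $\sigma_{5,2}$ hypothesis; your plan for those cases is essentially right, if more roundabout than necessary.
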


\begin{proof}
First, we prove that $\sqrt{\frac{\sigma_{4,3}}{\sigma_{4,1}}}\neq u_4$.
On contrary, assume that $\sqrt{\frac{\sigma_{4,3}}{\sigma_{4,1}}}= u_4$. Then
$$\sigma_{4,1} u_4^2+\sigma_{4,3}=0,$$
which is the same as
$$u_4^3+ \sigma_{3,1} u_4^2 +\sigma_{3,2} u_4+ \sigma_{3,3} =0.$$
Then,
$$(u_4+u_1)(u_4+u_2)(u_4+u_3)=0,$$
which is contrary to the assumption $\{u_1,u_2,u_3,u_4 \} \in \binom{U_{q+1}}{4}$.
Thus $\sqrt{\frac{\sigma_{4,3}}{\sigma_{4,1}}}\neq u_4$. By the symmetry of $u_1, u_2, u_3, u_4$, 
\begin{align}\label{eq:2-3}
\sqrt{\frac{\sigma_{4,3}}{\sigma_{4,1}}}\neq u_i \mbox{ for all } i.
\end{align}

Assume that $\frac{\sigma_{4,3}+u_4 \sigma_{4,2}}{\sigma_{4,2}+u_4 \sigma_{4,1}}= u_4$.
Then $u_4 = \sqrt{\frac{\sigma_{4,3}}{\sigma_{4,1}}}$, which is contrary to Inequality (\ref{eq:2-3}).
Thus, $\frac{\sigma_{4,3}+u_4 \sigma_{4,2}}{\sigma_{4,2}+u_4 \sigma_{4,1}}\neq  u_4$. By the symmetry  of $u_1,u_2,u_3,u_4$,  
\begin{align}\label{eq:1-2ii}
\frac{\sigma_{4,3}+u_i \sigma_{4,2}}{\sigma_{4,2}+u_i \sigma_{4,1}}\neq  u_i  \mbox{ for all } i.
\end{align}

Assume that $\frac{\sigma_{4,3}+u_4 \sigma_{4,2}}{\sigma_{4,2}+u_4 \sigma_{4,1}}= u_3$.
Then $\sigma_{4,3}+u_4 \sigma_{4,2}+u_3(\sigma_{4,2}+u_4 \sigma_{4,1})=0$, which is the same as $(u_3+u_4)^2(u_1+u_2)= 0$.
This is contrary to our assumption $\{u_1,u_2,u_3,u_4\} \in \binom{U_{q+1}}{4}$.
Thus, $\frac{\sigma_{4,3}+u_4 \sigma_{4,2}}{\sigma_{4,2}+u_4 \sigma_{4,1}}\neq  u_3$. By the symmetry  of $u_1,u_2,u_3,u_4$,  
\begin{align}\label{eq:1-2ij}
\frac{\sigma_{4,3}+u_i \sigma_{4,2}}{\sigma_{4,2}+u_i \sigma_{4,1}}\neq  u_j \text{ for all } i \neq j.
\end{align}

Assume that $\frac{\sigma_{4,3}+u_i \sigma_{4,2}}{\sigma_{4,2}+u_i \sigma_{4,1}}= \sqrt{\frac{\sigma_{4,3}}{\sigma_{4,1}}}$ for some $i \in  \{1,2,3,4\}$.
Put $u_5= \sqrt{\frac{\sigma_{4,3}}{\sigma_{4,1}}}$.  From Inequality (\ref{eq:2-3}), $u_5\not \in \{u_1,u_2,u_3,u_4\}$.
By Lemma \ref{lem:sigma4-5Sigma631}, we have
\begin{eqnarray*}
\left\{ 
\begin{array}{ll}
 \sigma_{6,3}\left (u_1,u_2,u_3,u_4,u_5, u_i\right )&=0, \\
\sigma_{6,3}\left (u_1,u_2,u_3,u_4, u_5, \sqrt{\frac{\sigma_{4,3}}{\sigma_{4,1}}} \right )&=0. 
\end{array}
\right. 
\end{eqnarray*} 
By the assumption of this lemma, $\sigma_{5,2}(u_1,u_2,u_3,u_4,u_5)\neq 0$. Thus,
\begin{eqnarray*}
\left\{ 
\begin{array}{ll}
 u_i&=\frac{\sigma_{5,3}}{\sigma_{5,2}}, \\
\sqrt{\frac{\sigma_{4,3}}{\sigma_{4,1}}} &=\frac{\sigma_{5,3}}{\sigma_{5,2}}, 
\end{array}
\right. 
\end{eqnarray*} 
which is contrary to Inequality (\ref{eq:2-3}).
Hence, 
\begin{align}\label{eq:1-3}
\frac{\sigma_{4,3}+u_i \sigma_{4,2}}{\sigma_{4,2}+u_i \sigma_{4,1}}\neq  \sqrt{\frac{\sigma_{4,3}}{\sigma_{4,1}}}.
\end{align}

Assume that $\frac{\sigma_{4,3}+u_i \sigma_{4,2}}{\sigma_{4,2}+u_i \sigma_{4,1}}= \frac{\sigma_{4,3}+u_j \sigma_{4,2}}{\sigma_{4,2}+u_j \sigma_{4,1}}$ 
for some $i,j \in  \{1,2,3,4\}$.
Put $u_5= \frac{\sigma_{4,3}+u_i \sigma_{4,2}}{\sigma_{4,2}+u_i \sigma_{4,1}}$.  From Inequalities (\ref{eq:1-2ii}) and (\ref{eq:1-2ij}), $u_5\not \in \{u_1,u_2,u_3,u_4\}$.
By Lemma \ref{lem:sigma4-5Sigma631}, we have
\begin{eqnarray*}
\left\{ 
\begin{array}{ll}
 \sigma_{6,3}\left (u_1,u_2,u_3,u_4,u_5, u_i\right )&=0, \\
\sigma_{6,3}\left (u_1,u_2,u_3,u_4, u_5, u_j \right )&=0. 
\end{array}
\right. 
\end{eqnarray*} 
By the assumption of this lemma, $\sigma_{5,2}(u_1,u_2,u_3,u_4,u_5)\neq 0$. Thus,
\begin{eqnarray*}
\left\{ 
\begin{array}{ll}
 u_i&=\frac{\sigma_{5,3}}{\sigma_{5,2}}, \\
u_j &=\frac{\sigma_{5,3}}{\sigma_{5,2}}. 
\end{array}
\right. 
\end{eqnarray*} 
Then, $i=j$. Hence,
\begin{align}\label{eq:1-1}
\frac{\sigma_{4,3}+u_i \sigma_{4,2}}{\sigma_{4,2}+u_i \sigma_{4,1}}\neq  \frac{\sigma_{4,3}+u_j \sigma_{4,2}}{\sigma_{4,2}+u_j \sigma_{4,1}}, \text{ for }  i\neq j.
\end{align}

The desired conclusion then follows from Inequalities  (\ref{eq:2-3}), (\ref{eq:1-2ii}), (\ref{eq:1-2ij}), (\ref{eq:1-3}) and (\ref{eq:1-1}).
This completes the proof.

\end{proof}

\begin{lemma}\label{lem:4->5rigid}
Let $q=2^m$ with $m$ even. Let  $\{u_1',u_2',u_3', u_4', u_5'\} \in \cB_{\sigma_{5,2},q+1}$ and $u_5, u_6\in U_{q+1}$ such that
$\sigma_{6,3}(u_1',u_2',u_3', u_4', u_5,u_6)=0$. Then $u_5' \in \{u_5, u_6\}$.
\end{lemma}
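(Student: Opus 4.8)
The plan is to exploit the rigidity already encoded in Lemma \ref{lem:u4+u5}: when a $5$-set $\{v_1,\dots,v_5\}$ lies in $\cB_{\sigma_{5,2},q+1}$, the symmetric functions of any two of its coordinates are completely determined by the other three via the two linear equations there. Concretely, abbreviate $\tau_1,\tau_2,\tau_3$ for the ESPs $\sigma_{3,1},\sigma_{3,2},\sigma_{3,3}$ of $\{u_1',u_2',u_3'\}$. Since $\{u_1',u_2',u_3',u_4',u_5'\}\in\cB_{\sigma_{5,2},q+1}$, Lemma \ref{lem:u4+u5} applied to this set gives
\begin{eqnarray*}
(\tau_1^2+\tau_2)(u_4'+u_5')=\tau_1\tau_2+\tau_3,\qquad (\tau_1^2+\tau_2)u_4'u_5'=\tau_2^2+\tau_1\tau_3,
\end{eqnarray*}
and by Lemma \ref{lem:all-not-0} the coefficient $\tau_1^2+\tau_2$ is nonzero, so $\{u_4',u_5'\}$ is exactly the root set of $u^2+au+b$ with $a,b$ as in Lemma \ref{lem:tr(b/a2)} built from $\tau_1,\tau_2,\tau_3$. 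In other words $u_4',u_5'$ are the two roots of $f(u)=u^2+au+b$ furnished by Lemma \ref{lem:quadeq-u4u5} (for $m$ even that lemma guarantees exactly two such roots in $U_{q+1}$).

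Next I would feed the other hypothesis into the same machinery. The assumption $\sigma_{6,3}(u_1',u_2',u_3',u_4',u_5,u_6)=0$ expands, grouping the first three variables, as
\begin{eqnarray*}
\sigma_{6,3}=\tau_3+\tau_2\,\sigma_{3,1}(u_4',u_5,u_6)+\tau_1\,\sigma_{3,2}(u_4',u_5,u_6)+\sigma_{3,3}(u_4',u_5,u_6)=0,
\end{eqnarray*}
but it is cleaner to group $\{u_4',u_5,u_6\}$ together only after first using that $\{u_1',u_2',u_3',u_4'\}$ already pins down $u_5'$. The key point: I want to show the quadratic $u^2+au+b$ (whose roots are $u_4',u_5'$) is forced to have $u_4'$ and one of $u_5,u_6$ as the pair of roots coming from the $5$-set $\{u_1',u_2',u_3',u_4',u_5\}$ or $\{u_1',u_2',u_3',u_4',u_6\}$. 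So I would argue: first check that at least one of $\{u_1',u_2',u_3',u_4',u_5\}$ and $\{u_1',u_2',u_3',u_4',u_6\}$ is a genuine $5$-subset of $U_{q+1}$ and that $\sigma_{5,2}$ does not vanish on the relevant triples (using Lemma \ref{lem:all-not-0} and the argument patterns in Lemma \ref{lem:quadeq-u4u5}); then, writing the relation $\sigma_{6,3}(u_1',u_2',u_3',u_4',u_5,u_6)=0$ as a statement about $\sigma_{5,2}$ and $\sigma_{5,3}$ of $\{u_1',u_2',u_3',u_4',u_5\}$ evaluated at $u_6$ (or vice versa, exactly as in the displayed computations inside Lemma \ref{lem:sigma4-5Sigma63}), deduce that $u_6=\sigma_{5,3}/\sigma_{5,2}$ is determined by $u_5$. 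Equivalently, $\{u_5,u_6\}$ satisfies the same pair of linear equations in its elementary symmetric functions as $\{u_4',u_5'\}$ once we also know $\sigma_{5,2}(u_1',u_2',u_3',u_4',\cdot)$ governs both; matching coefficients then shows $\{u_5,u_6\}$ is a root pair of a quadratic sharing a root with $f$, and since over $\gf(q^2)$ a monic quadratic is determined by one root together with $b=u_4'u_5'\in U_{q+1}$ (Lemma \ref{lem:sigmasigma}, Part 3), the conclusion $u_5'\in\{u_5,u_6\}$ follows, possibly after separating the degenerate case $u_4'\in\{u_5,u_6\}$.

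I expect the main obstacle to be the bookkeeping of degenerate coincidences: the hypotheses only say $u_5,u_6\in U_{q+1}$, not that $\{u_1',u_2',u_3',u_4',u_5\}$ is a $5$-set, so one must handle the cases where $u_5$ or $u_6$ collides with one of $u_1',u_2',u_3',u_4'$ or with each other. The clean cases reduce to the quadratic-root argument above; the collision cases should each be ruled out or resolved using Lemma \ref{lem:sigma4notzero} and Lemma \ref{lem:all-not-0} exactly as in the proof of Lemma \ref{lem:quadeq-u4u5} (for instance, if $u_6=u_i$ for some $i\le 4$ the relation $\sigma_{6,3}=0$ collapses to $\sigma_{5,2}$ of a $5$-set vanishing, contradicting Lemma \ref{lem:all-not-0} unless that forces $u_5'=u_5$ directly). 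Assembling these subcases, together with Lemma \ref{lem:one-one} to guarantee uniqueness of the distinguished $5$-subset, yields $u_5'\in\{u_5,u_6\}$.
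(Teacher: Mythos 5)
Your first paragraph and the reduction $u_6=\sigma_{5,3}(u_1',\dots,u_4',u_5)/\sigma_{5,2}(u_1',\dots,u_4',u_5)$ are sound, and the nonvanishing of $\sigma_{5,2}(u_1',\dots,u_4',u_5)$ under the assumption $u_5'\neq u_5$ does follow from Lemmas \ref{lem:u4+u5} and \ref{lem:sigma3not=0}, exactly as in Lemma \ref{lem:one-one}. The gap is in how you connect this to $u_5'$. Knowing that $u_6$ is determined by $u_5$ says nothing about $u_5'$ unless you also show that $u_5'$ satisfies the \emph{same} linear equation, i.e.\ that $\sigma_{6,3}(u_1',u_2',u_3',u_4',u_5,u_5')=0$. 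This is the one observation your sketch never makes, and it is where the hypothesis $\{u_1',\dots,u_5'\}\in\cB_{\sigma_{5,2},q+1}$ actually does its work: since the $u_j'$ lie in $U_{q+1}$, one has $\sigma_{5,3}(u_1',\dots,u_5')=\sigma_{5,5}\,\sigma_{5,2}(u_1',\dots,u_5')^q=0$ as well, so $\sigma_{6,3}(u_1',\dots,u_5',w)=\sigma_{5,3}(u')+w\,\sigma_{5,2}(u')=0$ for \emph{every} $w\in U_{q+1}$, in particular $w=u_5$. With that in hand, $u_5'$ and $u_6$ are both roots of the degree-one equation $\sigma_{5,3}(u_1',\dots,u_4',u_5)+w\,\sigma_{5,2}(u_1',\dots,u_4',u_5)=0$ and hence coincide; this is precisely the paper's proof.

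Your substitute for this step does not work. Grouped over $\{u_1',\dots,u_4'\}$, the hypothesis $\sigma_{6,3}(u_1',\dots,u_4',u_5,u_6)=0$ is the single relation $\sigma_{4,3}+(u_5+u_6)\sigma_{4,2}+u_5u_6\,\sigma_{4,1}=0$, i.e.\ one linear condition on $(u_5+u_6,\,u_5u_6)$; it leaves a one-parameter family of admissible pairs $\{u_5,u_6\}$ and does not identify $\{u_5,u_6\}$ with the root set of any particular quadratic. So the assertion that ``$\{u_5,u_6\}$ satisfies the same pair of linear equations \dots as $\{u_4',u_5'\}$'' is false ($\{u_4',u_5'\}$ is pinned down by two equations, $\{u_5,u_6\}$ only by one), and ``matching coefficients \dots shows $\{u_5,u_6\}$ is a root pair of a quadratic sharing a root with $f$'' is exactly the conclusion to be proved rather than a step toward it. The case analysis over collisions of $u_5,u_6$ with $u_1',\dots,u_4'$ that you anticipate is also unnecessary: the ESP identities used above are valid for tuples with repeated entries, so the argument goes through uniformly.
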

\begin{proof}
Assume that $ u_5' \not \in \{u_5, u_6\}$.  By  Lemmas \ref{lem:sigma3not=0} and \ref{lem:u4+u5}, $\sigma_{5,2}(u_1',u_2',u_3',u_4',u_5) \neq 0$.
One has
\begin{eqnarray*}
\left\{ 
\begin{array}{ll}
 \sigma_{6,3}\left (u_1',u_2',u_3',u_4',u_5, u_5'\right )&=0, \\
\sigma_{6,3}\left (u_1',u_2',u_3',u_4', u_5, u_6 \right )&=0,
\end{array}
\right. 
\end{eqnarray*} 
which is the same as
\begin{eqnarray*}
\left\{ 
\begin{array}{ll}
 u_5'&= \frac{\sigma_{5,3}(u_1',u_2',u_3',u_4',u_5) }{\sigma_{5,2}(u_1',u_2',u_3',u_4',u_5) }, \\
u_6 &=\frac{\sigma_{5,3}(u_1',u_2',u_3',u_4',u_5) }{\sigma_{5,2}(u_1',u_2',u_3',u_4',u_5) }.
\end{array}
\right. 
\end{eqnarray*} 
This is contrary to our assumption that $ u_5' \not \in \{u_5, u_6\}$. This completes the proof.
\end{proof}

\begin{lemma}\label{lem:how5}
Let  $\{u_1,u_2,u_3,u_4 \} \in \binom{U_{q+1}}{4}$ such that $\sigma_{5,2}(u_1,u_2,u_3,u_4,u_5)\neq 0$ for any $u_5 \in U_{q+1} \setminus \{u_1,u_2,u_3,u_4\}$. 
Then
\begin{align*}
\frac{\sigma_{5,3}\left (u_1,u_2,u_3,u_4, \sqrt{\frac{\sigma_{4,3}}{\sigma_{4,1}}} \right ) }
{\sigma_{5,2} \left (u_1,u_2,u_3,u_4,\sqrt{\frac{\sigma_{4,3}}{\sigma_{4,1}}} \right ) }= \sqrt{\frac{\sigma_{4,3}}{\sigma_{4,1}}},
\end{align*}
and
\begin{align*}
\frac{\sigma_{5,3} \left (u_1,u_2,u_3,u_4, \frac{\sigma_{4,3}+u_i \sigma_{4,2}}{\sigma_{4,2}+u_i \sigma_{4,1}} \right ) }
{\sigma_{5,2}\left (u_1,u_2,u_3,u_4,\frac{\sigma_{4,3}+u_i \sigma_{4,2}}{\sigma_{4,2}+u_i \sigma_{4,1}} \right ) }= u_i,
\end{align*}
where $i\in \{1,2,3,4\}$.
\end{lemma}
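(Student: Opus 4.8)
The plan is to reduce both identities to a single recursive identity for elementary symmetric polynomials together with the vanishing facts for $\sigma_{6,3}$ already recorded in Lemma~\ref{lem:sigma4-5Sigma631}. First I would note that adjoining a sixth variable $x$ to $u_1,u_2,u_3,u_4,u_5$ gives
\begin{align*}
\sigma_{6,3}(u_1,u_2,u_3,u_4,u_5,x)=\sigma_{5,3}(u_1,u_2,u_3,u_4,u_5)+x\,\sigma_{5,2}(u_1,u_2,u_3,u_4,u_5),
\end{align*}
so that whenever $\sigma_{5,2}(u_1,u_2,u_3,u_4,u_5)\neq 0$, the unique $x$ with $\sigma_{6,3}(u_1,u_2,u_3,u_4,u_5,x)=0$ is $x=\sigma_{5,3}(u_1,u_2,u_3,u_4,u_5)/\sigma_{5,2}(u_1,u_2,u_3,u_4,u_5)$. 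Everything then comes down to exhibiting, in each of the two cases, a fifth point $u_5\in U_{q+1}\setminus\{u_1,u_2,u_3,u_4\}$ and a value of $x$ that make $\sigma_{6,3}$ vanish.

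For the first identity I would take $u_5:=\sqrt{\sigma_{4,3}/\sigma_{4,1}}$. By Lemma~\ref{lem:sigma4-5U} this lies in $U_{q+1}$, and $u_5\notin\{u_1,u_2,u_3,u_4\}$ since $u_5=u_i$ would force $\sigma_{4,1}u_i^2+\sigma_{4,3}=0$, i.e. $(u_i+u_1)(u_i+u_2)(u_i+u_3)=0$ (this is exactly inequality~(\ref{eq:2-3}) in the proof of Lemma~\ref{lem:sigma4-5Sigma63}). Hence the standing hypothesis gives $\sigma_{5,2}(u_1,u_2,u_3,u_4,u_5)\neq 0$. Lemma~\ref{lem:sigma4-5Sigma631} with $u_6=u_5$ says $\sigma_{6,3}(u_1,u_2,u_3,u_4,u_5,u_5)=0$; substituting $x=u_5$ into the recursion above and dividing by $\sigma_{5,2}(u_1,u_2,u_3,u_4,u_5)$ yields $\sigma_{5,3}(u_1,u_2,u_3,u_4,u_5)/\sigma_{5,2}(u_1,u_2,u_3,u_4,u_5)=u_5=\sqrt{\sigma_{4,3}/\sigma_{4,1}}$, which is the first claim.

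For the second identity I would instead take $u_5:=(\sigma_{4,3}+u_i\sigma_{4,2})/(\sigma_{4,2}+u_i\sigma_{4,1})$. By Lemmas~\ref{lem:sigma4notzero} and~\ref{lem:sigma4-5U} this is a well-defined element of $U_{q+1}$, and inequalities~(\ref{eq:1-2ii}) and~(\ref{eq:1-2ij}) in the proof of Lemma~\ref{lem:sigma4-5Sigma63} give $u_5\notin\{u_1,u_2,u_3,u_4\}$, so again the hypothesis yields $\sigma_{5,2}(u_1,u_2,u_3,u_4,u_5)\neq 0$. Lemma~\ref{lem:sigma4-5Sigma631} now gives $\sigma_{6,3}(u_1,u_2,u_3,u_4,u_5,u_i)=0$; substituting $x=u_i$ into the recursion and dividing by $\sigma_{5,2}(u_1,u_2,u_3,u_4,u_5)$ gives $\sigma_{5,3}(u_1,u_2,u_3,u_4,u_5)/\sigma_{5,2}(u_1,u_2,u_3,u_4,u_5)=u_i$, as desired.

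There is no serious obstacle here: the content is essentially bookkeeping. The only point requiring care — and the reason the earlier lemmas of this section were set up — is checking that the two explicit candidates for $u_5$ genuinely land in $U_{q+1}\setminus\{u_1,u_2,u_3,u_4\}$, since only then does the standing hypothesis $\sigma_{5,2}\neq 0$ apply and legitimize the division; these nonvanishing and distinctness facts are exactly what Lemmas~\ref{lem:sigma4notzero}, \ref{lem:sigma4-5U} and~\ref{lem:sigma4-5Sigma63} provide. A self-contained alternative avoids Lemma~\ref{lem:sigma4-5Sigma631} and verifies $\sigma_{6,3}=0$ directly by substituting the chosen $u_5$ and $x$ into $\sigma_{6,3}=\sigma_{4,3}+(u_5+u_6)\sigma_{4,2}+u_5u_6\sigma_{4,1}$; in the second case this reduces to the fact that $x\mapsto(\sigma_{4,3}+x\sigma_{4,2})/(\sigma_{4,2}+x\sigma_{4,1})$ is an involution (its defining matrix squares to $(\sigma_{4,2}^2+\sigma_{4,1}\sigma_{4,3})$ times the identity), but routing through Lemma~\ref{lem:sigma4-5Sigma631} is cleaner.
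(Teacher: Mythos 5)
Your proposal is correct and follows essentially the same route as the paper: the paper's proof of this lemma is the one line ``the desired conclusion follows from Lemma~\ref{lem:sigma4-5Sigma631},'' which implicitly relies on exactly the expansion $\sigma_{6,3}=\sigma_{5,3}+u_6\,\sigma_{5,2}$ and the division you spell out. Your additional verification that the two candidates for $u_5$ lie in $U_{q+1}\setminus\{u_1,u_2,u_3,u_4\}$ (so that the standing hypothesis legitimizes dividing by $\sigma_{5,2}$) is a detail the paper leaves implicit, and it is handled correctly.
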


\begin{proof}
The desired conclusion  then follows from Lemma \ref{lem:sigma4-5Sigma631}.
\end{proof}

We will need the following lemma whose proof is straightforward.
\begin{lemma}\label{lem:whatprob5}
Let  $\{u_1,u_2,u_3,u_4 \} \in \binom{U_{q+1}}{4}$ and $u_5 \in U_{q+1}$ such that  $\sigma_{5,2}\left (u_1,u_2,u_3,u_4,u_5 \right )\neq 0$.
Let $u_6= \frac{\sigma_{5,3}\left (u_1,u_2,u_3,u_4,u_5 \right )}{\sigma_{5,2}\left (u_1,u_2,u_3,u_4,u_5 \right ) }$.
Then we have the following. 
\begin{enumerate}
\item If $u_6=u_5$, then $u_5=  \sqrt{\frac{\sigma_{4,3}}{\sigma_{4,1}}} $.
\item If $u_6=u_i$, then $u_5= \frac{\sigma_{4,3}+u_i \sigma_{4,2}}{\sigma_{4,2}+u_i \sigma_{4,1}}$, where  $i \in \{1,2,3,4\}$.
\end{enumerate}
\end{lemma}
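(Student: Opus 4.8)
The plan is to read everything off from the recursion satisfied by elementary symmetric polynomials when one adjoins variables to $\{u_1,u_2,u_3,u_4\}$. Adjoining a single variable $w$ gives $\sigma_{5,2}(u_1,u_2,u_3,u_4,w)=\sigma_{4,2}+w\,\sigma_{4,1}$ and $\sigma_{5,3}(u_1,u_2,u_3,u_4,w)=\sigma_{4,3}+w\,\sigma_{4,2}$, and adjoining a further variable gives $\sigma_{6,3}(u_1,u_2,u_3,u_4,u_5,u_6)=\sigma_{5,3}(u_1,\dots,u_5)+u_6\,\sigma_{5,2}(u_1,\dots,u_5)=\sigma_{4,3}+(u_5+u_6)\sigma_{4,2}+u_5u_6\,\sigma_{4,1}$. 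Since $\sigma_{5,2}(u_1,\dots,u_5)\neq 0$ by hypothesis, the defining relation $u_6=\sigma_{5,3}/\sigma_{5,2}$ is exactly the statement that $\sigma_{6,3}(u_1,u_2,u_3,u_4,u_5,u_6)=0$; I would start both parts from this identity. In effect this lemma is the converse of Lemma \ref{lem:how5}, and the computation is the one done in the reverse direction in the proof of Lemma \ref{lem:sigma4-5Sigma631}.

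For Part 1, I would substitute $u_6=u_5$ into the identity for $\sigma_{6,3}$; because $u_5+u_5=0$ in characteristic two it collapses to $\sigma_{4,3}+u_5^2\,\sigma_{4,1}=0$. Invoking $\sigma_{4,1}\neq 0$ from Lemma \ref{lem:sigma4notzero}, this rearranges to $u_5^2=\sigma_{4,3}/\sigma_{4,1}$; since squaring is a bijection of $\gf(q^2)$, the square root is unique and hence $u_5=\sqrt{\sigma_{4,3}/\sigma_{4,1}}$.

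For Part 2, by the symmetry of $u_1,u_2,u_3,u_4$ I may take $i=4$, i.e.\ $u_6=u_4$. Substituting $u_6=u_4$ into the identity for $\sigma_{6,3}$ and grouping the terms that contain $u_5$ yields $(\sigma_{4,3}+u_4\sigma_{4,2})+u_5(\sigma_{4,2}+u_4\sigma_{4,1})=0$. Lemma \ref{lem:sigma4notzero} gives $\sigma_{4,2}+u_4\sigma_{4,1}\neq 0$, so dividing produces $u_5=\frac{\sigma_{4,3}+u_4\sigma_{4,2}}{\sigma_{4,2}+u_4\sigma_{4,1}}$, which is the claim. I do not expect any genuine obstacle here: the only nontrivial inputs are the non-vanishing of $\sigma_{4,1}$ and of $\sigma_{4,2}+u_i\sigma_{4,1}$, both already recorded in Lemma \ref{lem:sigma4notzero}, and the remainder is routine bookkeeping of the ESP recursions.
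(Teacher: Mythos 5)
Your proposal is correct and is exactly the computation the authors intend: the paper omits the proof as ``straightforward,'' and your argument uses the same ESP recursion $\sigma_{6,3}=\sigma_{4,3}+(u_5+u_6)\sigma_{4,2}+u_5u_6\sigma_{4,1}$ that appears in the proof of Lemma \ref{lem:sigma4-5Sigma631}, together with the non-vanishing facts from Lemma \ref{lem:sigma4notzero}. Both substitutions, the use of the Frobenius bijection for the square root in Part 1, and the symmetry reduction to $i=4$ in Part 2 are sound, so this is a complete and faithful filling-in of the omitted proof.
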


\begin{lemma}\label{lem:S11}
Let $q=2^m$ with $m$ even and  $\{u_1,u_2,u_3,u_4 \} \in \binom{U_{q+1}}{4}$ such that $\sigma_{5,2}(u_1,u_2,u_3,u_4,u_5)\neq 0$
 for any $u_5 \in U_{q+1} \setminus \{u_1,u_2,u_3,u_4\}$. 
Let $S$ be the subset of $U_{q+1}$ given by
$$\left \{ \frac{\sigma_{4,3}+u_i \sigma_{4,2}}{\sigma_{4,2}+u_i \sigma_{4,1}}: i=1,2,3,4 \right \} \bigcup
\left  \{u_i: i=1,2,3,4 \right \} \bigcup \left \{\sqrt{\frac{\sigma_{4,3}}{\sigma_{4,1}}}  \right \}.$$
Let $\tilde{u}_4$ and $\tilde{u}_5$   be the two solutions of the quadratic equation $u^2+au+b=0$, where $a=\frac{\sigma_{3,1}\sigma_{3,2}+\sigma_{3,3}}{\sigma_{3,1}^2+ \sigma_{3,2}}$ 
and $b=\frac{\sigma_{3,2}^2+\sigma_{3,1}\sigma_{3,3}}{\sigma_{3,1}^2+ \sigma_{3,2}}$.
Then $\tilde{u}_4 \not \in S$ and $\tilde{u}_5\not \in S$.
\end{lemma}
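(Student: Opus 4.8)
The plan is to assume for contradiction that $\tilde u_4\in S$ and to rule out each of the three kinds of elements making up $S$; the statement $\tilde u_5\notin S$ will then follow by the identical argument, since $S$ and the hypotheses are symmetric in $\tilde u_4$ and $\tilde u_5$. I would begin from Lemma~\ref{lem:quadeq-u4u5}, applied to the $3$-set $\{u_1,u_2,u_3\}$: it gives $\tilde u_4,\tilde u_5\in U_{q+1}$, $\tilde u_4\neq\tilde u_5$, and $\{u_1,u_2,u_3,\tilde u_4,\tilde u_5\}\in\cB_{\sigma_{5,2},q+1}$, so in particular $\tilde u_4,\tilde u_5\notin\{u_1,u_2,u_3\}$. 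If moreover $\tilde u_4=u_4$, then $\sigma_{5,2}(u_1,u_2,u_3,u_4,\tilde u_5)=\sigma_{5,2}(u_1,u_2,u_3,\tilde u_4,\tilde u_5)=0$ while $\tilde u_5\in U_{q+1}\setminus\{u_1,u_2,u_3,u_4\}$, contradicting the hypothesis of the lemma; hence $\tilde u_4,\tilde u_5\notin\{u_1,u_2,u_3,u_4\}$, which disposes of the sub-collection $\{u_i:1\le i\le 4\}$ of $S$.

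The core of the argument is the claim that $B:=\{u_1,u_2,u_3,u_4,\tilde u_4,\tilde u_5\}\in\cB_{\sigma_{6,3},q+1}$. Its six elements are distinct by the previous paragraph, and singling out the variable $u_4$ in $\sigma_{6,3}$ gives
\[
\sigma_{6,3}(u_1,u_2,u_3,u_4,\tilde u_4,\tilde u_5)=\sigma_{5,3}(u_1,u_2,u_3,\tilde u_4,\tilde u_5)+u_4\,\sigma_{5,2}(u_1,u_2,u_3,\tilde u_4,\tilde u_5).
\]
The last summand is $0$ by Lemma~\ref{lem:quadeq-u4u5}. For the first, I would substitute $\tilde u_4+\tilde u_5=a$ and $\tilde u_4\tilde u_5=b$ (Vieta's formulas in characteristic $2$) into $\sigma_{5,3}(u_1,u_2,u_3,\tilde u_4,\tilde u_5)=\sigma_{3,3}+a\sigma_{3,2}+b\sigma_{3,1}$, multiply through by $\sigma_{3,1}^2+\sigma_{3,2}$ (nonzero by Lemma~\ref{lem:sigma3not=0}), and insert the explicit values of $a$ and $b$; every term then cancels in pairs in characteristic $2$, so $\sigma_{5,3}(u_1,u_2,u_3,\tilde u_4,\tilde u_5)=0$ as well. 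Since $\tilde u_4\notin\{u_1,u_2,u_3,u_4\}$, the hypothesis forces $\sigma_{5,2}(u_1,u_2,u_3,u_4,\tilde u_4)\neq 0$, so the unique $x$ with $\sigma_{6,3}(u_1,u_2,u_3,u_4,\tilde u_4,x)=0$ is $x=\sigma_{5,3}(u_1,u_2,u_3,u_4,\tilde u_4)/\sigma_{5,2}(u_1,u_2,u_3,u_4,\tilde u_4)$; as $B\in\cB_{\sigma_{6,3},q+1}$, this element must be $\tilde u_5$, i.e.
\[
\tilde u_5=\frac{\sigma_{5,3}(u_1,u_2,u_3,u_4,\tilde u_4)}{\sigma_{5,2}(u_1,u_2,u_3,u_4,\tilde u_4)}.
\]

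It then remains to play this identity off against Lemma~\ref{lem:how5}. If $\tilde u_4=\sqrt{\sigma_{4,3}/\sigma_{4,1}}$, the first identity of Lemma~\ref{lem:how5} evaluates the right-hand side above to $\tilde u_4$, forcing $\tilde u_5=\tilde u_4$ --- impossible. If instead $\tilde u_4=(\sigma_{4,3}+u_j\sigma_{4,2})/(\sigma_{4,2}+u_j\sigma_{4,1})$ for some $j\in\{1,2,3,4\}$, the second identity of Lemma~\ref{lem:how5} evaluates it to $u_j$, forcing $\tilde u_5=u_j\in\{u_1,u_2,u_3,u_4\}$ --- also impossible. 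Together with the first paragraph these exhaust the ways $\tilde u_4$ could lie in $S$, so $\tilde u_4\notin S$, and by symmetry $\tilde u_5\notin S$. I expect the only real obstacle to be spotting the pivot: that $B$ itself lies in $\cB_{\sigma_{6,3},q+1}$. Once that is in hand, the rigidity of the ``completion map'' recorded in Lemma~\ref{lem:how5} --- phrased in terms of the $4$-set $\{u_1,u_2,u_3,u_4\}$ --- can be brought to bear on $\tilde u_4,\tilde u_5$, even though these were manufactured only from the $3$-set $\{u_1,u_2,u_3\}$; everything else is bookkeeping plus the single characteristic-$2$ cancellation above.
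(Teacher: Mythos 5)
Your proof is correct and follows essentially the same route as the paper: the paper also rules out the elements of $S$ one type at a time by noting that if $\tilde u_4\in S$ then Lemma \ref{lem:sigma4-5Sigma631} (equivalently Lemma \ref{lem:how5}) produces a relation $\sigma_{6,3}(u_1,u_2,u_3,u_4,\tilde u_4,w)=0$ with $w=\tilde u_4$ or $w=u_i$, and then the rigidity of the sixth coordinate forces $\tilde u_5=w$, a contradiction. The only difference is packaging: the paper invokes Lemma \ref{lem:4->5rigid} for that rigidity step, whereas you re-derive its content from scratch (your characteristic-$2$ cancellation showing $\sigma_{5,3}(u_1,u_2,u_3,\tilde u_4,\tilde u_5)=0$, hence $\{u_1,\dots,u_4,\tilde u_4,\tilde u_5\}\in\cB_{\sigma_{6,3},q+1}$, is exactly the implicit first equation in the proof of that lemma).
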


\begin{proof}
By the definition of $\tilde{u}_4, \tilde{u}_5$ and Lemma \ref{lem:u4+u5}, $u_4\not \in \{\tilde{u}_4, \tilde{u}_5\}$.
Suppose that $\tilde{u}_4= \sqrt{\frac{\sigma_{4,3}}{\sigma_{4,1}}} $.
From Lemma \ref{lem:sigma4-5Sigma631} or \ref{lem:how5}, one gets
\begin{align*}
 \sigma_{6,3}\left (u_1,u_2,u_3,u_4,\tilde{u}_4, \sqrt{\frac{\sigma_{4,3}}{\sigma_{4,1}}}\right )=0.
\end{align*}
From Lemma  \ref{lem:4->5rigid} and $\tilde{u}_5 \neq u_4$, $\tilde{u}_5=\sqrt{\frac{\sigma_{4,3}}{\sigma_{4,1}}}= \tilde{u}_4$,
which is contrary to $a\neq 0$. Thus, $\tilde{u}_4\neq  \sqrt{\frac{\sigma_{4,3}}{\sigma_{4,1}}} $. By the symmetry of $\tilde{u}_4$ and $ \tilde{u}_5$,
$\tilde{u}_5\neq  \sqrt{\frac{\sigma_{4,3}}{\sigma_{4,1}}} $.

Suppose that $\tilde{u}_4= \frac{\sigma_{4,3}+u_i \sigma_{4,2}}{\sigma_{4,2}+u_i \sigma_{4,1}} $.
From Lemma \ref{lem:sigma4-5Sigma631} or \ref{lem:how5}, one gets
\begin{align*}
 \sigma_{6,3}\left (u_1,u_2,u_3,u_4,u_i, \tilde{u}_4 \right )=0.
\end{align*}
From Lemma  \ref{lem:4->5rigid} and $\tilde{u}_5 \neq u_4$, $\tilde{u}_5= u_i$,
which is contrary to the definition of $\tilde{u}_5$. Thus, $\tilde{u}_4\neq \frac{\sigma_{4,3}+u_i \sigma_{4,2}}{\sigma_{4,2}+u_i \sigma_{4,1}}$. By
the symmetry of $\tilde{u}_4$ and $ \tilde{u}_5$,
$\tilde{u}_5\neq \frac{\sigma_{4,3}+u_i \sigma_{4,2}}{\sigma_{4,2}+u_i \sigma_{4,1}} $.
This completes the proof.
\end{proof}

\noindent \emph{Proof of Theorem \ref{thm:esp4-design}.} 
Let $\{u_1,u_2,u_3,u_4\}$ be a fixed $4$-subset of $U_{q+1}$. Set 
$$S=\left \{ \frac{\sigma_{4,3}+u_i \sigma_{4,2}}{\sigma_{4,2}+u_i \sigma_{4,1}}: i=1,2,3,4 \right \} \bigcup
\left  \{u_i: i=1,2,3,4 \right \} \bigcup \left \{\sqrt{\frac{\sigma_{4,3}}{\sigma_{4,1}}}  \right \}.$$
For any $u_5 \not \in \{u_i: i=1,2,3,4 \}$, $\sigma_{5,2}(u_1,u_2, u_3, u_4, u_5)\neq 0$ from Lemma \ref{lem:B52odd}.
Define
$$\mathcal T=\left \{ \left \{ u_5, \frac{\sigma_{5,3}(u_1,u_2, u_3, u_4, u_5)}{\sigma_{5,2}(u_1,u_2, u_3, u_4, u_5)} \right \}  : u_5 \in U_{q+1} \setminus S \right \}.$$
From Lemmas \ref{lem:how5} and \ref{lem:whatprob5}, $\frac{\sigma_{5,3}(u_1,u_2, u_3, u_4, u_5)}{\sigma_{5,2}(u_1,u_2, u_3, u_4, u_5)} \not \in S$ if $u_5 \not \in S$.
By Lemma \ref{lem:sigma4-5Sigma63}, $\# \mathcal T =\frac{(q+1-9)}{2}$.
From Lemma \ref{lem:whatprob5} and $\frac{\sigma_{5,3}(u_1,u_2, u_3, u_4, u_5)}{\sigma_{5,2}(u_1,u_2, u_3, u_4, u_5)} \in U_{q+1}$, 
we deduce that $\{u_1,u_2, u_3, u_4, u_5,u_6\} \in \cB_{\sigma_{6,3},q+1}$ for any $\{u_5, u_6\} \in \mathcal T$.
 
On the other hand, let $\{u_1,u_2, u_3, u_4, u_5,u_6\} \in \cB_{\sigma_{6,3},q+1}$. Employing   Lemma \ref{lem:how5}, $\{u_5, u_6\} \in \mathcal T$.
Thus, $\{u_1,u_2, u_3, u_4, u_5,u_6\} \in \cB_{\sigma_{6,3},q+1}$ if and only if $\{u_5, u_6\} \in \mathcal T$.
Hence, $(U_{q+1}, \cB_{\sigma_{6,3}, q+1})$ is a $4$-$\left (q+1, 6, \frac{q-8}{4} \right )$ design.  This completes the proof. \\
\rightline{$\square$}

\noindent \emph{Proof of Theorem \ref{thm:espsteiner}.} 
Let $\{u_1,u_2,u_3\}$ be a fixed $3$-subset of $U_{q+1}$. 
Employing Lemmas \ref{lem:u4+u5} and \ref{lem:quadeq-u4u5}, $\{u_1,u_2, u_3, u_4, u_5\} \in \cB_{\sigma_{6,3},q+1}$ if and only if $u_4$ and $u_5$ are the two 
solutions of the quadratic equation $u^2+au+b=0$ in $U_{q+1}$, where $a=\frac{\sigma_{3,1}\sigma_{3,2}+\sigma_{3,3}}{\sigma_{3,1}^2+ \sigma_{3,2}}$ 
and $b=\frac{\sigma_{3,2}^2+\sigma_{3,1}\sigma_{3,3}}{\sigma_{3,1}^2+ \sigma_{3,2}}$.
Hence, $(U_{q+1}, \cB_{\sigma_{5,2}, q+1})$ is a Steiner System $S(3,5,q+1)$.  This completes the proof.\\
\rightline{$\square$}

\noindent \emph{Proof of Theorem \ref{thm:esp3-design}.} 
For any   $3$-subset $\{u_1,u_2,u_3\}$ of $U_{q+1}$, let $Q(u_1,u_2,u_3)$ denote the $2$-subset $\left \{u\in U_{q+1}: u^2+au+b=0 \right \}$, where
 $a=\frac{\sigma_{3,1}\sigma_{3,2}+\sigma_{3,3}}{\sigma_{3,1}^2+ \sigma_{3,2}}$ 
and $b=\frac{\sigma_{3,2}^2+\sigma_{3,1}\sigma_{3,3}}{\sigma_{3,1}^2+ \sigma_{3,2}}$. 
Next, let $\{u_1,u_2,u_3\}$ be fixed.
Set
\begin{align*}
\mathcal T^0_{1}= \left \{ S^0 \cup \{u_6\}: u_6 \in U_{q+1} \setminus S^0 \right \},
\end{align*}
and
\begin{align*}
\mathcal T^0_{i,j}= \left \{ \{u_1,u_2,u_3,u_4\} \cup Q(u_i,u_j,u_4): u_4 \in U_{q+1} \setminus S^0 \right \},
\end{align*}
where $1\le i <  j \le 3$ and $S^0= \{u_1,u_2,u_3\} \cup Q(u_1,u_2,u_3)$.
Let $\mathcal T^0= \mathcal T^0_{1} \cup \mathcal T^0_{1,2} \cup \mathcal T^0_{1,3} \cup \mathcal T^0_{2,3}$.
It is easily checked that $\{u_1,u_2,u_3,u_4,u_5,u_6\} \in \cB^0_{\sigma_{6,3},q+1}$ if and only
if $\{u_1,u_2,u_3,u_4,u_5,u_6\} \in \mathcal T^0$.
Note that $\# \mathcal T^0_{1}= q-4 $ and $\# \mathcal T^0_{i,i}= \frac{q-4}{3}$, where $1\le i < j \le 3$.
From Lemma \ref{lem:one-one}, $\mathcal T^0_{1}$, $\mathcal T^0_{1,2}$, $\mathcal T^0_{1,3}$ and $\mathcal T^0_{2,3}$
are pairwise disjoint. Then, $(U_{q+1}, \cB^0_{\sigma_{6,3}, q+1})$ is a $3$-$\left (q+1, 6, 2(q-4) \right )$ design.

Let $\{u_1,u_2,u_3\}$ be a fixed $3$-subset of $U_{q+1}$.
Define
\begin{align*}
\mathcal T^1= \left \{ \left \{u_1, u_2, u_3, u_4, u_5, u_6\right  \}:
 u_4 \in U_{q+1}\setminus S^0 , u_5   \in U_{q+1} \setminus (S^0 \cup S^1) \right  \},
\end{align*}
where  $S^0=\{u_1, u_2, u_3\} \cup Q(u_1,u_2, u_3)$, $S^1=\left \{ \frac{\sigma_{4,3}+u_i \sigma_{4,2}}{\sigma_{4,2}+u_i \sigma_{4,1}}: 1\le i \le 4 \right \} \bigcup \left \{\sqrt{\frac{\sigma_{4,3}}{\sigma_{4,1}}}  \right \}$,  and $$u_6=\frac{\sigma_{5,3}(u_1,u_2, u_3, u_4, u_5)}{\sigma_{5,2}(u_1,u_2, u_3, u_4, u_5)}.$$
Let $\mathcal T= \mathcal T^0_{1} \cup \mathcal T^1 $.
It is easily checked that $B \in \cB_{\sigma_{6,3},q+1}$  if and only
if $B \in \mathcal T$.
Note that $\# \mathcal T^0_{1}= q-4 $ and $\# \mathcal T^1=\frac{(q+1-\# S^0)(q+1-\# (S^0\cup S^1))}{6}$.
By Lemmas \ref{lem:sigma4-5Sigma63} and  \ref{lem:S11}, $\# (S^0\cup S^1)=11$.
From Lemma \ref{lem:one-one}, $\mathcal T^0_{1}$ and $\mathcal T^1$
are disjoint. Then, $(U_{q+1}, \cB_{\sigma_{6,3}, q+1})$ is a $3$-$\left (q+1, 6, \frac{(q-4)^2}{6} \right )$ design.
This completes the proof.  \\
\rightline{$\square$}

\section{ Infinite families of  BCH codes supporting $t$-designs for $t =3, 4$}
Throughout this section, let $q=2^m$, where  $m$ is a positive integer. 
In this section, we consider the 
narrow-sense BCH code $\C_{(q, q+1, 4, 1)}$ over $\gf(q)$ and its dual, and prove that they are almost MDS, and 
support $4$-designs when $m\ge 5$ is  odd and $3$-designs when  $m\ge 4$ is even. 

 For a positive integer $\ell$, define a $6\times  \ell$ matrix $M_{\ell}$ by
\begin{eqnarray}\label{eq:M}
\left [
\begin{array}{llll} 
u_1^{-3} & u_2^{-3} & \cdots & u_{\ell}^{-3} \\ 
u_1^{-2} & u_2^{-2} & \cdots & u_{\ell}^{-2} \\ 
u_1^{-1} & u_2^{-1} & \cdots & u_{\ell}^{-1} \\ 
u_1^{+1} & u_2^{+1} & \cdots & u_{\ell}^{+1} \\ 
u_1^{+2} & u_2^{+2} & \cdots & u_{\ell}^{+2} \\ 
u_1^{+3} & u_2^{+3} & \cdots & u_{\ell}^{+3} \\ 
\end{array} 
\right ],
\end{eqnarray} 
where $u_1, \cdots,  u_{\ell} \in U_{q+1}$. For $r_1, \cdots, r_i \in \{\pm1, \pm2, \pm3\}$, let $M_{\ell}[r_1, \cdots, r_i]$
denote the submatrix of $M_{\ell}$  obtained by deleting   the rows 
$(u_1^{r_1},  u_2^{r_1},  \cdots ,u_{\ell}^{r_1})$, $\cdots$ ,  $(u_1^{r_i} , u_2^{r_i} , \cdots , u_{\ell}^{r_i})$ of the matrix $M_{\ell}$.

\begin{lemma}\label{lem:sol-rank}
Let $M_{\ell}$ be the matrix given by (\ref{eq:M}) with  $\{u_1, \cdots,  u_{\ell}\} \in  \binom{U_{q+1}}{\ell}$.
Consider the system of homogeneous linear equations defined by
\begin{align}\label{eq:Mx=0}
M_{\ell} (x_1, \cdots,  x_{\ell})^{T}=0.
\end{align}
Then (\ref{eq:Mx=0}) has a nonzero  solution $(x_1, \cdots,  x_{\ell})$ in $\gf(q)^{\ell}$  if and only if
$\rank (M_{\ell})<\ell$, where $\rank (M_{\ell})$ denotes the rank of  the matrix $M_{\ell}$.
\end{lemma}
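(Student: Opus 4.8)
The plan is to recognize this as a Galois descent statement: over $\gf(q^2)$ a homogeneous system has a nonzero solution precisely when its coefficient rank drops, and the special conjugate-pairing structure of the rows of $M_\ell$ lets one replace ``solution over $\gf(q^2)$'' by ``solution over $\gf(q)$''. One direction is free: if $(x_1,\dots,x_\ell)\in\gf(q)^\ell\subseteq\gf(q^2)^\ell$ is a nonzero solution of (\ref{eq:Mx=0}), then the $\ell$ columns of $M_\ell$ are $\gf(q^2)$-linearly dependent, so $\rank(M_\ell)<\ell$.

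For the converse, suppose $\rank(M_\ell)<\ell$. By rank--nullity over $\gf(q^2)$, the system (\ref{eq:Mx=0}) has a nonzero solution $x=(x_1,\dots,x_\ell)\in\gf(q^2)^\ell$, but a priori its entries need not lie in $\gf(q)$. The key observation is that $u^q=u^{-1}$ for every $u\in U_{q+1}$, so raising the row-$r$ equation $\sum_{j=1}^\ell u_j^{r}x_j=0$ to the $q$-th power yields $\sum_{j=1}^\ell u_j^{-r}x_j^{q}=0$, which is exactly the row-$(-r)$ equation evaluated at $(x_1^q,\dots,x_\ell^q)$. Since the exponent set $\{\pm1,\pm2,\pm3\}$ indexing the rows of $M_\ell$ is closed under negation, it follows that $x^{(q)}:=(x_1^q,\dots,x_\ell^q)$ is again a solution of (\ref{eq:Mx=0}). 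Then for any $\zeta\in\gf(q^2)$, linearity gives that $\zeta x+\zeta^q x^{(q)}$ solves (\ref{eq:Mx=0}) as well, and its $j$-th coordinate equals $\zeta x_j+(\zeta x_j)^q=\Tr_{q^2/q}(\zeta x_j)\in\gf(q)$. Finally, choosing an index $j_0$ with $x_{j_0}\neq0$ and using that the relative trace $\Tr_{q^2/q}\colon\gf(q^2)\to\gf(q)$ is surjective, one can pick $\zeta$ with $\Tr_{q^2/q}(\zeta x_{j_0})\neq0$; the resulting vector $\zeta x+\zeta^q x^{(q)}$ is then a nonzero element of $\gf(q)^\ell$ solving (\ref{eq:Mx=0}).

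The argument is essentially routine linear algebra, so there is no serious obstacle; the only points requiring care are the bookkeeping that the Frobenius map permutes the six rows of $M_\ell$ among themselves (which is what keeps $x^{(q)}$ in the null space, and is special to the symmetric exponent pattern $\{\pm1,\pm2,\pm3\}$), and the use of surjectivity of $\Tr_{q^2/q}$ to guarantee that the descended solution is nonzero. I would present the proof in exactly the order above, as the trivial direction first and then the descent step.
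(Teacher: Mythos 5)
Your proposal is correct and follows essentially the same route as the paper: the paper also takes a nonzero $\gf(q^2)$-solution $\mathbf{x}'$, rescales it by $\alpha/x_{i_0}'$ with $\alpha$ a primitive element (a specific instance of your trace-nonvanishing choice of $\zeta$), and descends via $x\mapsto x+x^q$. The only difference is that you explicitly verify the key fact the paper leaves as ``easily checked,'' namely that Frobenius permutes the rows of $M_\ell$ because $u^q=u^{-1}$ on $U_{q+1}$ and the exponent set $\{\pm1,\pm2,\pm3\}$ is closed under negation, so the conjugated vector remains in the null space.
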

\begin{proof}
It is obviously that $\rank (M_{\ell})<\ell$ if (\ref{eq:Mx=0}) has a nonzero  solution $(x_1, \cdots,  x_{\ell})$ in $\gf(q)^{\ell}$.

Conversely, suppose that $\rank (M_{\ell})<\ell$. Then, there exists a nonzero vector $\mathbf{x}'=(x_1', \cdots, x_{\ell}') \in \gf(q^2)^{\ell}$
such that $M_{\ell} \mathbf{x}'^{T}=0$. Choose an $i_0 \in \{1, \cdots, \ell \}$  such that $x_{i_0}'\neq 0$.
Put $$\mathbf{x}=(x_1''+x_1''^q, \cdots, x_{i_0}''+x_{i_0}''^q,\cdots, x_{\ell}''+x_{\ell}''^q),$$ 
where $(x_1'', \cdots, x_{\ell}'')= \frac{\alpha}{x_{i_0}'}\mathbf{x'}$ and $\alpha$  is a primitive element of $\gf(q^2)$.
It's easily checked that $M_{\ell} \mathbf{x}^T=0$ and $\mathbf x \in \gf(q)^{\ell} \setminus \{\mathbf 0\}$.
This completes the proof.
\end{proof}

\begin{lemma}\label{lem:rank44}
Let $M_{4}$ be the matrix given by (\ref{eq:M}) with  $\{u_1, u_2, u_3,u_4\} \in  \binom{U_{q+1}}{4}$.
Then   $\rank (M_{4})=4$.
\end{lemma}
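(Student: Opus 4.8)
\emph{Proof strategy.} Since $M_4$ has $6$ rows and $4$ columns, $\rank(M_4)\le 4$, and the assertion is equivalent to the existence of a nonsingular $4\times 4$ submatrix of $M_4$. The plan is to single out one well-chosen minor, evaluate it explicitly, and read off its non-vanishing from the fact that $\sigma_{4,3}\ne 0$, which is already available from Lemma~\ref{lem:sigma4notzero}.

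I would take the submatrix $M_4[-3,-2]$, i.e.\ the rows of $M_4$ with exponents $-1,+1,+2,+3$. Multiplying its $i$-th column by $u_i$ (legitimate, as $u_i\in U_{q+1}\subseteq\gf(q^2)^{*}$) scales the determinant by $u_1u_2u_3u_4=\sigma_{4,4}$ and produces the generalized Vandermonde matrix in the exponents $0,2,3,4$:
\begin{align*}
\sigma_{4,4}\cdot\det M_4[-3,-2]
=\det\!\begin{pmatrix}
1&1&1&1\\
u_1^{2}&u_2^{2}&u_3^{2}&u_4^{2}\\
u_1^{3}&u_2^{3}&u_3^{3}&u_4^{3}\\
u_1^{4}&u_2^{4}&u_3^{4}&u_4^{4}
\end{pmatrix}.
\end{align*}

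By the standard generalized-Vandermonde factorization the right-hand side equals
\begin{align*}
\det\!\begin{pmatrix}
1&1&1&1\\
u_1^{2}&u_2^{2}&u_3^{2}&u_4^{2}\\
u_1^{3}&u_2^{3}&u_3^{3}&u_4^{3}\\
u_1^{4}&u_2^{4}&u_3^{4}&u_4^{4}
\end{pmatrix}
=\sigma_{4,3}\prod_{1\le i<j\le 4}(u_i+u_j),
\end{align*}
the symmetric factor being $\sigma_{4,3}=e_3$, the Schur polynomial indexed by the partition $(1,1,1)$ (equivalently, one checks that both sides are alternating of total degree $9$ and agree on a monomial). Consequently $\det M_4[-3,-2]=\sigma_{4,4}^{-1}\,\sigma_{4,3}\prod_{1\le i<j\le 4}(u_i+u_j)$.

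It then remains only to verify that every factor is nonzero: the $u_i$ are pairwise distinct, so $u_i+u_j\ne 0$ in characteristic $2$; $\sigma_{4,4}=u_1u_2u_3u_4\ne 0$ as each $u_i\in\gf(q^2)^{*}$; and $\sigma_{4,3}\ne 0$ by Lemma~\ref{lem:sigma4notzero}. Hence $\det M_4[-3,-2]\ne 0$, which forces $\rank(M_4)=4$. I do not anticipate a genuine obstacle: the computation after the choice of minor is routine, and the only thing to get right is to pick a $4\times 4$ submatrix whose determinant visibly carries a factor already proved nonzero — and $M_4[-3,-2]$ does exactly that. (If one prefers to argue through Lemma~\ref{lem:sol-rank} instead, a nontrivial dependence among the columns would, via the rows with exponents $1,2,3$, force $(x_iu_i)_i$ to be a scalar multiple of $(1/P'(u_i))_i$ with $P(x)=\prod_i(x+u_i)$, and the exponent $-1$ row would then kill that scalar because $\sum_i u_i^{-2}/P'(u_i)=\sigma_{4,3}/\sigma_{4,4}^{2}\ne 0$; either way the decisive ingredient is $\sigma_{4,3}\ne 0$.)
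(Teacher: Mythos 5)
Your proof is correct and follows essentially the same route as the paper: exhibit one explicit $4\times 4$ minor of $M_4$, evaluate it as a generalized Vandermonde determinant times a symmetric factor, and invoke a previously established non-vanishing. The paper simply chooses the minor $M_4[2,3]$ (rows with exponents $-3,-2,-1,+1$), whose determinant equals $\frac{\prod_{1\le i<j\le 4}(u_i+u_j)}{\sigma_{4,4}^{3}}\,\sigma_{4,1}$ and is nonzero directly by Part 1 of Lemma~\ref{lem:sigma41}, whereas you use $M_4[-3,-2]$ and the factor $\sigma_{4,3}\ne 0$ from Lemma~\ref{lem:sigma4notzero} --- an equivalent choice, since $\sigma_{4,3}=\sigma_{4,4}\sigma_{4,1}^{q}$.
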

\begin{proof}
Assume that $\rank (M_{4})<4$. Then $\det (M_4[2,3])=\frac{\prod_{1\le i<j\le 4} (u_i+u_j)}{\sigma_{4,4}^3} (u_1+u_2+u_3+u_4)$=0,
which is contrary to Lemma \ref{lem:sigma41}. This completes the proof.
\end{proof}

\begin{lemma}\label{lem:rank54}
Let $M_{5}$ be the matrix given by (\ref{eq:M}) with  $\{u_1, \cdots,  u_5\} \in  \binom{U_{q+1}}{5}$.
Then   $\rank (M_{5})=4$ if and only if $\sigma_{5,2}(u_1, \cdots, u_5)=0$.
\end{lemma}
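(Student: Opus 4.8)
Let $M_5$ be the matrix given by (\ref{eq:M}) with $\{u_1,\ldots,u_5\}\in\binom{U_{q+1}}{5}$. Then $\rank(M_5)=4$ if and only if $\sigma_{5,2}(u_1,\ldots,u_5)=0$.

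\textbf{Plan.} The plan is to pin down $\rank(M_5)$ in each regime; since $M_5$ is $6\times 5$, it suffices to show $\rank(M_5)=5$ when $\sigma_{5,2}\neq 0$ and $\rank(M_5)=4$ when $\sigma_{5,2}=0$. For the first regime I would exhibit a nonvanishing $5\times 5$ minor. Deleting the row $(u_1^{3},\dots,u_5^{3})$ and pulling $u_j^{-3}$ out of the $j$-th column turns what remains into a generalized Vandermonde matrix with exponent set $\{0,1,2,4,5\}$, whose determinant equals, by a computation entirely analogous to the one carried out for $\det(M_4[2,3])$ in the proof of Lemma~\ref{lem:rank44},
\[
\sigma_{5,5}^{-3}\Big(\prod_{1\le i<j\le 5}(u_i+u_j)\Big)\,\sigma_{5,2}.
\]
Since the $u_j$ are pairwise distinct and nonzero, this is nonzero precisely when $\sigma_{5,2}\neq 0$, which gives $\rank(M_5)=5$ in that case (and $\rank(M_5)\le 4$ whenever $\sigma_{5,2}=0$).

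For the second regime I would first note $\rank(M_5)\ge \rank(M_4)=4$ by Lemma~\ref{lem:rank44} applied to $\{u_1,u_2,u_3,u_4\}$, so it only remains to show $\rank(M_5)\le 4$ when $\sigma_{5,2}=0$, i.e.\ to produce two linearly independent vectors $v\in\gf(q^2)^6$ with $vM_5=0$. Writing $v=(v_{-3},v_{-2},v_{-1},v_1,v_2,v_3)$ and attaching to it the polynomial $p_v(T)=\sum_{r}v_r T^{r+3}$ (sum over $r\in\{\pm1,\pm2,\pm3\}$), one sees that $\deg p_v\le 6$, that the coefficient of $T^3$ in $p_v$ vanishes (there is no $r=0$ term), and that $vM_5=0$ is equivalent to $p_v(u_1)=\dots=p_v(u_5)=0$ (using $u_j\neq 0$); conversely every polynomial of degree $\le 6$ with zero $T^3$-coefficient and these five roots arises this way. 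Expanding $\prod_{j=1}^5(T-u_j)=T^5+\sigma_{5,1}T^4+\sigma_{5,2}T^3+\sigma_{5,3}T^2+\sigma_{5,4}T+\sigma_{5,5}$, the hypothesis $\sigma_{5,2}=0$ makes this polynomial itself eligible, yielding $v^{(1)}=(\sigma_{5,5},\sigma_{5,4},0,\sigma_{5,1},1,0)$; and $T\prod_{j=1}^5(T-u_j)$ is eligible as soon as $\sigma_{5,3}=0$, yielding $v^{(2)}=(0,\sigma_{5,5},\sigma_{5,4},0,\sigma_{5,1},1)$. These two vectors are manifestly independent.

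\textbf{What makes it work, and the only delicate point.} The one input beyond routine bookkeeping is the implication $\sigma_{5,2}=0\Rightarrow\sigma_{5,3}=0$, which is where membership in $U_{q+1}$ enters: from $u_j^{q+1}=1$ one gets $\sigma_{5,3}(u_1,\dots,u_5)=\sigma_{5,5}\,\sigma_{5,2}(u_1^{-1},\dots,u_5^{-1})=\sigma_{5,5}\,\sigma_{5,2}^{\,q}$ with $\sigma_{5,5}=\prod_j u_j\neq 0$. (More conceptually, the same polynomial correspondence identifies the left null space of $M_5$ with $\{(a,b)\in\gf(q^2)^2: a\sigma_{5,3}+b\sigma_{5,2}=0\}$, of dimension $2$ exactly when $\sigma_{5,2}=\sigma_{5,3}=0$ and $1$ otherwise, recovering both regimes simultaneously.) I do not expect a real obstacle here; the thing to be careful about is to resist analyzing all six $5\times 5$ minors of $M_5$ directly — exhibiting the two null vectors above and invoking $\rank(M_4)=4$ is much shorter and avoids extra symmetric-function identities.
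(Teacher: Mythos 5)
Your proof is correct, and for the harder half it takes a genuinely different route from the paper's. The paper proves the lemma by computing all six maximal minors $\det(M_5[r])$, $r\in\{\pm1,\pm2,\pm3\}$, and observing that each equals $\prod_{1\le i<j\le 5}(u_i+u_j)$ times a multiple of $\sigma_{5,2}$ or of $\sigma_{5,2}^q$; hence all six vanish exactly when $\sigma_{5,2}=0$, and Lemma \ref{lem:rank44} supplies the lower bound $\rank(M_5)\ge 4$. You compute only the single minor $\det(M_5[3])$ (your generalized-Vandermonde evaluation agrees with the paper's formula, the Schur polynomial for the exponent set $\{0,1,2,4,5\}$ being $e_2=\sigma_{5,2}$), which settles the case $\sigma_{5,2}\ne 0$, and then replace the remaining five determinant computations by exhibiting the two independent left null vectors coming from $\prod_{j}(T-u_j)$ and $T\prod_{j}(T-u_j)$. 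The identity $\sigma_{5,3}=\sigma_{5,5}\,\sigma_{5,2}^{\,q}$, which you correctly isolate as the one place where $u_j\in U_{q+1}$ enters, is exactly what is hidden in the paper's formula $\det(M_5[-3])=\frac{\prod_{i<j}(u_i+u_j)}{\sigma_{5,5}}\sigma_{5,2}^q$; your version makes it explicit, and the correspondence with $(aT+b)\prod_j(T-u_j)$ gives a one-line determination of the left kernel's dimension in both regimes, which is arguably cleaner than the minor-by-minor check. One presentational nit: the parenthetical claim ``and $\rank(M_5)\le 4$ whenever $\sigma_{5,2}=0$'' at the end of your first paragraph does not follow from the vanishing of that single minor; it is only justified by the null-vector construction you give afterwards, so it should be deferred to that point.
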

\begin{proof}
First, note that
\begin{eqnarray*}
\left\{ 
\begin{array}{ll}
\det (M_5[3])&= \frac{\prod_{1\le i<j\le 5} (u_i+u_j)}{\sigma_{5,5}^3} \sigma_{5,2},\\
\det (M_5[2])&= \frac{\prod_{1\le i<j\le 5} (u_i+u_j)}{\sigma_{5,5}^3}\left (  \sigma_{5,1} \sigma_{5,2}+ \sigma_{5,5} \sigma_{5,2}^q \right ),\\
\det (M_5[1])&= \frac{\prod_{1\le i<j\le 5} (u_i+u_j)}{\sigma_{5,5}^3}\left (  \sigma_{5,1} \sigma_{5,5}\sigma_{5,2}^q+ \sigma_{5,2}^2  \right ),\\
 \det (M_5[-3])&= \frac{\prod_{1\le i<j\le 5} (u_i+u_j)}{\sigma_{5,5}} \sigma_{5,2}^q,\\
 \det (M_5[-2])&= \frac{\prod_{1\le i<j\le 5} (u_i+u_j)}{\sigma_{5,5}}\left (  \sigma_{5,1}^q \sigma_{5,2}^q+ \sigma_{5,5}^q \sigma_{5,2} \right ),\\
 \det (M_5[-1])& =\frac{\prod_{1\le i<j\le 5} (u_i+u_j)}{\sigma_{5,5}}\left (  \sigma_{5,1}^q \sigma_{5,5}^q\sigma_{5,2}+ \sigma_{5,2}^{2q}  \right ).\\
\end{array}
\right. 
\end{eqnarray*} 
The desired conclusion then follows from Lemma \ref{lem:rank44}. This completes the proof.
\end{proof}

\begin{lemma}\label{lem:rank65}
Let $M_{6}$ be the matrix given by (\ref{eq:M}) with  $\{u_1, \cdots,  u_6\} \in  \binom{U_{q+1}}{6}$.
Then   $\rank (M_{6})<6$ if and only if $\sigma_{6,3}(u_1, \cdots, u_6)=0$.
\end{lemma}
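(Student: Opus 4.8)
The plan is to exploit that $M_6$ is a square $6\times 6$ matrix, so that $\rank(M_6)<6$ is equivalent to $\det(M_6)=0$, and then to compute $\det(M_6)$ explicitly in the same spirit as the formulas recorded in Lemmas~\ref{lem:rank44} and~\ref{lem:rank54}. First I would factor $u_j^{-3}$ out of the $j$-th column of $M_6$ for each $j$; this reduces $M_6$ to the generalized Vandermonde matrix $V=(u_j^{a_i})$ whose row-exponents run over $\{0,1,2,4,5,6\}$, i.e. the exponent $3$ is the unique missing value in $\{0,1,\dots,6\}$. Hence $\det(M_6)=\sigma_{6,6}^{-3}\det(V)$, where $\sigma_{6,6}=u_1u_2\cdots u_6$.

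Second, I would evaluate $\det(V)$ via the classical bialternant identity for generalized Vandermonde determinants. Writing the exponent set in decreasing order as $(6,5,4,2,1,0)$ and matching it to $(\lambda_j+6-j)_{j=1}^{6}$ forces the partition $\lambda=(1,1,1,0,0,0)$, so $\det(V)$ equals the Schur polynomial $s_{(1,1,1)}(u_1,\dots,u_6)$ times the ordinary Vandermonde determinant $\prod_{1\le i<j\le 6}(u_j-u_i)$; since $s_{(1,1,1)}=e_3=\sigma_{6,3}$ and the characteristic is $2$, this yields
\[
\det(M_6)=\frac{\sigma_{6,3}(u_1,\dots,u_6)}{\sigma_{6,6}^{3}}\prod_{1\le i<j\le 6}(u_i+u_j).
\]
If one prefers to avoid quoting the Schur formula, one can instead note that $\det(V)$ is a polynomial vanishing whenever two of the $u_i$ agree, hence divisible by the Vandermonde product $\prod_{1\le i<j\le 6}(u_i-u_j)$; the quotient is a symmetric polynomial of degree $18-15=3$, which a single well-chosen specialization pins down as $\sigma_{6,3}$ up to a nonzero scalar. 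This is exactly the computation implicit in the determinants already listed for $M_5$ and $M_4$.

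Finally, I would draw the conclusion. Because $u_1,\dots,u_6$ are pairwise distinct elements of $U_{q+1}\subseteq\gf(q^2)^{*}$, we have $\prod_{1\le i<j\le 6}(u_i+u_j)\neq 0$ and $\sigma_{6,6}\neq 0$, so the displayed identity gives $\det(M_6)=0$ if and only if $\sigma_{6,3}(u_1,\dots,u_6)=0$. Equivalently, $\rank(M_6)<6$ if and only if $\sigma_{6,3}(u_1,\dots,u_6)=0$, which is the claim. The only step requiring real care is the determinant identity itself, and that is just a routine instance of the generalized Vandermonde computation already used (implicitly) in the preceding lemmas, so I do not expect a genuine obstacle here.
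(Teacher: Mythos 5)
Your proof is correct and follows essentially the same route as the paper, which simply asserts the identity $\det(M_6)=\sigma_{6,6}^{-3}\,\sigma_{6,3}\prod_{1\le i<j\le 6}(u_i+u_j)$ and concludes. Your derivation of that identity via the generalized Vandermonde/Schur computation (with $s_{(1,1,1)}=e_3$) is a valid justification of the step the paper leaves implicit.
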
 

\begin{proof}
Note that 
$$\det (M_6)= \frac{\prod_{1\le i<j\le 6} (u_i+u_j)}{\sigma_{6,6}^3} \sigma_{6,3},$$
which completes the proof.
\end{proof}

\begin{lemma}\label{lem:sys-wt6}
Let $q=2^m$ with $m$ even and $M_{6}$ be the matrix given by (\ref{eq:M}) with  $\{u_1, \cdots,  u_6\} \in  \binom{U_{q+1}}{6}$.
Let $\{u_1, \cdots, u_6\} \in \cB^1_{\sigma_{6,3},q+1}$
, where $\cB^1_{\sigma_{6,3},q+1}$ is defined by (\ref{eq:B1}).
Then,  the set of all solutions of the system $M_6(x_1, \cdots, x_6)^T=0$ over $\gf(q)^6$ is
$$\left \{ (ax_1, \cdots, a x_6) : a\in \gf(q)  \right \},$$
where $(x_1, \cdots, x_6) $ is  a vector in $\left ( \gf(q)^*\right )^6$.
 \end{lemma}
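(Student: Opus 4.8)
The plan is to read off the relevant rank information from the hypothesis $\{u_1,\cdots,u_6\}\in\cB^1_{\sigma_{6,3},q+1}$ and feed it into Lemmas \ref{lem:sol-rank}, \ref{lem:rank44}, \ref{lem:rank54} and \ref{lem:rank65}. First I would unwind the definition (\ref{eq:B1}): since $\cB^1_{\sigma_{6,3},q+1}=\cB_{\sigma_{6,3},q+1}\setminus\cB^0_{\sigma_{6,3},q+1}$, membership means $\sigma_{6,3}(u_1,\cdots,u_6)=0$ while \emph{no} $5$-subset of $\{u_1,\cdots,u_6\}$ belongs to $\cB_{\sigma_{5,2},q+1}$; equivalently $\sigma_{5,2}(u_{i_1},\cdots,u_{i_5})\neq 0$ for every $1\le i_1<\cdots<i_5\le 6$.

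Next I would determine the rank of each $6\times 5$ matrix obtained from $M_6$ by deleting one column. Deleting the $j$-th column leaves precisely the matrix $M_5$ attached to the $5$-subset $\{u_i:i\neq j\}$. Any four of its columns form an $M_4$ of rank $4$ (Lemma \ref{lem:rank44}), so its rank is at least $4$; combined with Lemma \ref{lem:rank54} and $\sigma_{5,2}\neq 0$ this forces its rank to be $5$. Hence, for each $j$, the only vector over $\gf(q)$ (indeed over $\gf(q^2)$) annihilated by this submatrix is the zero vector.

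Then I would produce and characterize the $\gf(q)$-solution space of $M_6(x_1,\cdots,x_6)^T=0$. From $\sigma_{6,3}=0$ and Lemma \ref{lem:rank65} we get $\det(M_6)=0$, so $\rank(M_6)<6$, and Lemma \ref{lem:sol-rank} yields a nonzero $(x_1,\cdots,x_6)\in\gf(q)^6$ in the solution space. If some $x_j=0$, then $(x_i)_{i\neq j}$ is a nonzero solution of the column-deleted system from the previous paragraph, a contradiction; hence $(x_1,\cdots,x_6)\in(\gf(q)^*)^6$. Finally, for any solution $(y_1,\cdots,y_6)\in\gf(q)^6$ the vector $x_1(y_1,\cdots,y_6)-y_1(x_1,\cdots,x_6)$ is again a solution and has vanishing first coordinate, so by the same zero-coordinate argument it is $\mathbf{0}$; therefore $(y_1,\cdots,y_6)=(y_1/x_1)(x_1,\cdots,x_6)$, and the solution set is exactly $\{(ax_1,\cdots,ax_6):a\in\gf(q)\}$.

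The determinant and rank bookkeeping is already packaged in the cited lemmas, so I expect the only genuinely load-bearing observation to be the bridge from ``$\cB^1$, i.e.\ no $5$-subset is $\sigma_{5,2}$-vanishing'' to ``every column-deletion of $M_6$ has rank $5$''. That single fact simultaneously pins $\rank(M_6)=5$ (so the solution space is one-dimensional), forces a generator to have all coordinates nonzero, and --- through the difference trick in the last step --- yields uniqueness up to a $\gf(q)$-scalar. The field-of-definition issue ($\gf(q)$ versus $\gf(q^2)$) is harmless here because Lemma \ref{lem:sol-rank} already hands us a $\gf(q)$-rational generator.
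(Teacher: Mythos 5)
Your proof is correct and follows essentially the same route as the paper's: use Lemma \ref{lem:rank65} and Lemma \ref{lem:sol-rank} to get a nonzero $\gf(q)$-solution, then rule out a zero coordinate via Lemma \ref{lem:rank54} (plus Lemma \ref{lem:rank44}) and the defining property of $\cB^1_{\sigma_{6,3},q+1}$. The only difference is that you make explicit the final one-dimensionality step (the ``difference trick''), which the paper compresses into ``the desired conclusion then follows''; your version is the more complete write-up.
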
  
 
\begin{proof}
Let $\{u_1, \cdots, u_6\} \in \cB^1_{\sigma_{6,3},q+1}$. 
By Lemma \ref{lem:rank65}, $\rank (M_{6})<6$. By Lemma \ref{lem:sol-rank}, there
exists a nonzero $(x_1,\cdots, x_{6}) \in \gf(q)^6$ such that  $M_6(x_1, \cdots, x_6)^T=0$.
Assume that there is an $i $ ($1\le i\le 6$) such that $x_i=0$. Then the submatrix of the matrix $M_6$
obtained by deleting the $i$-th column has rank less than $5$, which   is contrary to 
Lemma \ref{lem:rank54} and the definition of  $\cB^1_{\sigma_{6,3},q+1}$. 
Thus, for any nonzero solution $(x_1,  \cdots, x_6)\in \gf(q)^6$, we have $x_i\neq 0$, where $1\le i\le 6$.
The desired conclusion then follows. This completes the proof.

\end{proof}

\begin{lemma}\label{lem:wt6-sp}
Let $q=2^m$ with $m$ even and $M_{6}$ be the matrix given by (\ref{eq:M}) with  $\{u_1, \cdots,  u_6\} \in  \binom{U_{q+1}}{6}$.
If there exists a vector $(x_1, \cdots, x_6) \in \left ( \gf(q)^*\right )^6$ such that $M_6(x_1, \cdots, x_6)^T=0$,  then 
 $\{u_1, \cdots, u_6\} \in \cB^1_{\sigma_{6,3},q+1}$
, where $\cB^1_{\sigma_{6,3},q+1}$ is defined by (\ref{eq:B1}).
 \end{lemma}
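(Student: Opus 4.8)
The plan is to argue by contradiction, using that a rank drop in the $6\times 6$ matrix $M_6$ is governed by $\sigma_{6,3}$ through Lemma~\ref{lem:rank65}, while rank drops in its $6\times 5$ submatrices are governed by $\sigma_{5,2}$ through Lemma~\ref{lem:rank54}. Write $c_1,\dots,c_6$ for the columns of $M_6$. First I would note that the existence of the nonzero vector $(x_1,\dots,x_6)\in\gf(q)^6$ forces $\rank(M_6)<6$ by Lemma~\ref{lem:sol-rank}, hence $\sigma_{6,3}(u_1,\dots,u_6)=0$ by Lemma~\ref{lem:rank65}; thus $\{u_1,\dots,u_6\}\in\cB_{\sigma_{6,3},q+1}$, and it remains only to rule out $\{u_1,\dots,u_6\}\in\cB^0_{\sigma_{6,3},q+1}$ (see (\ref{eq:B0})).

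So suppose $\{u_1,\dots,u_6\}\in\cB^0_{\sigma_{6,3},q+1}$. By Lemma~\ref{lem:one-one} there is a \emph{unique} $5$-subset of $\{u_1,\dots,u_6\}$ on which $\sigma_{5,2}$ vanishes; relabel so that it is $\{u_1,u_2,u_3,u_4,u_5\}$, whence $\sigma_{5,2}(u_1,u_2,u_3,u_4,u_6)\neq 0$. I would then compute $\rank(M_6)$ exactly: by Lemma~\ref{lem:rank44} the columns $c_1,c_2,c_3,c_4$ are $\gf(q^2)$-linearly independent, and $\sigma_{5,2}(u_1,\dots,u_5)=0$ together with Lemma~\ref{lem:rank54} gives $\rank(c_1,\dots,c_5)=4$, so that $c_5$ already lies in the $\gf(q^2)$-span of $c_1,\dots,c_4$; therefore $\rank(M_6)=\rank(c_1,c_2,c_3,c_4,c_6)$, and the latter equals $5$ by Lemma~\ref{lem:rank54} applied to the $5$-subset $\{u_1,u_2,u_3,u_4,u_6\}$. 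Hence $\dim_{\gf(q^2)}\ker M_6=1$.

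Next I would build a second $\gf(q)$-rational solution that is genuinely new. Since $\rank(c_1,\dots,c_5)=4<5$, Lemma~\ref{lem:sol-rank} with $\ell=5$ yields a nonzero $(y_1,\dots,y_5)\in\gf(q)^5$ with $[c_1|\cdots|c_5](y_1,\dots,y_5)^{T}=0$, and then $y:=(y_1,\dots,y_5,0)\in\gf(q)^6\setminus\{\bzero\}$ satisfies $M_6 y^{T}=0$. Now $x:=(x_1,\dots,x_6)$ has $x_6\neq 0$ while $y$ has last coordinate $0$, so $x$ and $y$ are $\gf(q)$-linearly independent; moreover, fixing an index $i$ with $y_i\neq 0$ and comparing $i$-th coordinates in any $\gf(q^2)$-linear dependence $\alpha x+\beta y=\bzero$ shows $\alpha/\beta=y_i/x_i\in\gf(q)$ (the case $\beta=0$ being impossible since $x\neq\bzero$), so $x$ and $y$ are in fact $\gf(q^2)$-linearly independent. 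This forces $\dim_{\gf(q^2)}\ker M_6\geq 2$, contradicting $\rank(M_6)=5$; hence $\{u_1,\dots,u_6\}\in\cB^1_{\sigma_{6,3},q+1}$.

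The one genuinely delicate point — and the step I expect to require the most care — is this last passage from a two-dimensional $\gf(q)$-solution space to a two-dimensional $\gf(q^2)$-kernel: since the entries of $M_6$ lie in $\gf(q^2)$ rather than in $\gf(q)$, $\gf(q)$-independence of $x$ and $y$ need not by itself yield $\gf(q^2)$-independence, and one must exploit that $x$ has \emph{all} coordinates in $\gf(q)^*$ (and $y$ has at least one nonzero coordinate) to pin the proportionality scalar down to $\gf(q)$. Everything else is routine bookkeeping with Lemmas~\ref{lem:sol-rank}, \ref{lem:rank44}, \ref{lem:rank54}, \ref{lem:rank65} and \ref{lem:one-one}.
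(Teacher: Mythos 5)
Your proof is correct and follows essentially the same route as the paper: both deduce membership in $\cB_{\sigma_{6,3},q+1}$ from Lemma \ref{lem:rank65}, assume for contradiction membership in $\cB^0_{\sigma_{6,3},q+1}$, and use Lemmas \ref{lem:sol-rank} and \ref{lem:rank54} to produce a second nonzero $\gf(q)$-rational kernel vector supported on the degenerate $5$-subset. The only difference is how the contradiction is extracted: the paper combines the two solutions to kill one coordinate and exhibits a second $5$-subset with $\sigma_{5,2}=0$, contradicting Lemma \ref{lem:one-one} directly, whereas you pin down $\rank(M_6)=5$ (which uses Lemma \ref{lem:one-one} in the same way) and contradict it by showing the kernel over $\gf(q^2)$ would be at least two-dimensional.
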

\begin{proof}
By Lemma \ref{lem:rank65}, $\{u_1, \cdots, u_6\} \in \cB_{\sigma_{6,3},q+1}$.
Assume that $\{u_1, \cdots, u_6\} \in \cB^0_{\sigma_{6,3},q+1}$, without loss of generality, let $\sigma_{5,2}(u_1,  \cdots, u_5)=0$.
By Lemmas \ref{lem:sol-rank} and \ref{lem:rank54}, there exists  a nonzero $(x_1', \cdots, x_5')\in  \gf(q)^5$
such that $M_5(x_1', \cdots, x_5')^T=0$, that is, $M_6(x_1', \cdots, x_5',0)^T=0$.
Note that $$M_6\left (x_1+\frac{x_1}{x_1'} x_1', \cdots, x_5+\frac{x_1}{x_1'} x_5',x_6+\frac{x_1}{x_1'} 0 \right )^T=0.$$
Applying Lemma \ref{lem:rank54}, $\sigma_{5,2}(u_2, \cdots, u_6)=0$, which is contrary to
Lemma \ref{lem:one-one} and $\sigma_{5,2}(u_1,  \cdots, u_5)=0$. This  completes the proof.

\end{proof}

\begin{lemma}\label{lem:trPoly}
Let $f(u)=\tr_{q^2/q} \left ( a u^3+ b u^2+ c u \right )$ where  $(a, b, c) \in \gf(q^2)^3 \setminus  \{\mathbf 0\}$.
Define 
$\mathrm{zero}(f)=\left \{ u\in U_{q+1} : f(u)=0 \right \}$.
Then $\#\left ( \mathrm{zero}(f) \right )\le 6$. Moreover, $\# \left (\mathrm{zero}(f)\right )= 6$
if and only if $a=\frac{\tau}{\sqrt{\sigma_{6,6}}}$, $b= \frac{\tau\sigma_{6,1}}{\sqrt{\sigma_{6,6}}} $ and $c=\frac{\tau\sigma_{6,2}}{\sqrt{\sigma_{6,6}}}$, 
where $\{u_1, \cdots, u_6\} \in \cB_{\sigma_{6,3},q+1}$ and $\tau \in \gf(q)^*$.
\end{lemma}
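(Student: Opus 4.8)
The plan is to translate the function-field statement into a rank statement about the matrix $M_6$ and then invoke the machinery already built. For a $6$-subset $\{u_1,\dots,u_6\}\subseteq U_{q+1}$, write $f(u)=\tr_{q^2/q}(au^3+bu^2+cu)$ and note that, since $u^q=u^{-1}$ on $U_{q+1}$, one has $f(u_j)=au_j^3+bu_j^2+cu_j+a^qu_j^{-3}+b^qu_j^{-2}+c^qu_j^{-1}$. Hence the condition $f(u_1)=\dots=f(u_6)=0$ says precisely that the vector $(c^q,b^q,a^q,c,b,a)$ lies in the left kernel of the matrix $M_6$ of \eqref{eq:M}, i.e.\ that a \emph{specific} kind of vector annihilates all columns. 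First I would record that $f\not\equiv 0$ on $U_{q+1}$ whenever $(a,b,c)\neq\mathbf 0$ forces fewer than $q+1$ zeros, and more importantly that $\#\mathrm{zero}(f)\ge 6$ is equivalent to the existence of six distinct points $u_1,\dots,u_6\in U_{q+1}$ with $M_6[\text{those }u_j]$ having a left null vector of the palindromic-conjugate shape $(c^q,b^q,a^q,c,b,a)$.

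The upper bound $\#\mathrm{zero}(f)\le 6$ is the easy half: a nonzero $f$ of this shape, cleared of denominators, becomes a Laurent polynomial $au^3+bu^2+cu+c^qu^{-1}+b^qu^{-2}+a^qu^{-3}$, i.e.\ $u^{-3}$ times a genuine polynomial of degree at most $6$, so it has at most $6$ roots in $\gf(q^2)\supseteq U_{q+1}$; one must only check it is not identically zero, which follows because the coefficient vector $(a^q,b^q,c^q,c,b,a)$ is nonzero. For the equality characterization, I would argue as follows. If $\#\mathrm{zero}(f)=6$, pick the six zeros $u_1,\dots,u_6$. The Laurent polynomial $u^3f(u)$ has degree exactly $6$ (if $a=0$ the degree drops and one gets at most $4$ zeros in $U_{q+1}$ by the same reasoning applied to $u^2f$, or more cleanly: $a=0$ would make $(a,b,c)$ live in a codimension-$\ge 1$ space and one checks directly $\le 5$ zeros), and it is monic-up-to-scalar with those six roots, so $u^3f(u)=a\prod_{j=1}^6(u+u_j)=a(u^6+\sigma_{6,1}u^5+\dots+\sigma_{6,6})$. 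Reading off the coefficients of $u^3,u^2,u$ gives $b=a\sigma_{6,1}$, $c=a\sigma_{6,2}$, and the coefficient of $u^0$ gives $c^q=a\sigma_{6,3}$; comparing the $u^{-1},u^{-2},u^{-3}$ coefficients forces $b^q=a\sigma_{6,4}$, $a^q=a\sigma_{6,5}$, $0\cdot$(nothing) and $a^q$ consistency $\Rightarrow a^{q+1}=a^2\sigma_{6,6}$, i.e.\ $a^{q-1}=\sigma_{6,6}^{-1}$... here I'd set $a=\tau/\sqrt{\sigma_{6,6}}$ and verify $\tau\in\gf(q)^*$ from $a^{q-1}=\sigma_{6,6}^{-1}$ together with $\sigma_{6,6}\in U_{q+1}$ (a product of six elements of $U_{q+1}$), and the remaining compatibility $c^q=a\sigma_{6,3}$ combined with $c=a\sigma_{6,2}$ yields $a^{q-1}\sigma_{6,2}^q=\sigma_{6,3}/\sigma_{6,2}$, which after using $\sigma_{6,2}^q=\sigma_{6,4}/\sigma_{6,6}$-type Newton relations collapses to $\sigma_{6,3}=\sigma_{6,3}^q$, forcing $\sigma_{6,3}\in\gf(q)$; but $\sigma_{6,3}$ is also annihilated — the cleanest route is to observe the system is exactly $M_6(\cdots)^T=0$ with a nonzero vector, so $\det M_6=0$, and by Lemma \ref{lem:rank65} this is equivalent to $\sigma_{6,3}(u_1,\dots,u_6)=0$, i.e.\ $\{u_1,\dots,u_6\}\in\cB_{\sigma_{6,3},q+1}$. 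Conversely, given $\{u_1,\dots,u_6\}\in\cB_{\sigma_{6,3},q+1}$ and $\tau\in\gf(q)^*$, one defines $a,b,c$ by the displayed formulas and checks that $u^3f(u)=a\prod_j(u+u_j)$ identically — this is a finite symmetric-function identity, using $\sigma_{6,4}=\sigma_{6,6}\sigma_{6,2}^q$, $\sigma_{6,5}=\sigma_{6,6}\sigma_{6,1}^q$, $\sigma_{6,6}=\sigma_{6,6}^{q+1}$ valid on $U_{q+1}$ — so all six $u_j$ are zeros of $f$, and the upper bound gives exactly $6$.

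The step I expect to be the main obstacle is the bookkeeping in the equality case: showing that the six linear conditions $f(u_j)=0$, which a priori only say the coefficient vector is \emph{some} null vector of $M_6$, actually force it to have the rigid conjugate-palindromic form $(c^q,b^q,a^q,c,b,a)$ with $b/a=\sigma_{6,1}$, $c/a=\sigma_{6,2}$ and $a^{q-1}=\sigma_{6,6}^{-1}$. The resolution is that the null vector's shape is \emph{built in} — it is literally $(c^q,b^q,a^q,c,b,a)$ by construction of $f$ — so the content is instead the polynomial-identity verification that this shape is consistent with $M_6(\cdots)^T=0$ precisely when $\sigma_{6,3}=0$, together with extracting $\tau\in\gf(q)^*$; both reduce to Newton's identities for elementary symmetric polynomials of elements of $U_{q+1}$ (using $\sigma_{6,\ell}^q=\sigma_{6,6-\ell}/\sigma_{6,6}$ and $\sigma_{6,6}^{q+1}=1$... wait, $\sigma_{6,6}=\prod u_j$ has $\sigma_{6,6}^{q+1}=\prod u_j^{q+1}=1$), which are routine but must be done carefully to pin down that $\tau:=a\sqrt{\sigma_{6,6}}$ is fixed by the $q$-power map.
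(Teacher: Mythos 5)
Your proposal follows essentially the same route as the paper: use $u^q=u^{-1}$ on $U_{q+1}$ to rewrite $u^3f(u)$ as the genuine degree-$6$ polynomial $au^6+bu^5+cu^4+c^qu^2+b^qu+a^q$, deduce $\#\mathrm{zero}(f)\le 6$, and in the equality case factor $u^3f(u)=a\prod_{j=1}^6(u+u_j)$ and compare coefficients via Vieta. The converse direction is also identical to the paper's. So the approach is right; the issues are in your bookkeeping. Your coefficient matching is shifted by one: the correct identifications are $b=a\sigma_{6,1}$, $c=a\sigma_{6,2}$, $0=a\sigma_{6,3}$ (the \emph{constant} term of $f$ vanishes, since $\tr_{q^2/q}(au^3+bu^2+cu)$ has no $u^0$ term), $c^q=a\sigma_{6,4}$, $b^q=a\sigma_{6,5}$, $a^q=a\sigma_{6,6}$. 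Two consequences. First, $\sigma_{6,3}=0$ falls out immediately from the $u^0$ coefficient; your detour through $\det M_6=0$ and Lemma \ref{lem:rank65} is valid but only needed because the slip hid this. Second, and more seriously, your relation $a^{q-1}=\sigma_{6,6}^{-1}$ is wrong: the correct one is $a^{q-1}=\sigma_{6,6}$, which is what yields $a\in\gf(q)^*\cdot\left(\sqrt{\sigma_{6,6}}\right)^{-1}$ (using $\sqrt{\sigma_{6,6}}\in U_{q+1}$). Since $\gf(q)^*\sqrt{\sigma_{6,6}}\ne\gf(q)^*/\sqrt{\sigma_{6,6}}$ unless $\sigma_{6,6}=1$, your version as stated would characterize the wrong coset for $a$; this must be corrected, though the fix is purely arithmetic. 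Finally, the relations $\sigma_{6,4}=\sigma_{6,6}\sigma_{6,2}^q$ and $\sigma_{6,5}=\sigma_{6,6}\sigma_{6,1}^q$ you invoke in the converse are correct (they come from $\sigma_{6,\ell}(u_1^{-1},\dots,u_6^{-1})=\sigma_{6,6-\ell}/\sigma_{6,6}$ and $u_j^q=u_j^{-1}$), but calling them Newton's identities is a misnomer.
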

\begin{proof}
When $u\in U_{q+1}$, one has
\begin{align}\label{eq:f-c-U}
f(u)=\frac{1}{u^3}\left ( a u^6 + b u^5 +cu^4 + c^q u^2 +b^q u+a^q   \right ).
\end{align}
Thus, $\#\left ( \mathrm{zero}(f) \right )\le 6$.

Assume that $\#\left ( \mathrm{zero}(f) \right )= 6$. From  (\ref{eq:f-c-U}), 
there exists $\{u_1, \cdots, u_6\} \in U_{q+1}$ 
such that $f(u)=\frac{a \prod_{i=1}^6 (u+u_i)}{u^3}$.
By Vieta's formula, $b=a\sigma_{6,1}$, $c=a\sigma_{6,2}$,$0=\sigma_{6,3}$, $c^q=a\sigma_{6,6}\sigma_{6,2}^q$,
$b^q=a\sigma_{6,6}\sigma_{6,1}^q$ and $a^q=a\sigma_{6,6}$.
One obtains $a= \frac{\tau}{\sqrt{\sigma_{6,6}}}$ from $a^{q-1}= \sigma_{6,6}$, where $\tau \in \gf(q)^*$. 
Then, $b=\frac{\tau\sigma_{6,1}}{\sqrt{\sigma_{6,6}}}$ and  $c=\frac{\tau\sigma_{6,2}}{\sqrt{\sigma_{6,6}}}$.

Conversely, assume that $a=\frac{\tau}{\sqrt{\sigma_{6,6}}}$, $b= \frac{\tau\sigma_{6,1}}{\sqrt{\sigma_{6,6}}} $ and $c=\frac{\tau\sigma_{6,2}}{\sqrt{\sigma_{6,6}}}$, 
where $\{u_1, \cdots, u_6\} \in \cB_{\sigma_{6,3},q+1}$ and $\tau \in \gf(q)^*$. Then
$f(u)=\frac{a \prod_{i=1}^6 (u+u_i)}{u^3}$. Thus, $\mathrm{zero}(f)=\{u_1,  \cdots, u_6\}$ and $\#(\mathrm{zero}(f))=6$.
\end{proof}

\subsection{A class of narrow-sense  BCH codes with length $2^m+1$}

 We are now ready to prove the following result about the code $\C_{(q, q+1, 4,1)}$.

\begin{theorem}\label{thm:C-(q,q+1,4,1)}  
Let $q=2^m$ with $m\ge 4$ being an  integer. Then the narrow-sense BCH code $\C_{(q, q+1, 4,1)}$ over $\gf(q)$ 
has parameters $[q+1, q-5, d]$, where $d=6$ if $m$ is odd and $d=5$ if $m$ is even. 
\end{theorem}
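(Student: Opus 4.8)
The plan is to identify $\C_{(q,q+1,4,1)}$ explicitly as a code whose codewords are evaluations of certain trace functions on $U_{q+1}$, and then read off the dimension and minimum distance from the rank of the matrices $M_\ell$ studied in Lemmas~\ref{lem:rank44}--\ref{lem:rank65}. First I would recall that the narrow-sense BCH code of length $n=q+1$ and designed distance $4$ over $\gf(q)$ has defining set the union of the $q$-cyclotomic cosets of $1,2,3$ modulo $q+1$; since $q\equiv -1 \pmod{q+1}$, multiplication by $q$ sends $j \mapsto -j$, so $C_1=\{1,-1\}$, $C_2=\{2,-2\}$, $C_3=\{3,-3\}$, and these are pairwise disjoint (as $m\ge 4$, so $q+1\ge 17 > 6$). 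Hence the defining set has size $6$ and $\dim \C_{(q,q+1,4,1)} = (q+1) - 6 = q-5$. By the standard duality/description of cyclic codes via their zeros (or equivalently by taking a generator matrix of the dual and transposing), a parity-check description shows that a word $(c_{u})_{u\in U_{q+1}}$ lies in $\C_{(q,q+1,4,1)}$ iff $\sum_{u\in U_{q+1}} c_u u^{r}=0$ for $r\in\{\pm1,\pm2,\pm3\}$; dually, the minimum-weight analysis amounts to asking for which small subsets $\{u_1,\dots,u_\ell\}\subseteq U_{q+1}$ the matrix $M_\ell$ of~(\ref{eq:M}) has rank $<\ell$, i.e.\ admits a nonzero $\gf(q)$-solution to $M_\ell x^T=0$ (Lemma~\ref{lem:sol-rank}).

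Next I would pin down the minimum distance. Since the designed distance is $4$, the BCH bound gives $d\ge 4$; I want to upgrade this. A codeword of weight $\ell$ corresponds, via Lemma~\ref{lem:sol-rank}, to an $\ell$-subset of $U_{q+1}$ for which $M_\ell$ has a nonzero $\gf(q)$-kernel vector with all coordinates nonzero, equivalently $\rank M_\ell < \ell$ together with the non-degeneracy that no proper sub-column-set already drops rank. By Lemma~\ref{lem:rank44}, $\rank M_4 = 4$ for every $4$-subset, so there is no codeword of weight $\le 4$ other than handled trivially; thus $d\ge 5$. For weight $5$: Lemma~\ref{lem:rank54} says $\rank M_5 = 4 <5$ iff $\sigma_{5,2}(u_1,\dots,u_5)=0$. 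When $m$ is even, Theorem~\ref{thm:espsteiner} (or directly Lemma~\ref{lem:quadeq-u4u5}) guarantees that $\cB_{\sigma_{5,2},q+1}\ne\emptyset$, so some $5$-subset gives a weight-$5$ codeword, and $d=5$. When $m$ is odd, Lemma~\ref{lem:B52odd} says $\sigma_{5,2}\ne 0$ for every $5$-subset, so $\rank M_5=5$ always and there is no weight-$5$ codeword; hence $d\ge 6$ in the odd case. To get $d=6$ when $m$ is odd, I need a weight-$6$ codeword: by Lemma~\ref{lem:rank65}, $\rank M_6<6$ iff $\sigma_{6,3}(u_1,\dots,u_6)=0$, and Theorem~\ref{thm:esp4-design} (valid for $m\ge 5$ odd) shows $\cB_{\sigma_{6,3},q+1}\ne\emptyset$, so such a $6$-subset exists; moreover since no $5$-subset of it can drop the rank (as $\sigma_{5,2}\ne0$), the kernel vector has all six coordinates nonzero, yielding an honest weight-$6$ codeword. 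Thus $d=6$ for $m\ge 5$ odd. For $m=4$ (even) the claim $d=5$ is already covered by the even case above. One small caveat: the odd statement is asserted for $m\ge 4$ odd but Theorem~\ref{thm:esp4-design} requires $m\ge 5$; however the smallest odd $m\ge 4$ is $5$, so there is no gap, and one could alternatively invoke Lemma~\ref{lem:sigma4-5Sigma631} to exhibit a weight-$6$ word directly (take $u_5=u_6=\sqrt{\sigma_{4,3}/\sigma_{4,1}}$ forced to coincide is not allowed, but the construction $u_6 = \sigma_{5,3}/\sigma_{5,2}$ from a generic $4$-subset produces a genuine $6$-set with $\sigma_{6,3}=0$).

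The main obstacle I anticipate is making the translation between the abstract cyclic-code description (defining set, generator polynomial $g_{(q,q+1,4,1)}$) and the concrete ``evaluation of $(c_u)$ against $u^{\pm1},u^{\pm2},u^{\pm3}$'' picture fully rigorous, including the passage from ``$\rank_{\gf(q^2)} M_\ell < \ell$'' to the existence of a nonzero \emph{$\gf(q)$-rational} kernel vector. This is exactly what Lemma~\ref{lem:sol-rank} provides, so the work is really just assembling the pieces and being careful that the BCH defining set is $\{\pm1,\pm2,\pm3\}$ with all six residues distinct modulo $q+1$ (which needs $q+1>6$, i.e.\ $m\ge 3$, comfortably satisfied). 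The secondary subtlety is the ``no proper sub-column drops rank'' argument ensuring the weight is exactly $5$ or exactly $6$ rather than smaller: in the weight-$5$/even case one checks a weight-$\le4$ subword would contradict Lemma~\ref{lem:rank44}; in the weight-$6$/odd case a weight-$5$ subword would contradict Lemma~\ref{lem:B52odd}. Everything else — the dimension count and the BCH lower bound — is routine. I would therefore structure the proof as: (1) compute the defining set and hence $\dim = q-5$; (2) invoke the BCH bound and Lemma~\ref{lem:rank44} for $d\ge5$; (3) split on parity of $m$, using Lemmas~\ref{lem:rank54},~\ref{lem:B52odd} and Theorems~\ref{thm:espsteiner},~\ref{thm:esp4-design} to locate minimum-weight codewords and conclude $d=5$ ($m$ even) or $d=6$ ($m$ odd).
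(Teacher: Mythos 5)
Your proposal is correct and follows essentially the same route as the paper: compute the defining set $\{\pm1,\pm2,\pm3\}$ to get dimension $q-5$, realize weight-$\ell$ codewords as all-nonzero $\gf(q)$-kernel vectors of $M_\ell$ via the parity-check matrix and Lemma~\ref{lem:sol-rank}, rule out weights $4$ and $5$ (odd $m$) using Lemmas~\ref{lem:rank44}, \ref{lem:rank54}, \ref{lem:B52odd}, and exhibit minimum-weight codewords from the nonemptiness of $\cB_{\sigma_{6,3},q+1}$ (resp.\ $\cB_{\sigma_{5,2},q+1}$) supplied by Theorems~\ref{thm:esp4-design} and \ref{thm:espsteiner}. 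If anything, you are slightly more explicit than the paper about why the kernel vector must have all coordinates nonzero (no proper subcolumn set drops rank), which the paper leaves implicit.
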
 

\begin{proof}
Put $n=q+1$. 
Let $\alpha$ be a generator of $\gf(q^2)^*$ and $\beta=\alpha^{q-1}$. Then $\beta$ is a primitive $n$-th root of unity 
in $\gf(q^2)$, that is, $\beta$ is a generator of the cyclic group $ \in U_{q+1}$. Let $g_i(x)$ denote the minimal polynomial of  $\beta^i$ over $\gf(q)$, 
where $i\in \{1,2,3\}$. 
Note that $g_i(x)$ has only the roots $\beta^i$ and $\beta^{-i}$. 
One deduces that $g_1(x)$, $g_2(x)$ and  $g_3(x)$ are pairwise distinct irreducible polynomials of degree $2$. 
By definition, $g(x):=g_1(x)g_2(x)g_3(x)$ is the generator polynomial of $\C_{(q, q+1, 4,1)}$. Therefore, 
the dimension of $\C_{(q, q+1, 4,1)}$ is $q+1-6$. 
Note that 
$g(x)$ has only the roots $\beta^{-3}, \beta^{-2}, \beta^{-1}, \beta, \beta^2$ and $\beta^3$. By the BCH bound, the minimum weight 
of $\C_{(q, q+1, 4,1)}$ is at least $4$. 
Put $\gamma=\beta^{-1}$.  Then $\gamma^{q+1}=\beta^{-(q+1)}=1$. 
It then follows from Delsarte's theorem that the trace expression of $\C_{(q, q+1, 4,1)}^\perp$ is given by 
\begin{eqnarray}\label{eq:dualC-Tr}
\C_{(q, q+1, 4,1)}^\perp=\{\bc_{(a,b,c)}: a, b, c \in \gf(q^2)\}, 
\end{eqnarray} 
where $\bc_{(a,b,c)}=(\tr_{q^2/q}(a\gamma^i +b \gamma^{2i} +c \gamma^{3i}))_{i=0}^q$. 

Define 
\begin{eqnarray}
H=\left[ 
\begin{array}{rrrrrr}
1  &  \gamma^{-3} & \gamma^{-6} & \gamma^{-9} & \cdots & \gamma^{-3q}\\
1  & \gamma^{-2} & \gamma^{-4} &  \gamma^{-6} & \cdots & \gamma^{-2q} \\
1  & \gamma^{-1} & \gamma^{-2} & \gamma^{-3} & \cdots & \gamma^{-q} \\
1  & \gamma^{+1} & \gamma^{+2} & \gamma^{+3} & \cdots & \gamma^{+q} \\
1  & \gamma^{+2} & \gamma^{+4} &  \gamma^{+6} & \cdots & \gamma^{+2q} \\
1  &  \gamma^{+3} & \gamma^{+6} & \gamma^{+9} & \cdots & \gamma^{+3q}
\end{array}
\right].  
\end{eqnarray} 
It is easily seen that $H$ is a parity-check matrix of $\C_{(q, q+1, 4,1)}$, i.e., 
\begin{align}\label{eq:C-H}
\C_{(q, q+1, 4,1)}=\{\bc \in \gf(q)^{q+1}: \bc H^T=\bzero\}. 
\end{align}

Let $m$ be odd. Note that $d\ge 4$.
Assume that $d=4$. Then
there exist $\{u_1, \cdots, u_4\} \in \binom{U_{q+1}}{4}$ and $(x_1, \cdots, x_{4})\in  \left (\gf(q)^* \right )^4$
such that $M_4(x_1, \cdots, x_{4})^T=0$. Thus $\rank (M_4) <4$, which is contrary to Lemma \ref{lem:rank44}.
Assume that $d=5$. Then
there exist $\{u_1, \cdots, u_5\} \in \binom{U_{q+1}}{5}$ and $(x_1, \cdots, x_{5})\in  \left (\gf(q)^* \right )^5$
such that $M_5(x_1, \cdots, x_{5})^T=0$. By Lemma \ref{lem:rank54},  $\rank (M_5) <5$ and $\sigma_{5,2}=0$, which is contrary to Lemma \ref{lem:B52odd}.
Thus, $d\ge 6$. By Theorem \ref{thm:esp4-design}, $ \cB_{\sigma_{6,3},q+1}\neq \emptyset$.
Choose $\{u_1, \cdots, u_6\} \in \cB_{\sigma_{6,3},q+1}$. By Lemma \ref{lem:sol-rank},
there exists $(x_1, \cdots, x_6) \in \left ( \gf(q)^* \right )^6$ such that $M_6 (x_1, \cdots, x_6)^T=0$.
Set $\bc=(c_1, \cdots, c_{q+1})$ where
\begin{eqnarray}
c_i=\left\{ 
\begin{array}{rr}
x_{j},  &  \text{ if } i=i_j,\\
0,  & \text{ otherwise}, \\
\end{array}
\right.  
\end{eqnarray} 
where $\gamma_{i_j}$ is given by $u_j= \gamma^{i_j}$ ($j \in \{1, \cdots, 6\}$).
By (\ref{eq:C-H}), $\bc \in \C_{(q, q+1, 4,1)}$ and $\wt(\bc)=6$. Thus, $d=6$.

The proof for the case  $m$ even is similar as the case $m$ odd. And the detail is omitted. 
This completes the proof. 
\end{proof}

\begin{theorem}\label{thm:dualC} 
Let $q=2^m$ with $m\ge 4$ and $\C_{(q, q+1, 4,1)}^\perp$ be the  dual of  the narrow-sense BCH code $\C_{(q, q+1, 4,1)}$ over $\gf(q)$.
Then $\C_{(q, q+1, 4,1)}^\perp$  has parameters $[q+1, 6, q-5]$. In particular,  $\C_{(q, q+1, 4,1)}$ is a near  MDS code if $m$ is odd.
\end{theorem}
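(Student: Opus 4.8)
The plan is to read off the length and dimension from results already in hand, and then to determine $d^\perp$ by translating the weight of a dual codeword into a root-counting problem for a trace polynomial, to which Lemma~\ref{lem:trPoly} applies directly.

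First I would record that $\C_{(q,q+1,4,1)}^\perp$ has length $q+1$ by construction, and dimension $(q+1)-(q-5)=6$ by Theorem~\ref{thm:C-(q,q+1,4,1)}. Since $|\gf(q^2)^3|=q^6=|\C_{(q,q+1,4,1)}^\perp|$, the $\gf(q)$-linear surjection $(a,b,c)\mapsto\bc_{(a,b,c)}$ of (\ref{eq:dualC-Tr}) is a bijection, so nonzero codewords correspond exactly to triples $(a,b,c)\in\gf(q^2)^3\setminus\{\mathbf 0\}$. Because $\gamma$ is a primitive $(q+1)$-th root of unity, $i\mapsto\gamma^i$ is a bijection from $\{0,1,\dots,q\}$ onto $U_{q+1}$, so the number of zero coordinates of $\bc_{(a,b,c)}$ equals $\#\{u\in U_{q+1}:\tr_{q^2/q}(cu^3+bu^2+au)=0\}$, that is, $\#(\mathrm{zero}(f))$ in the notation of Lemma~\ref{lem:trPoly} (with the roles of $a$ and $c$ interchanged). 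Hence $\wt(\bc_{(a,b,c)})=q+1-\#(\mathrm{zero}(f))$ for every nonzero triple.

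For the lower bound $d^\perp\ge q-5$, Lemma~\ref{lem:trPoly} gives $\#(\mathrm{zero}(f))\le 6$ whenever $(a,b,c)\neq\mathbf 0$, so each nonzero codeword has weight at least $q+1-6=q-5$. For the matching upper bound I would use that $\cB_{\sigma_{6,3},q+1}\neq\emptyset$ when $m\ge 4$: this follows from Theorem~\ref{thm:esp4-design} when $m$ is odd and from Theorem~\ref{thm:esp3-design} when $m$ is even, since in both cases the design has positive index. Fixing $\{u_1,\dots,u_6\}\in\cB_{\sigma_{6,3},q+1}$ and any $\tau\in\gf(q)^*$, and taking $c=\tau/\sqrt{\sigma_{6,6}}$, $b=\tau\sigma_{6,1}/\sqrt{\sigma_{6,6}}$, $a=\tau\sigma_{6,2}/\sqrt{\sigma_{6,6}}$, the converse half of Lemma~\ref{lem:trPoly} yields $\#(\mathrm{zero}(f))=6$, hence a codeword of weight exactly $q-5$. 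Thus $d^\perp=q-5$ and $\C_{(q,q+1,4,1)}^\perp$ has parameters $[q+1,6,q-5]$.

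Finally, for the ``in particular'' clause I would invoke Theorem~\ref{thm:C-(q,q+1,4,1)}: when $m$ is odd, $\C_{(q,q+1,4,1)}$ is $[q+1,q-5,6]$ with $n-k=(q+1)-(q-5)=6=d$, so it is almost MDS, and $\C_{(q,q+1,4,1)}^\perp$ is $[q+1,6,q-5]$ with $n-k^\perp=(q+1)-6=q-5=d^\perp$, so it too is almost MDS; hence $\C_{(q,q+1,4,1)}$ is near MDS by definition. I do not expect a genuine obstacle here: the only delicate points are the bookkeeping identifying $\wt(\bc_{(a,b,c)})$ with $q+1$ minus the number of roots in $U_{q+1}$ of the associated trace polynomial (including the harmless swap of $a$ and $c$ relative to Lemma~\ref{lem:trPoly}), and selecting the right design theorem for each parity of $m$ to guarantee $\cB_{\sigma_{6,3},q+1}\neq\emptyset$.
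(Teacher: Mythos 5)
Your proposal is correct and follows essentially the same route as the paper, which likewise derives the parameters from the trace representation (\ref{eq:dualC-Tr}) together with Lemma \ref{lem:trPoly}, and obtains $\cB_{\sigma_{6,3},q+1}\neq\emptyset$ from Theorems \ref{thm:esp4-design} and \ref{thm:esp3-design}. You have merely spelled out the details (the weight-to-root-count translation, the $a$/$c$ swap, and the AMDS check for both the code and its dual) that the paper's terse proof leaves implicit.
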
 

\begin{proof}
From Theorems \ref{thm:esp4-design}  and \ref{thm:esp3-design}, $\cB_{\sigma_{6,3},q+1} \neq \emptyset$.
The desired conclusion then follows from Lemma \ref{lem:trPoly} and Equation (\ref{eq:dualC-Tr}).
This completes the proof.
\end{proof}

\subsection{An infinite class of near MDS codes supporting $4$-designs}

\begin{theorem}\label{thm:code-design-esp-odd} 
Let $q=2^m$ with $m \geq 5$ odd. Then, the incidence structure $$\left ( \cP\left (\C_{(q,q+1,4,1)} \right ), \cB_6\left (\C_{(q,q+1,4,1)} \right )  \right )$$
from the minimum weight codewords in $\C_{(q, q+1, 4,1)}$ is isomorphic to  $(U_{q+1}, \cB_{\sigma_{6,3}, q+1})$ .
\end{theorem}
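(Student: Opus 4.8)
The plan is to exhibit an explicit bijection between the point set $U_{q+1}$ of the ESP design and the coordinate set $\cP(\C_{(q,q+1,4,1)})$, and then to show that this bijection carries the block set $\cB_{\sigma_{6,3},q+1}$ onto $\cB_6(\C_{(q,q+1,4,1)})$. The natural choice is the map $\phi: U_{q+1} \to \cP(\C_{(q,q+1,4,1)})$ sending $\gamma^i$ to the coordinate position $i$, where $\gamma = \beta^{-1}$ is the primitive $(q+1)$-th root of unity fixed in the proof of Theorem~\ref{thm:C-(q,q+1,4,1)}; since $\gamma$ generates $U_{q+1}$, this is a well-defined bijection between $U_{q+1}$ and $\Z_{q+1}$, hence between $U_{q+1}$ and the $q+1$ coordinates of the code.

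Next I would translate the condition ``$\bc \in \C_{(q,q+1,4,1)}$ has weight $6$ with support $\{i_1,\dots,i_6\}$'' into a statement about the matrix $M_6$ of~(\ref{eq:M}) built from $u_j = \gamma^{i_j}$. By~(\ref{eq:C-H}) the parity-check matrix $H$ has columns indexed by powers of $\gamma$, and restricting to the six support coordinates yields exactly $M_6$ (up to the trivial reindexing). Thus a weight-$6$ codeword supported on $\{u_1,\dots,u_6\}$ exists if and only if the system $M_6(x_1,\dots,x_6)^T = 0$ has a solution with all $x_j \neq 0$ in $\gf(q)$. Now combine the earlier lemmas: for $m$ odd, Lemma~\ref{lem:B52odd} says $\sigma_{5,2}$ never vanishes on $5$-subsets, so by Lemma~\ref{lem:rank54} every $5\times 5$ submatrix of $M_6$ has full rank; therefore any nonzero solution of $M_6\mathbf{x}^T=0$ automatically has all coordinates nonzero (a zero coordinate would force a rank drop in a $6\times 5$ submatrix). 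Hence a weight-$6$ codeword on $\{u_1,\dots,u_6\}$ exists $\iff \rank(M_6) < 6$, and by Lemma~\ref{lem:rank65} this holds $\iff \sigma_{6,3}(u_1,\dots,u_6) = 0$, i.e. $\iff \{u_1,\dots,u_6\} \in \cB_{\sigma_{6,3},q+1}$.

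Finally I would check that the correspondence on blocks is genuinely a bijection of incidence structures, not merely of block sets: the map $\phi$ sends a block $\{u_1,\dots,u_6\} \in \cB_{\sigma_{6,3},q+1}$ to the support $\phi(\{u_1,\dots,u_6\}) = \{i_1,\dots,i_6\}$, which by the argument above lies in $\cB_6(\C_{(q,q+1,4,1)})$; conversely every element of $\cB_6(\C)$ arises this way. One must also note that a given support carries an essentially unique weight-$6$ codeword up to scalar — this follows from Lemma~\ref{lem:sys-wt6} in the form appropriate here, or more directly from the fact that $\rank(M_6) = 5$ (the solution space is one-dimensional), so the support set is well-defined and no collapsing occurs. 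The incidence relation (set membership of points in blocks) is preserved by construction since $\phi$ is a bijection on points. I expect the main obstacle to be a bookkeeping one rather than a conceptual one: carefully matching the deleted-row/column conventions of $M_6$ and $M_6[\cdot]$ against the six rows of $H$, and making sure the ``all coordinates nonzero'' step is airtight in the odd case (where it rests on Lemma~\ref{lem:B52odd} feeding Lemma~\ref{lem:rank54}); once that is pinned down, the isomorphism is immediate.
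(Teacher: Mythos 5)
Your proposal is correct and follows essentially the same route as the paper, whose proof simply invokes Lemma \ref{lem:rank65} together with the argument already laid out in the proof of Theorem \ref{thm:C-(q,q+1,4,1)} (restricting the parity-check matrix to the support to obtain $M_6$, ruling out zero coordinates via Lemmas \ref{lem:B52odd} and \ref{lem:rank54}, and applying the rank criterion of Lemma \ref{lem:rank65}). The one detail worth making explicit is that passing from $\rank(M_6)<6$ to a nonzero solution over $\gf(q)$ rather than $\gf(q^2)$ requires Lemma \ref{lem:sol-rank}, which the paper cites at the corresponding point of that earlier proof.
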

\begin{proof}
Using Lemma \ref{lem:rank65}, the desired conclusion then follows by a similar  discussion as in the proof of Theorem \ref{thm:C-(q,q+1,4,1)}. This completes the proof.
\end{proof}

The theorem below makes a breakthrough in 71 years in the sense that it presents the first family of linear codes 
 supporting an infinite family of $4$-designs since the first linear code holding a $4$-design was discovered 71 years ago 
by Golay \cite{Golay49}.

\begin{theorem}\label{thm:odd-4designs} 
Let $q=2^m$ with $m \geq 5$ odd. Then, 
the minimum weight codewords in $\C_{(q, q+1, 4,1)}$ support a $4$-$(2^{m}+1, 6, 2^{m-1}-4)$ design
and 
the minimum weight codewords in $\C_{(q, q+1, 4,1)}^\perp$ support a $4$-$(q+1, q-5, \lambda)$ design 
with 
$$ 
\lambda=\frac{q-8}{30} \binom{q-5}{4}. 
$$ 
\end{theorem}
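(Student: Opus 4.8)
The plan is to combine the structural theorems already established for elementary symmetric polynomials with the isomorphism between the support design of the minimum weight codewords of $\C_{(q,q+1,4,1)}$ and the incidence structure $(U_{q+1}, \cB_{\sigma_{6,3},q+1})$, and then transfer everything to the dual via the near MDS correspondence. First I would invoke Theorem~\ref{thm:code-design-esp-odd}, which tells us that the minimum weight codewords of $\C_{(q,q+1,4,1)}$ (these have weight $6$ by Theorem~\ref{thm:C-(q,q+1,4,1)}, since $m$ is odd) support a design isomorphic to $(U_{q+1},\cB_{\sigma_{6,3},q+1})$. By Theorem~\ref{thm:esp4-design}, for $m\ge 5$ odd this incidence structure is a $4$-$\left(q+1,6,\frac{q-8}{2}\right)$ design. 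So the codewords of weight $6$ in $\C_{(q,q+1,4,1)}$ support a $4$-$(2^m+1, 6, (2^m-8)/2)=4$-$(2^m+1,6,2^{m-1}-4)$ design. This gives the first assertion essentially for free.

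For the second assertion, I would first record that $\C_{(q,q+1,4,1)}$ is near MDS when $m$ is odd, which is Theorem~\ref{thm:dualC}: the dual $\C_{(q,q+1,4,1)}^\perp$ has parameters $[q+1,6,q-5]$ and the primal has parameters $[q+1,q-5,6]$, so both are AMDS. Now apply Theorem~\ref{thm-121FW} together with the remark following it: the minimum weight codewords of the dual support the complementary $t$-design of the one supported by the minimum weight codewords of $\C_{(q,q+1,4,1)}$. Since the latter is a $4$-design, so is the former; its block size is $q-5$ (the dual minimum distance), its point set has $q+1$ points, and it is therefore a $4$-$(q+1, q-5, \lambda)$ design for some $\lambda$. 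The only remaining task is to pin down $\lambda$, and for this I would use the standard counting identity $\binom{v}{t}\lambda = \binom{k}{t}b$ applied to the dual design, where $b$ is the number of minimum weight codeword supports in $\C_{(q,q+1,4,1)}^\perp$. By Theorem~\ref{thm-121FW} this equals the number of weight-$6$ supports in $\C_{(q,q+1,4,1)}$, i.e. $\#\cB_{\sigma_{6,3},q+1}=\frac{q-8}{30}\binom{q+1}{4}$ as computed in the displayed formulas following Corollary~\ref{cor:B1-WT6}. Plugging $v=q+1$, $k=q-5$, $t=4$, $b=\frac{q-8}{30}\binom{q+1}{4}$ into $\binom{q+1}{4}\lambda = \binom{q-5}{4}b$ gives $\lambda = \frac{q-8}{30}\binom{q-5}{4}$, matching the claim.

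I do not expect any genuine obstacle here, since all the heavy lifting — the $4$-design property of $(U_{q+1},\cB_{\sigma_{6,3},q+1})$, the isomorphism of support designs, the near MDS status, and the FaldumWillems correspondence — is already in hand. The one point requiring a little care is confirming that the block size of the dual design is exactly the dual minimum distance $q-5$ and that the complementation in Theorem~\ref{thm-121FW} genuinely turns a $4$-design into a $4$-design on the same point set with block size $v-k$; this is immediate from the bijective, support-disjoint pairing of minimum weight codewords, but it is the step I would state most explicitly. The arithmetic simplification $\frac{2^m-8}{2}=2^{m-1}-4$ and the $\lambda$ computation are routine and I would present them in one line each.
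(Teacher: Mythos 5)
Your proposal is correct and follows essentially the same route as the paper, which likewise deduces the result from Theorem~\ref{thm:code-design-esp-odd}, Theorem~\ref{thm:esp4-design} and Theorem~\ref{thm-121FW}; your explicit computation of $\lambda$ for the dual design via $\binom{q+1}{4}\lambda=\binom{q-5}{4}b$ with $b=\frac{q-8}{30}\binom{q+1}{4}$ simply spells out a step the paper leaves implicit.
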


\begin{proof}
The desired conclusion follows from Theorems \ref{thm:code-design-esp-odd}, \ref{thm:esp4-design} and \ref{thm-121FW}.
This completes the proof.
\end{proof}

\begin{example}\label{exam-4design26} 
Let $q=2^5$. Then  $\C_{(q, q+1, 4,1)}$ has parameters $[33,27,6]$. The dual  $\C_{(q, q+1, 4,1)}^\perp$ has parameters 
$[33,6,27]$ and weight distribution 
\begin{eqnarray*}
1 + 1014816z^{27} +  1268520z^{28} +  20296320z^{29} +  64609952z^{30} + \\ 
210132384z^{31} + 399584823 z^{32} + 376835008 z^{33}. 
\end{eqnarray*} 
The codewords of weight $6$ in $\C_{(q, q+1, 4,1)}$ supports a $4$-$(33,6,12)$ design, and the codewords of weight 
$27$ in $\C_{(q, q+1, 4,1)}^\perp$ support a $4$-$(33,27,14040)$ design. 
\end{example}  

In Example \ref{exam-4design26}, the code $\C_{(q, q+1, 4,1)}$ has a codeword of weight $i$ for all $i$ with $6 \leq i \leq 33$. 
Hence, the Assmus-Mattson Theorem cannot prove that the codes  in Theorem \ref{thm:odd-4designs} support $4$-designs. 
It is open if the generalised Assmus-Mattson theorem in \cite{TDX19} can prove that the codes  in Theorem \ref{thm:odd-4designs} support $4$-designs. 
It looks impossible to prove that the codes in Theorem \ref{thm:odd-4designs} support $4$-designs with the automorphism groups of the codes due to the following: 
\begin{enumerate}
\item Except the Mathieu groups M11, M12, M23, M24, the alternating group $A_n$ and the symmetric group $S_n$, no 
finite permutation groups are more than $3$-transitive \cite{BJL}. 
\item No infinite family of $4$-homogeneous permutation groups is known. 
\end{enumerate} 
It would be a very interesting problem to determine the automorphism groups of the codes in Theorem \ref{thm:odd-4designs}.

\subsection{An infinite class of linear codes supporting Steiner systems $S(3,5,4^m+1)$}

\begin{theorem}\label{thm:code-design-esp-even} 
Let $q=2^m$ with $m \geq 4$ even. Then, the incidence structure $$\left ( \cP\left (\C_{(q,q+1,4,1)} \right ), \cB_5\left (\C_{(q,q+1,4,1)} \right )  \right )$$
from the minimum weight codewords in $\C_{(q, q+1, 4,1)}$ is isomorphic to  $(U_{q+1}, \cB_{\sigma_{5,2}, q+1})$,
and   the incidence structure $$\left ( \cP\left (\C_{(q,q+1,4,1)} \right ), \cB_6 \left (\C_{(q,q+1,4,1)} \right )  \right )$$ is isomorphic to  $(U_{q+1}, \cB^{1}_{\sigma_{6,3}, q+1})$.
Moreover, the incidence structure $$\left ( \cP \left (\C_{(q,q+1,4,1)}^{\perp} \right ), \cB_{q-5} \left (\C_{(q,q+1,4,1)}^{\perp} \right )  \right )$$ is isomorphic to  
the complementary  incidence structure of $(U_{q+1}, \cB_{\sigma_{6,3}, q+1})$
\end{theorem}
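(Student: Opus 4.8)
In all three cases the plan is to exhibit the same isomorphism $\phi$ and then translate the support designs into the block sets on $U_{q+1}$ via the parity-check and trace descriptions of $\C_{(q,q+1,4,1)}$. Fix $\gamma=\beta^{-1}$ as in the proof of Theorem~\ref{thm:C-(q,q+1,4,1)}, so that $\gamma$ generates $U_{q+1}$ and, from the trace expression and the parity-check matrix $H$, the $q+1$ coordinates of $\C_{(q,q+1,4,1)}$ (and of its dual) are canonically labelled by $\gamma^0,\gamma^1,\dots,\gamma^q$. Let $\phi$ send the coordinate labelled $\gamma^i$ to $\gamma^i\in U_{q+1}$; this is the claimed point-set isomorphism, and it remains to show it matches blocks. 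Reading off (\ref{eq:C-H}): a $w$-subset $T=\{\gamma^{i_1},\dots,\gamma^{i_w}\}$ of $U_{q+1}$ is $\phi(\support(\bc))$ for some codeword $\bc$ of $\C_{(q,q+1,4,1)}$ of weight $w$ exactly when the system (\ref{eq:Mx=0}) with $u_j=\gamma^{i_j}$ has a solution with every coordinate in $\gf(q)^*$; the ``every coordinate nonzero'' clause is what distinguishes $\support(\bc)=T$ from $\support(\bc)\subsetneq T$.

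\textbf{The weight-$5$ blocks.} By Lemma~\ref{lem:sol-rank}, $M_5(x_1,\dots,x_5)^T=0$ has a nonzero $\gf(q)$-solution iff $\rank(M_5)<5$, in which case $\rank(M_5)=4$, the solution space is one-dimensional, and any nonzero solution lies in $(\gf(q)^*)^5$ --- a zero entry would give a $4$-column submatrix of rank $<4$, contradicting Lemma~\ref{lem:rank44}. So a given weight-$5$ support corresponds to a single block, and by Lemma~\ref{lem:rank54} this happens iff $\sigma_{5,2}(u_1,\dots,u_5)=0$, i.e. iff $\{u_1,\dots,u_5\}\in\cB_{\sigma_{5,2},q+1}$. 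Hence $\phi$ carries $\cB_5(\C_{(q,q+1,4,1)})$ bijectively onto $\cB_{\sigma_{5,2},q+1}$.

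\textbf{The weight-$6$ blocks and the dual.} Lemmas~\ref{lem:wt6-sp} and \ref{lem:sys-wt6} together say that $M_6(x_1,\dots,x_6)^T=0$ has a solution in $(\gf(q)^*)^6$ iff $\{u_1,\dots,u_6\}\in\cB^1_{\sigma_{6,3},q+1}$ (with one-dimensional solution space), so $\phi$ identifies $\cB_6(\C_{(q,q+1,4,1)})$ with $\cB^1_{\sigma_{6,3},q+1}$; discarding the $6$-sets in $\cB^0_{\sigma_{6,3},q+1}$, which satisfy $\sigma_{6,3}=0$ but support only lower-weight codewords on their coordinates, is the content of Lemma~\ref{lem:wt6-sp}. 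For the dual I would use the parametrisation (\ref{eq:dualC-Tr}): $\bc_{(a,b,c)}$ vanishes at the coordinate labelled $u$ precisely when $u$ is a zero of a polynomial of the shape treated in Lemma~\ref{lem:trPoly} (after the obvious relabelling of $(a,b,c)$), so $\wt(\bc_{(a,b,c)})=q+1-\#\mathrm{zero}(f)$. Lemma~\ref{lem:trPoly} gives $\#\mathrm{zero}(f)\le 6$, with equality iff $\mathrm{zero}(f)$ is a block of $\cB_{\sigma_{6,3},q+1}$, and its converse half shows every such block is realised; therefore the minimum-weight (weight $q-5$) supports of $\C_{(q,q+1,4,1)}^\perp$ are exactly the sets $U_{q+1}\setminus B$ with $B\in\cB_{\sigma_{6,3},q+1}$, and $\phi$ is an isomorphism onto the complementary incidence structure of $(U_{q+1},\cB_{\sigma_{6,3},q+1})$.

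\textbf{Expected difficulty.} Every rank and determinant identity needed is already packaged in Lemmas~\ref{lem:rank44}--\ref{lem:rank65} and \ref{lem:trPoly}, so nothing new has to be computed. The only delicate bookkeeping --- and the place where the even-$m$ situation really differs from the odd case of Theorem~\ref{thm:code-design-esp-odd} --- is keeping ``$\support(\bc)=T$'' apart from ``$\support(\bc)\subseteq T$'', i.e. forbidding zero entries in the solution vectors; for weight $6$ this is exactly why $\cB^0_{\sigma_{6,3},q+1}$ must be removed. Thus the main obstacle is organisational: assembling the cited lemmas in the correct order and being scrupulous about the coordinate labelling.
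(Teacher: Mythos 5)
Your proposal is correct and follows essentially the same route as the paper: the weight-$5$ case via Lemma \ref{lem:rank54} and the solution-space argument from the proof of Theorem \ref{thm:C-(q,q+1,4,1)}, the weight-$6$ case via Lemmas \ref{lem:sys-wt6} and \ref{lem:wt6-sp}, and the dual via Equation (\ref{eq:dualC-Tr}) and Lemma \ref{lem:trPoly}. The paper's own proof is just a terser version of the same assembly of lemmas; your write-up supplies the bookkeeping (coordinate labelling, the ``no zero entries'' argument) that the paper leaves implicit.
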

\begin{proof}
Using Lemma \ref{lem:rank54}, by a similar discussion as as in the proof of Theorem \ref{thm:C-(q,q+1,4,1)},
we can prove that the incidence structure $$\left ( \cP\left (\C_{(q,q+1,4,1)} \right ), \cB_5\left (\C_{(q,q+1,4,1)} \right )  \right )$$
isomorphic to  $(U_{q+1}, \cB_{\sigma_{5,2}, q+1})$. Employing  Lemma \ref{lem:wt6-sp}, 
we can prove that $$\left ( \cP\left (\C_{(q,q+1,4,1)} \right ), \cB_6 \left (\C_{(q,q+1,4,1)} \right )  \right )$$ is isomorphic to  $(U_{q+1}, \cB^{1}_{\sigma_{6,3}, q+1})$.
The last statement then follows from Equation (\ref{eq:dualC-Tr}) and Lemma \ref{lem:trPoly}. This completes the proof.
\end{proof}

\begin{theorem}\label{thm:even-3designs} 
Let $q=2^m$ with $m \geq 4$ even. Then, 
the minimum weight codewords in $\C_{(q, q+1, 4,1)}$ support a $3$-$(2^{m}+1, 5, 1)$ design, 
i.e., 
a Steiner system $S(3,5,2^m+1)$,
and 
the minimum weight codewords in $\C_{(q, q+1, 4,1)}^\perp$ support a $3$-$(q+1, q-5, \lambda)$ design 
with 
$$ 
\lambda=\frac{(q-4)^2}{120} \binom{q-5}{3}. 
$$ 
Furthermore, the codewords of  weight $6$  in $\C_{(q, q+1, 4,1)}$ support a $3$-$\left (q+1, 6, \frac{(q-4)(q-16)}{6} \right )$ design if $m \geq 6$.
\end{theorem}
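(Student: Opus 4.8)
The plan is to deduce Theorem~\ref{thm:even-3designs} by transporting the design structure established for the elementary symmetric polynomials in Section~3 through the code-geometry isomorphism of Theorem~\ref{thm:code-design-esp-even}, and then invoking the duality principle of Theorem~\ref{thm-121FW}. Concretely, first I would fix $q=2^m$ with $m\ge 4$ even and recall from Theorems~\ref{thm:C-(q,q+1,4,1)} and~\ref{thm:dualC} that $\C_{(q,q+1,4,1)}$ has minimum distance $5$ and that $\C_{(q,q+1,4,1)}^\perp$ has parameters $[q+1,6,q-5]$. By Theorem~\ref{thm:code-design-esp-even}, the incidence structure $\bigl(\cP(\C_{(q,q+1,4,1)}),\cB_5(\C_{(q,q+1,4,1)})\bigr)$ is isomorphic to $(U_{q+1},\cB_{\sigma_{5,2},q+1})$, which by Theorem~\ref{thm:espsteiner} is the Steiner system $S(3,5,q+1)$; since isomorphic incidence structures have identical parameters, the minimum weight codewords of $\C_{(q,q+1,4,1)}$ support $S(3,5,2^m+1)$. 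This handles the first assertion.

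Next, for the dual statement, Theorem~\ref{thm:code-design-esp-even} also says that $\bigl(\cP(\C_{(q,q+1,4,1)}^\perp),\cB_{q-5}(\C_{(q,q+1,4,1)}^\perp)\bigr)$ is isomorphic to the complementary incidence structure of $(U_{q+1},\cB_{\sigma_{6,3},q+1})$. Wait --- here I need to be careful about which ESP design is the relevant one: since $\C_{(q,q+1,4,1)}$ is \emph{not} NMDS when $m$ is even (its minimum distance is $5$, not $6=q+1-k$ with $k=q-5$), Theorem~\ref{thm-121FW} does not apply directly to the pair $(\C,\C^\perp)$. Instead I would argue that the weight-$(q-5)$ codewords of $\C^\perp$ are in bijection (up to scalar) with the weight-$6$ elements coming from $\cB_{\sigma_{6,3},q+1}$ via the explicit trace parametrization: by Lemma~\ref{lem:trPoly} and Equation~(\ref{eq:dualC-Tr}), a codeword $\bc_{(a,b,c)}$ of $\C^\perp$ has weight $q+1-6=q-5$ precisely when $\{u_1,\dots,u_6\}\in\cB_{\sigma_{6,3},q+1}$, and its support is the complement in $U_{q+1}$ of $\{u_1,\dots,u_6\}$. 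Hence $\cB_{q-5}(\C^\perp)$ is the complementary design of $(U_{q+1},\cB_{\sigma_{6,3},q+1})$. By Theorem~\ref{thm:esp3-design}, $(U_{q+1},\cB_{\sigma_{6,3},q+1})$ is a $3$-$(q+1,6,(q-4)^2/6)$ design; the complementary design of a $3$-$(v,k,\lambda)$ design is a $3$-$(v,v-k,\lambda')$ design, and I would compute $\lambda'$ either by the standard complementation formula or, more robustly, by the counting identity $\binom{v}{3}\lambda_3 = \binom{k'}{3} b$ applied to $v=q+1$, $k'=q-5$, and $b=\#\cB_{\sigma_{6,3},q+1}=\frac{(q-4)^2}{120}\binom{q+1}{3}$ (this value of $b$ is recorded in the displayed formula following Corollary~\ref{cor:B1-WT6}). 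Solving gives $\lambda = b\binom{q-5}{3}/\binom{q+1}{3} = \frac{(q-4)^2}{120}\binom{q-5}{3}$, as claimed.

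Finally, for the weight-$6$ statement under the extra hypothesis $m\ge 6$: Theorem~\ref{thm:code-design-esp-even} identifies $\bigl(\cP(\C_{(q,q+1,4,1)}),\cB_6(\C_{(q,q+1,4,1)})\bigr)$ with $(U_{q+1},\cB^1_{\sigma_{6,3},q+1})$, and Corollary~\ref{cor:B1-WT6} states that the latter is a $3$-$(q+1,6,(q-4)(q-16)/6)$ design. The role of $m\ge 6$ is simply to guarantee $q\ge 64$ so that $(q-4)(q-16)/6$ is a positive integer and the block set $\cB^1_{\sigma_{6,3},q+1}$ is nonempty --- for $m=4$, i.e.\ $q=16$, one has $q-16=0$ and the weight-$6$ codewords (if any coincide with the $\cB^1$ type) would give the degenerate parameter $0$; I would note this explicitly. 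So the third assertion follows immediately by transport of structure.

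The main obstacle I anticipate is \emph{not} any of the design arithmetic, which is routine, but rather justifying carefully that Theorem~\ref{thm:code-design-esp-even} genuinely delivers the three claimed isomorphisms --- in particular the subtlety that the weight-$6$ codewords of $\C_{(q,q+1,4,1)}$ correspond to $\cB^1_{\sigma_{6,3},q+1}$ (the "type-1" blocks) and \emph{not} to all of $\cB_{\sigma_{6,3},q+1}$, because a weight-$6$ codeword forces all six coordinates of the associated $\gf(q)$-solution to be nonzero, which by Lemmas~\ref{lem:sys-wt6} and~\ref{lem:wt6-sp} excludes exactly the $\cB^0$-type configurations. Once that correspondence is in hand, and once one is careful that complementation is being applied to the $\sigma_{6,3}$-design (giving the $(q-5)$-blocks) rather than attempting an inapplicable NMDS duality, every parameter drops out of Theorems~\ref{thm:espsteiner}, \ref{thm:esp3-design}, Corollary~\ref{cor:B1-WT6}, and the cardinality formula for $\#\cB_{\sigma_{6,3},q+1}$.
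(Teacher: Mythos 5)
Your proposal is correct and follows essentially the same route as the paper, whose proof simply cites Theorems \ref{thm:code-design-esp-even}, \ref{thm:espsteiner}, \ref{thm:esp3-design} and Corollary \ref{cor:B1-WT6}; you have merely made explicit the steps the paper leaves implicit (the complementation of the $\sigma_{6,3}$-design for the dual code, the block-count computation of $\lambda$, and the correct observation that NMDS duality via Theorem \ref{thm-121FW} is unavailable here since $d=5$ for $m$ even). No gaps.
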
 

\begin{proof}
The desired conclusion follows from Theorems \ref{thm:code-design-esp-even}, \ref{thm:espsteiner}, \ref{thm:esp3-design} and Corollary \ref{cor:B1-WT6}.
This completes the proof.
\end{proof}

There are two different constructions of an infinite family of Steiner systems $S(3, q+1, q^m+1)$ for $q$ being a prime 
power and $m \geq 2$. 
The first produces the spherical designs due to \citet{Witt38}, which is based on the action of $\PGL_2( \gf(q^m))$ on the base block 
$\gf(q) \cup \{\infty\}$. The automorphism group of the spherical design contains the group 
$\PGaL_2(\gf(q^m))$. The second construction was proposed in \cite{KeyWagner86}, and is based on affine spaces. 
The Steiner systems  $S(3, q+1, q^m+1)$ from the two constructions are not isomorphic \cite{KeyWagner86}.

When $m \in \{2,3\}$, the Steiner  system $S(3,5,4^{m}+1)$ of  Theorem \ref{thm:even-3designs} 
is isomorphic to the spherical design with the same parameters. We conjecture that they are isomorphic 
in general, but do not have a proof. The contribution  of Theorem \ref{thm:even-3designs} is a coding-theoretic 
construction of the spherical systems  $S(3,5,4^m+1)$. 

\begin{example} 
Let $q=2^4$. Then  $\C_{(q, q+1, 4,1)}$ has parameters $[17,11,5]$ and weight distribution 
\begin{eqnarray*}
1 + 1020z^5 + 224400 z^7 +  3730650z^8 + 55370700z^9 +  669519840z^{10} +  \\ 
 6378704640z^{11} +  47857084200z^{12}  +  
 276083558100z^{13}  + 1183224112800z^{14} +  \\ 
 3549668972400z^{15} + 
  6655630071165z^{16} +  5872614694500z^{17}. 
\end{eqnarray*} 
The codewords of weight $5$ in $\C_{(q, q+1, 4,1)}$ support a Steiner system $S(3,5,17)$. 
 
The dual $\C_{(q, q+1, 4,1)}^\perp$ has parameters $[17,6,11]$ and weight distribution 
\begin{eqnarray*}
1+  12240z^{11} + 35700 z^{12} + 244800z^{13} + 1203600 z^{14} + 3292560z^{15} + 
 6398715 z^{16} + 5589600 z^{17}.  
\end{eqnarray*} 
The codewords of weight $11$ in $\C_{(q, q+1, 4,1)}^\perp$ support a $3$-$(17,11,198)$ design. 
\end{example} 

This example shows that the Assmus-Mattson Theorem cannot prove that the codes $\C_{(q, q+1, 4,1)}$  
and $\C_{(q, q+1, 4,1)}^\perp$ support $3$-designs. 
It is open if the generalised Assmus-Mattson theorem in \cite{TDX19} can prove that the codes  in Theorem \ref{thm:even-3designs} support $4$-designs. 
It is also open if the automorphism groups of the codes 
can prove that the codes support $3$-designs. 

\section{Summary and concluding remarks}

This paper settled the 71-year-old open problem by presenting an infinite family of near MDS codes of length $2^{2m+1}+1$ over 
$\gf(2^{2m+1})$ holding an infinite family of $4$-$(2^{2m+1}+1, 6, 2^{2m}-4)$ designs. Hence, these codes have nice applications in combinatorics.  
It would be nice if the automorphism groups of the linear codes could be determined.    

An interesting open problem is whether there exists an infinite family of linear codes holding an infinite family of $t$-designs  
for $t \ge 5$. Another open problem is whether there is a specific linear code supporting a nontrivial $6$-design.



\end{document}